\newtheorem{theorem}{Theorem}[section]
\newtheorem{lemma}[theorem]{Lemma}
\newtheorem{corollary}[theorem]{Corollary}
\newtheorem{definition}[theorem]{Definition}
\newtheorem{proposition}[theorem]{Proposition}
\newtheorem{remark}[theorem]{Remark}
\DeclareMathOperator{\C}{\mathcal{C}}
\newcommand{\fqn}{\mathbb{F}_{q^n}}
\newcommand{\fq}{\mathbb{F}_{q}}
\newcommand{\fqt}{\mathbb{F}_{q^t}}
\newcommand{\cC}{{\mathcal C}}
\newcommand{\F}{{\mathbb F}}
\newcommand{\LL}{{\mathbb L}}
\newcommand{\KK}{{\mathbb K}}
\newcommand{\lmb}{\lambda}
\newcommand{\la}{\langle}
\newcommand{\ra}{\rangle}
\begin{document}

\title{Full weight spectrum one-orbit cyclic subspace codes}

\author{Chiara Castello, Olga Polverino\thanks{Corresponding author.}\,\,  and Ferdinando Zullo}

\maketitle   

\begin{abstract}
For a linear Hamming metric code of length $n$ over a finite field, the number of distinct weights of its codewords is at most $n$. 
The codes achieving the equality in the above bound were called full weight spectrum codes.
In this paper we will focus on the analogous class of codes within the framework of cyclic subspace codes. 
Cyclic subspace codes have garnered significant attention, particularly for their applications in random network coding to correct errors and erasures. 
We investigate one-orbit cyclic subspace codes that are \emph{full weight spectrum} in this context. Utilizing number theoretical results and combinatorial arguments, we provide a complete classification of full weight spectrum one-orbit cyclic subspace codes.
\end{abstract}
{\textbf{Keywords}: Cyclic subspace code; Full spectrum weight code; Critical pair.}\\

\noindent{\textbf{MSC2020}: 11T71; 11T99; 94B05.}
\section{Introduction}

Let $k$ be a non-negative integer with $k \leq n$, the set of all $k$-dimensional $\F_q$-subspaces of $\F_{q^n}$, viewed as an $\F_{q}$-vector space, forms a \textbf{Grassmannian space} over $\F_q$, which is denoted by $\mathcal{G}_{q}(n,k)\subseteq \mathcal{P}_{q}(n)$, where $\mathcal{P}_{q}(n)$ is the set of all the $\fq$-subspaces of $\fqn$. A \textbf{constant dimension subspace code} is a subset $\mathcal{C}$ of $\mathcal{G}_{q}(n,k)$ endowed with the metric defined as follows \[d(U,V)=2k-2\dim_{\F_q}(U \cap V),\]
where $U,V \in \mathcal{C}$. This metric is also known as \textbf{subspace metric}.
As usual, we define the \textbf{minimum distance} of $\mathcal{C}$ as
\[ d(\mathcal{C})=\min\{ d(U,V) \colon U,V \in \mathcal{C}, U\ne V \}. \]
Subspace codes have been recently used for the error correction in random
network coding, see \cite{KoetterK}. 
The first class of subspace codes studied was the one introduced in \cite{Etzion}, which is known as \textbf{cyclic subspace codes}.
A subspace code $\mathcal{C} \subseteq \mathcal{G}_q(n,k)$ is said to be \textbf{cyclic} if for every $\alpha \in \F_{q^n}^*$ and every $V \in \mathcal{C}$ then $\alpha V \in \mathcal{C}$.
If $\mathcal{C}$ coincides with $\mathrm{Orb}(S)$, for some subspace $S$ of $\fqn$, we say that $\mathcal{C}$ is a \textbf{one-orbit} cyclic subspace code and $S$ is said to be an its \textbf{representative}.

Similarly to the classical Hamming case, in \cite{heideweight} the following definition is given.

\begin{definition}\cite[Definition 2.4]{heideweight}\label{dist/weidistri}
Let $S\in\mathcal{G}_q(n,k)$ and let $\cC=\mathrm{Orb}(S)$. Define \[\omega_{2i}(\cC)=\mid \lbrace \alpha S\in\mathrm{Orb}(S)\colon \alpha\in\fqn^*, d(S,\alpha S)=2i\rbrace\mid\] for $i\in\{1,\dots,k\}$. We call $(\omega_2(\cC),\dots,\omega_{2k}(\cC))$ \textbf{distance/weight distribution} of $\cC$.
\end{definition}

Gluesing-Luerssen and Lehmann in \cite{heideweight} studied the weight distribution of cyclic subspace codes. In particular, they highlighted how the weight distribution may be used as a tool for a finer classification of cyclic orbit codes. Indeed, since 
$d(\alpha S,\beta S)=d(S,\alpha^{-1}\beta S)$ for any $\alpha,\beta \in \fqn^*$, the weight distribution provides the number of codeword pairs at a given distance.

In the classical framework of Hamming-metric codes, Delsarte in \cite{Del4par} studied the number of distinct distances for a code $\mathcal{C}$. When the code is linear this corresponds to investigate the distinct weights of the codewords in $\C$. 
Denote by $L(k,q)$ the maximum number of distinct non-zero weights a linear code of dimension $k$ over $\fq$ may have.
In \cite{Shi}, the authors proved that 
\[ L(k,q)\leq \frac{q^k-1}{q-1}. \]
As proved in \cite{Shi} and \cite{MWS}, codes with the equality in the above bound exist for any $q$ and $k$ and the codes with such a number of weights are called \textbf{maximum weight spectrum} codes. 
If we also fix the length of the code and we denote by  $L(n,k,q)$ the maximum number of distinct non-zero weights a linear code of dimension $k$ and length $n$ over $\fq$ may have. Again in \cite{Shi} it has been proved that
\[ L(n,k,q)\leq n, \]
and the parameters of the codes reaching the equality have been completely determined in \cite{1Ald}. Such codes have been named \textbf{Full Weight Spectrum} codes.
In this paper we investigate the subspace analog of full weight spectrum codes in the context of cyclic subspace codes.

As before, denote by $\mathcal{L}(n,k,q)$ the maximum number of distinct non-zero weights (in the sense of Definition \ref{dist/weidistri}) that a one-orbit cyclic subspace code can have. Note that such a number corresponds to the maximum number of possible distinct distances that a one-orbit cyclic subspace code can have. For a code $\mathcal{C}=\mathrm{Orb}(S)$ of $\mathcal{G}_q(n,k)$, the possible weights are determined by the intersection of $S$ with its cyclic shifts $\alpha S$. Indeed, we have that
\[ 2 \leq d(S,\alpha S)\leq 2k, \]
for any $\alpha \in \fqn$ such that $S\ne \alpha S$, and since such a value is always even, then 
\[ \mathcal{L}(n,k,q)\leq k. \]
Similarly to the Hamming metric case, the one-orbit cyclic subspace codes having exactly $k$ non-zero distinct weights are named \textbf{full weight spectrum (FWS)} codes.

In this paper, first we will show a classification result for one-orbit cyclic  subspace codes with minimum distance equals to two, for which we will be able to give information on the weight distribution.
These results will yield a classification of FWS one-orbit cyclic subspace codes, precisely we will prove the following main theorem.

\begin{theorem}
\label{thm:maintheorem}
    Let $\C$ be a one-orbit cyclic orbit code in $\mathcal{G}_q(n,k)$. Then $\C$ is a full weight spectrum code if and only if $\C=\mathrm{Orb}(S)$, where $S$ 
    is one of the following
    \begin{itemize}
        \item [(1)]$S=\langle 1,\lmb,\ldots,\lmb^{k-1}\rangle_{\fq}$ for some $\lmb \in \fqn \setminus \fq$, where
        \[ k\leq \begin{cases}
            \frac{[\fq(\lambda)\colon\fq]+1}{2} & \mbox{if} \,\,\, \dim_{\fq}(\fq(\lambda))<n,\\
            \frac{n}{2} & \mbox{if} \,\,\, \dim_{\fq}(\fq(\lmb))=n,
        \end{cases} \]
        \item [(2)] $S=\langle 1,\lmb,\ldots,\lmb^{l-1}\rangle_{\F_{q^2}}\oplus \lmb^l\fq$ for some $\lmb\in\fqn \setminus \F_{q^2}$, where $k=2l+1$, $n$ is even  and 
        $l< \frac{[\F_{q^2}(\lambda)\colon \F_{q^2}]}2$.
    \end{itemize}
\end{theorem}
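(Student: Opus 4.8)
The plan is to first reduce the full weight spectrum property to a statement about the minimum distance, and then to a purely structural (critical pair) statement. Since an FWS code realizes \emph{every} even value $2,4,\dots,2k$, it realizes in particular the distance $2$, so $\C$ has minimum distance $2$: there is $\alpha\in\fqn^*$ with $\alpha S\ne S$ and $\dim_{\fq}(S\cap\alpha S)=k-1$. First I would rephrase this as a product condition. Setting $A=\la 1,\alpha\ra_{\fq}$ (a $2$-dimensional space, since $\alpha\notin\fq$) we have $S+\alpha S=AS$, hence $\dim_{\fq}(AS)=2k-(k-1)=k+1$. As $AS\supsetneq S$ one always has $\dim_{\fq}(AS)\ge k+1=\dim_{\fq}A+\dim_{\fq}S-1$, so this value is minimal and $(A,S)$ attains with equality the linear Cauchy--Davenport/Kneser bound, i.e. it is a \emph{critical pair}.

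The backbone of the proof is then the classification of critical pairs having a $2$-dimensional factor --- this is exactly the ``minimum distance two'' classification announced earlier in the paper, a linear analogue of Vosper's theorem. Using the freedom to replace $S$ by $s^{-1}S$ within its orbit, I would normalize $1\in S$. The classification should then produce precisely two normal forms. In the generic case $A=\la 1,\lmb\ra$ is not a field and $S=\la 1,\lmb,\dots,\lmb^{k-1}\ra_{\fq}$ is a geometric progression of ratio $\lmb\in\fqn\setminus\fq$, giving family (1). In the exceptional case $A$ is a field, forcing $A=\F_{q^2}$; then $\dim_{\fq}(\F_{q^2}S)=k+1$ is even, so $k$ is odd, $n$ is even, and $S$ is an $\fq$-hyperplane of an $\F_{q^2}$-subspace, which after normalization is $\la 1,\lmb,\dots,\lmb^{l-1}\ra_{\F_{q^2}}\oplus\lmb^l\fq$ with $k=2l+1$, giving family (2). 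This dichotomy is what accounts for the parity and subfield conditions in the statement.

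With a normal form fixed, the core of the argument is to compute the set $D(S)=\{\dim_{\fq}(S\cap\alpha S)\colon \alpha\in\fqn^*,\ \alpha S\ne S\}$ and to test when $D(S)=\{0,1,\dots,k-1\}$. For family (1) put $\fqt=\fq(\lmb)$ with $t=[\fq(\lmb):\fq]$, so $S\subseteq\fqt$. Shifting by powers $\alpha=\lmb^{j}$ gives the ``staircase'' $\lmb^{j}S=\la\lmb^{j},\dots,\lmb^{j+k-1}\ra$, whence $\dim_{\fq}(S\cap\lmb^{j}S)=k-j$ as long as $1,\lmb,\dots,\lmb^{j+k-1}$ are $\fq$-independent, i.e. for $0\le j\le t-k$; this realizes every dimension in the band $\{2k-t,\dots,k-1\}$. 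The decisive elementary lemma is that $\fqt\cap\alpha\fqt=\{0\}$ whenever $\alpha\notin\fqt$ (if $x=\alpha y$ with $x,y\in\fqt^{*}$ then $\alpha=x/y\in\fqt$), so every shift with $\alpha\notin\fqt$ gives $\dim_{\fq}(S\cap\alpha S)=0$, while every shift with $\alpha\in\fqt$ keeps $\alpha S\subseteq\fqt$ and forces $\dim_{\fq}(S\cap\alpha S)\ge 2k-t$. Hence $D(S)=\{0\}\cup\{2k-t,\dots,k-1\}$, which equals $\{0,\dots,k-1\}$ exactly when $2k-t\le 1$, that is $k\le\frac{t+1}2$. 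When $\lmb$ generates $\fqn$ the subfield lemma is vacuous, no shift lies outside $\fqt=\fqn$, and the condition reduces to the trivial dimension bound $2k\le n$, i.e. $k\le\frac n2$; this reproduces both sub-cases of (1). Family (2) is treated with the same two ingredients carried out over $\F_{q^2}$ --- an $\F_{q^2}$-staircase by powers of $\lmb$ together with the intersection lemma $\F_{q^2}(\lmb)\cap\alpha\,\F_{q^2}(\lmb)=\{0\}$ for $\alpha\notin\F_{q^2}(\lmb)$ --- while carefully tracking the single extra line $\lmb^l\fq$, and this yields the no-gap condition $l<\frac{[\F_{q^2}(\lmb):\F_{q^2}]}2$.

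I expect the main obstacle to be the critical-pair dichotomy of the second paragraph: proving the linear analogue of Vosper's theorem for a $2$-dimensional factor and ruling out every other configuration is the structurally hardest step, and it is where the number-theoretic input is genuinely needed. A secondary difficulty is the bookkeeping in family (2): because $S$ is not $\F_{q^2}$-linear, tracking both the $\fq$- and $\F_{q^2}$-dimensions through the extra line while locating the realized intersection dimensions is delicate, and it is precisely this accounting that makes the inequality $l<\frac{[\F_{q^2}(\lmb):\F_{q^2}]}2$ strict rather than non-strict. One should also check, in the proper-subfield regime, that a shift $\alpha\notin\fqt$ (needed to realize the top distance $2k$) exists inside $\fqn$, which holds because $t\mid n$ and $t<n$ force $t\le n/2$, hence $2k\le t+1\le n$.
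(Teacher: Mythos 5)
Your reduction of the FWS property to minimum distance two and then to a critical pair $(S,\la 1,\alpha\ra_{\fq})$ is exactly the paper's starting point (Theorem \ref{thm:class}), and your staircase computation for family (1) recovers the content of Theorem \ref{thm:polbasiscomplete} more cheaply. (One small caveat there: when $\fq(\lmb)=\fqn$, realizing intersection dimension $0$ is not ``the trivial dimension bound $2k\le n$'' --- you must still exhibit $\mu$ with $S\cap\mu S=\{0\}$, which the paper gets from Lemma \ref{Lem:lemmamu}: every $\mu$ with nonzero intersection is a ratio $p(\lmb)/q(\lmb)$ with $\max\{\deg p,\deg q\}\le k-1$, and there are only $q^{2k-1}<q^n$ such values; this is fixable but not free.) The genuine gap is in your second paragraph: the linear Vosper classification does not produce your two normal forms. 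In the non-generic branch of Theorem \ref{thm:classcritpairs2dim} (the case $t\le k-1$, $t=[\fq(\alpha):\fq]$) the conclusion is only $S=\overline{S}\oplus b\la 1,\alpha,\ldots,\alpha^{m-1}\ra_{\fq}$ with $\overline{S}$ an \emph{arbitrary} $\fqt$-subspace, $k=t\ell+m$, and \emph{any} $0<m<t$. Your dividing line ``$A$ is a field, forcing $A=\F_{q^2}$'' captures only the sub-case $t=2$, $m=1$; and even there, $\overline{S}$ is a priori an arbitrary $\F_{q^2}$-subspace, not $\la 1,\lmb,\ldots,\lmb^{l-1}\ra_{\F_{q^2}}$, and $b$ is a priori arbitrary, not $\lmb^{l}$. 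No normalization inside the orbit produces these normal forms; ruling out everything else is where the actual work of the paper lies, and it is absent from your proposal.

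Concretely, configurations such as $(t,m)=(3,2)$ with $H(\la S\ra_{\F_{q^3}})=\F_{q^3}$, or $(t,m)=(2,1)$ with $\overline{S}$ itself of tower type $\overline{S}_1\oplus b_1\F_{q^2}$, do have minimum distance two, so your argument must exclude them by locating a zero in the weight distribution --- something your sketch never does. The paper handles this in two stages: Theorem \ref{thm:maintheoremjump} exhibits explicit zeros ($\omega_{2m+2}=0$ when $m<t-1$ and $H(Y)=\fqt$; further zeros when $m>\frac{t+1}{2}$ or $\fqt\subset H(Y)$; and the hard case $(t,m)=(3,2)$ via Lemma \ref{lem:case2.4}); then, for the sole surviving case $(t,m)=(2,1)$, $H(\la S\ra_{\F_{q^2}})=\F_{q^2}$, Theorem \ref{thm:casopiccoloFWS} runs a recursive descent: FWS of $\mathrm{Orb}(S)$ forces FWS of $\mathrm{Orb}(\overline{S})$ over $\F_{q^2}$ (Corollary \ref{cor:SisFWSthenbarSisFWSq2}), the classification is re-applied over $\F_{q^2},\F_{q^4},\F_{q^8},\ldots$, Proposition \ref{prop:pushup1} pins $b$ to $\lmb^{l}$ (up to shift and scalar) once $\overline{S}$ has a polynomial basis, and Proposition \ref{prop:pushup2} kills every tower branch by showing $\omega_4=0$ there. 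Your sufficiency sketch for family (2) also glosses over the mechanism that actually realizes both parities of $\dim_{\fq}(S\cap\mu S)$, namely the dichotomy $\beta_r\in\fq$ versus $\beta_r\in\F_{q^2}\setminus\fq$ in Proposition \ref{prop3}. In short: your setup and family-(1) analysis align with the paper, but the necessity direction --- the heart of the theorem --- is asserted rather than proved.
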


The proof of this result relies on some steps. First, using the classification of critical pairs of the Cauchy-Davenport inequality for extension fields, we obtain a classification result for one-orbit cyclic subspace codes having minimum distance two. Since all FWS codes have all the possible weights,  the above classification result can be used as a starting point for the proof of Theorem \ref{thm:maintheorem}. 
Indeed, the one-orbit cyclic subspace codes having minimum distance two can be divided into two families of codes: for the first one, using some geometric arguments, we can characterize those codes that are FWS codes. For the latter family we need a deep analysis of their intersection properties and we prove that, apart from the family (2), all the other codes of this family have at least one zero in their weight distribution, and so they cannot be FWS codes.

\medskip

\textbf{Structure of the paper.} In Section \ref{sec:Equiv} we recall the notion of equivalence among cyclic subspace codes and we study some invariants.
Section \ref{sec:auxresults} is dedicated to some technical results and to recall properties of the dual in extension fields. In Section \ref{sec:startFWS}, we start the study of FWS cyclic subspace codes, we first point out some bounds on the parameters of an FWS code and then we prove that the families (1) and (2) described in Theorem \ref{thm:maintheorem} are FWS codes. For the Family (1) we completely determine the weight distribution.
Section \ref{sec:dist2} is devoted to the classification of one-orbit cyclic subspace codes with minimum distance two as a consequence of the classification of critical pairs of the linear analog of the Cauchy-Davenport inequality. In Section \ref{sec:proofnoFWScode} we provide the proof of Theorem \ref{thm:maintheorem}. To this aim, we study the weight distribution of the codes having minimum distance two and we show that, apart from those in (2) of Theorem \ref{thm:maintheorem}, there is always a zero in the weight distribution. The last section provides some consequences of our results for linear sets and rank-metric codes.

\section{Equivalence and invariants}\label{sec:Equiv}

The study of the equivalence for subspace codes was initiated by Trautmann in \cite{Trautmann2} and the case of cyclic subspace codes has been investigated in \cite{Heideequiv} by Gluesing-Luerssen and Lehmann. In particular, in the latter the authors focused on the case of cyclic orbit codes, which are by definition orbits of an $\fq$-subspace of $\fq^n$ under the action of a Singer subgroup $G$ of $\mathrm{GL}(n,q)$, that is a cyclic subgroup of order $q^n-1$ of the group of invertible matrices of order $n$ over $\fq$. 
By identifying $\fq^n$ with the field extension $\fqn$ as an $\fq$-vector space, we can identify the action of $\mathrm{GL}(n,q)$ on the set of $\fq$-subspaces of $\fq^n$ with the action of $\mathrm{GL}_n(q)$ on the set of $\fq$-subspaces of $\fqn$, where $\mathrm{GL}_n(q)$ is the group of $\fq$-vector space automorphisms of $\fqn$. In this way, the subgroup of $\mathrm{GL}_n(q)$ given by the multiplicative maps $x\mapsto ax$ for any $a\in\fqn^*$ is isomorphic to $\fqn^*$ and so it turns out to be a Singer subgroup of $\mathrm{GL}_n(q)$. In this case, the orbits take the following explicit form\[
\mathrm{Orb}_{\fqn^*}(S)=\lbrace \alpha S \mid \alpha\in\fqn^*\rbrace,
\]
where $S$ is an $\fq$-subspace of $\fqn$ and $\alpha S=\{\alpha \cdot s \,:\, s\in S\}$ is said to be a {\bf cyclic shift} of $S$. 
Since any Singer subgroup of $\mathrm{GL}_n(q)$ is conjugate to $\fqn^*$ (see \cite[II.7, pp.187]{singerconjugate}), we will only consider orbit codes under the action of $G=\fqn^*$ and we will denote by $\mathrm{Orb}(S):=\mathrm{Orb}_{\fqn^*}(S)$. Note that the normalizer $\mathrm{N}_{\mathrm{GL}_n(q)}(\fqn^*)$ of $\fqn^*$ in $\mathrm{GL}_n(q)$ is isomorphic to $\mathrm{Gal}(\fqn |\fq)\rtimes \fqn^*$ (see \cite[Theorem 2.4]{Heideequiv}).   \\

By \cite[Definition 3.5 and Theorem 2.4]{Heideequiv}, we can give the following definitions.

\begin{definition}
\label{def:frobiso}
Let $\mathcal{C}_1, \mathcal{C}_2\subseteq\mathcal{G}_q(n,k)$. Then $\mathcal{C}_1$ and $\mathcal{C}_2$ are called \textbf{(linearly) isometric} if there exists an isomorphism $\psi\in\mathrm{GL}_n(q)$ such that $\psi(\mathcal{C}_1)=\mathcal{C}_2$, where
$$\psi(\mathcal{C}_1)=\lbrace \psi(V)\colon V\in\mathcal{C}_1\rbrace.$$ In this case $\psi$ is called a \textbf{(linear) isometry} between $\mathcal{C}_1$ and $\mathcal{C}_2$. In the special case, where $\mathcal{C}_1=\mathrm{Orb}(S_1)$, $\mathcal{C}_2=\mathrm{Orb}(S_2)$ and $\psi(\mathcal{C}_1)=\mathcal{C}_2$ for some $\psi\in \mathrm{N}_{\mathrm{GL}_n(q)}(\fqn^*)$, we call the cyclic orbit codes $\mathrm{Orb}(S_1)$ and $\mathrm{Orb}(S_2)$ \textbf{Frobenius-isometric} and $\psi$ a \textbf{Frobenius isometry}. Also, if $\mathcal{C} \subseteq\mathcal{G}_q(n,k)$, the \textbf{automorphism group} of $\cC$ is the group of linear isometries that fix  $\cC$, that is $Aut(\cC):=\{\psi\in\mathrm{GL}_n(q) \, : \; \psi(\cC)=\cC\}$.
\end{definition}

In the next we will introduce some integers associated with an $\fq$-subspace of $\fqn$ that are invariant under the action of  $\mathrm{N}_{\mathrm{GL}_n(q)}(\fqn^*)$.
\begin{definition} \label{def:parameters}
For any divisor $t$ of $n$ and for any $\fq$-subspace $S$ of $\fqn$, we denote by 
\[
\delta_t(S):=\dim_{\fqt}(\langle S\rangle_{\F_{q^t}}),
\]
\[
h_t(S):=\max\lbrace \dim_{\fqt}(\Bar{S})\colon \Bar{S}\text{ is an }\fq\text{-subspace of } S \text{ and }\Bar{S} \text{ is an } \fqt\text{-subspace of } \fqn\rbrace
\]
and
\[
m(S)=\min\lbrace t\colon t\,| \, n \, \,\mbox{and} \, \, \delta_t(S)=1   \rbrace.
\]
Moreover, if $m=m(S)$, then $\F_{q^{m}}$ is the smallest subfield of $\fqn$ containing  a cyclic shift of $S$. If $m(S)=n$, then we will say that the subspace $S$ is   \textbf{generic}  in $\fqn$.
\end{definition}

Gluesing-Luerseen and Lehmann in \cite{Heideequiv} study the automorphisms group of one-orbit cyclic subspace codes and completely classify one-orbit cyclic subspace codes generated by generic subspaces. 

\begin{theorem}\cite[Theorem 6.2(a)]{Heideequiv}\label{thm:classisomlinear}
Let $\cC=\mathrm{Orb}(S)$ and $\cC'=\mathrm{Orb}(S')$ be  one-orbit cyclic subspace codes in $\mathcal{G}_q(n,k)$ such that $S'$ is generic. Then $\cC$ and $\cC'$ are linearly isometric if and only if they are Frobenius isometric.
\end{theorem}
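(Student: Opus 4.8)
The backward implication is immediate, since $\mathrm{N}_{\mathrm{GL}_n(q)}(\fqn^*)\subseteq \mathrm{GL}_n(q)$ and hence every Frobenius isometry is in particular a linear isometry. So the whole content lies in the forward direction, and the plan is to \emph{promote} an arbitrary linear isometry to one lying in the normalizer by correcting it with a suitable element of $\mathrm{Aut}(\cC')$. Suppose then that $\psi\in\mathrm{GL}_n(q)$ satisfies $\psi(\cC)=\cC'$. Conjugation $\phi\mapsto \psi\phi\psi^{-1}$ carries $\mathrm{Aut}(\cC)$ isomorphically onto $\mathrm{Aut}(\cC')$; since $\cC=\mathrm{Orb}(S)$ is $\fqn^*$-invariant we have $\fqn^*\subseteq\mathrm{Aut}(\cC)$, so both $G_1:=\fqn^*$ and $G_2:=\psi\fqn^*\psi^{-1}$ are Singer subgroups of $\mathrm{GL}_n(q)$ contained in $\mathrm{Aut}(\cC')$. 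The key reduction is that it suffices to conjugate $G_1$ onto $G_2$ \emph{inside} $\mathrm{Aut}(\cC')$: if $G_2=\phi G_1\phi^{-1}$ for some $\phi\in\mathrm{Aut}(\cC')$, then $\eta:=\phi^{-1}\psi$ normalizes $\fqn^*$ and satisfies $\eta(\cC)=\phi^{-1}(\cC')=\cC'$, so $\eta$ is the desired Frobenius isometry.

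To produce such a $\phi$ I would argue with Sylow theory relative to a primitive prime divisor. Let $p$ be a primitive prime divisor of $q^n-1$, that is, a prime dividing $q^n-1$ but no $q^i-1$ with $i<n$; such a $p$ exists by Zsygmondy's theorem outside a short list of exceptional pairs $(n,q)$. If $p^a\|q^n-1$, then the $p$-part of $|\mathrm{GL}_n(q)|$ is exactly $p^a$, because $p$ is coprime to $q$ and, among the factors of $\prod_{i=1}^n(q^i-1)$, divides only $q^n-1$. Hence the cyclic Sylow $p$-subgroup $P_i$ of $G_i\cong\fqn^*$ is already a Sylow $p$-subgroup of $\mathrm{GL}_n(q)$, and a fortiori of the intermediate group $\mathrm{Aut}(\cC')$. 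Moreover, a generator $x$ of $P_i$ has order divisible by $p$, hence lies in no proper subfield of $\fqn$ (again by primitivity of $p$) and therefore acts irreducibly; consequently $C_{\mathrm{GL}_n(q)}(P_i)=(\fq[x])^*=G_i$, so each Singer subgroup is recovered as the centralizer of its Sylow $p$-subgroup.

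With these facts the conclusion follows quickly. By Sylow's theorem applied inside $\mathrm{Aut}(\cC')$ there is $\phi\in\mathrm{Aut}(\cC')$ with $P_2=\phi P_1\phi^{-1}$, and taking centralizers in $\mathrm{GL}_n(q)$ gives $G_2=C_{\mathrm{GL}_n(q)}(P_2)=\phi\,C_{\mathrm{GL}_n(q)}(P_1)\,\phi^{-1}=\phi G_1\phi^{-1}$, which is exactly the conjugacy required by the reduction of the first paragraph. A point to verify along the way is that $G_2$ is genuinely a Singer subgroup, so that the self-centralizing property applies; this holds because $G_2$ is a $\mathrm{GL}_n(q)$-conjugate of $\fqn^*$.

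The main obstacle is the finite list of Zsygmondy-exceptional pairs $(n,q)$, for which no primitive prime divisor is available and the centralizer computation $C_{\mathrm{GL}_n(q)}(P_i)=G_i$ may fail; these have to be treated separately. I expect it is precisely in handling these exceptions that the genericity hypothesis on $S'$ becomes essential, since genericity should pin down $\mathrm{Aut}(\cC')$ tightly enough (using the invariants $\delta_t$, $h_t$ and $m$ of Definition \ref{def:parameters}) to force all of its Singer subgroups to remain conjugate within it even when the clean Sylow argument is unavailable.
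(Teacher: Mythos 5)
First, a point of calibration: the paper does not prove this statement at all --- it imports it verbatim from \cite[Theorem 6.2(a)]{Heideequiv} --- so your attempt can only be compared against the source's argument. Your backward direction and the reduction are fine, and the core of your forward direction is genuinely correct: when $q^n-1$ has a primitive prime divisor $p$ with $p^a\,\|\,q^n-1$, the $p$-part of $|\mathrm{GL}_n(q)|$ is indeed exactly $p^a$, the Sylow $p$-subgroups $P_1\leq \fqn^*$ and $P_2\leq \psi\fqn^*\psi^{-1}$ are Sylow in $\mathrm{GL}_n(q)$ and hence in $\mathrm{Aut}(\cC')$, a generator of $P_i$ lies in no proper subfield and so is non-derogatory with irreducible minimal polynomial of degree $n$, giving $C_{\mathrm{GL}_n(q)}(P_i)=G_i$; Sylow conjugacy inside $\mathrm{Aut}(\cC')$ then produces the correcting $\phi$, and $\eta=\phi^{-1}\psi$ is a Frobenius isometry. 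Notably, this argument never uses genericity of $S'$, so in the non-exceptional cases you prove something stronger than the quoted theorem.

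The genuine gap is exactly where you flag it: the Zsigmondy-exceptional pairs are not handled, and hoping that genericity ``pins down $\mathrm{Aut}(\cC')$'' is not a proof. Concretely, the exceptions are $n=1$, the pairs $(q,2)$ with $q+1$ a power of $2$, and $(q,n)=(2,6)$. The first two are degenerate (for $n=2$ and $k=1$ the orbit of any line is the entire Grassmannian, so the identity is already a Frobenius isometry, and $k=2$ is trivial), but $(2,6)$ is a real failure of your method: $\prod_{i=1}^{6}(2^i-1)=1\cdot 3\cdot 7\cdot 15\cdot 31\cdot 63$ has $3$-part $3^4$ and $7$-part $7^2$, while $\F_{2^6}^*$ has order $63=3^2\cdot 7$, so for no prime is the Singer group's Sylow subgroup a Sylow subgroup of $\mathrm{GL}_6(2)$, and both the Sylow-conjugacy step and the centralizer identification collapse. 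The cited source closes this uniformly, and differently, by using genericity globally rather than as a patch: for generic $S'$ one has $\fqn^*\unlhd \mathrm{Aut}(\cC')\leq \mathrm{N}_{\mathrm{GL}_n(q)}(\fqn^*)\cong \mathrm{Gal}(\fqn|\fq)\rtimes\fqn^*$ (this is the $m=n$ instance of the containment \eqref{eq:subgroups} that the present paper quotes from \cite[Theorem 4.4]{Heideequiv}); then, for any $g$ in the normalizer whose Galois component has order $d\geq 2$, the power $g^d$ is multiplication by an element of the fixed field $\F_{q^{n/d}}$, whence $\mathrm{ord}(g)\leq d(q^{n/d}-1)<q^n-1$, so the cyclic group $\psi\fqn^*\psi^{-1}\subseteq \mathrm{Aut}(\cC')$ of order $q^n-1$ must equal $\fqn^*$ and $\psi$ itself is Frobenius --- no primitive prime divisor and no case analysis needed. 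To complete your write-up you must either treat $(2,6)$ ad hoc or graft in this normalizer argument; as it stands, the exceptional case is a missing step, though your instinct that genericity is what rescues it is correct in spirit.
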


Clearly, $\delta_t(\cdot)$, $h_t(\cdot)$ and $m(\cdot)$ are invariant under the action of  $\fqn^*$ 
and  it makes sense to give the following definition (see \cite[Definition 4.5]{Heideequiv}).

\begin{definition} \label{def: parmeters for code}
Let $\cC$ be a one-orbit cyclic subspace code in $\mathcal{G}_q(n,k)$ and let $S\in \cC$. For any divisor $t$ divisor of $n$, let denote 
\[
\delta_t(\cC):=\delta_t(S), \,\,\, h_t(\cC):=h_t(S) \,\, \, \mbox{and} \, \, \, m(\cC):=m(S).
\]

\end{definition}

\begin{remark}\label{rem:invariants}
Note that the integers  $\delta_t(\cdot)$, $h_t(\cdot)$ and $m(\cdot)$ are invariant also under the action of $\mathrm{Gal}(\fqn |\fq)$, hence by Theorem \ref{thm:classisomlinear}, if $\cC_1$ and $\cC_2$ are linearly isometric one-orbit cyclic subspaces codes defined by generic subspaces, then 
\[\delta_t(\cC_1)=\delta_t(\cC_2), \,\, h_t(\cC_1)=h_t(\cC_2),\,\, m(\cC_1)=m(\cC_2) \]
for any divisor $t$ of $n$.
 
\end{remark}
 In the next corollary, as a consequence of \cite[Theorems 2.4 and 4.4]{Heideequiv}, we prove that the parameter $m(\cC)$ for the one-orbit code $\cC=\mathrm{Orb}(S)$ is invariant under the action of $\mathrm{GL}_n(q)$ without any assumption on the subspace $S$. 

\begin{corollary}
  Let $\cC_1$ and $\cC_2$ be one-orbit cyclic subspace codes in $\mathcal{G}_q(n,k)$. If $\cC_1$ and $\cC_2$ are linearly isometric then $m(\cC_1)=m(\cC_2)$. 
\end{corollary}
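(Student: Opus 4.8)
The plan is to show that the parameter $m(\cC)$ is invariant under the full group $\mathrm{GL}_n(q)$ of linear isometries, even though the definitions of invariance recalled so far (Remark~\ref{rem:invariants}) only guarantee invariance under the normalizer $\mathrm{N}_{\mathrm{GL}_n(q)}(\fqn^*)$ and, via Theorem~\ref{thm:classisomlinear}, under linear isometries between codes defined by \emph{generic} subspaces. The key point is that $\F_{q^{m}}$, with $m=m(\cC)$, is the smallest subfield of $\fqn$ containing a cyclic shift of a representative $S$. So $\cC=\mathrm{Orb}(S)$ lives inside the Grassmannian of $\fq$-subspaces of $\F_{q^{m}}$, and $S$ becomes a \emph{generic} subspace when the ambient field is taken to be $\F_{q^{m}}$ rather than $\fqn$. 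Thus the idea is to reduce the general case to the generic case, where Theorem~\ref{thm:classisomlinear} (and hence Remark~\ref{rem:invariants}) applies.

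First I would set $m_i=m(\cC_i)$ for $i=1,2$ and fix representatives $S_i$ with $\cC_i=\mathrm{Orb}(S_i)$; by replacing each $S_i$ with a suitable cyclic shift I may assume $S_i\subseteq \F_{q^{m_i}}$ and that $\F_{q^{m_i}}$ is the smallest field with this property, so $S_i$ is generic inside $\F_{q^{m_i}}$. Next, I would invoke the cited \cite[Theorems 2.4 and 4.4]{Heideequiv}: Theorem~2.4 describes $\mathrm{N}_{\mathrm{GL}_n(q)}(\fqn^*)\cong \mathrm{Gal}(\fqn|\fq)\rtimes\fqn^*$, while Theorem~4.4 should relate a general linear isometry between one-orbit codes to a Frobenius isometry composed with the inclusion of the minimal field, so that any linear isometry $\psi$ with $\psi(\cC_1)=\cC_2$ forces $\F_{q^{m_1}}$ and $\F_{q^{m_2}}$ to be carried onto one another. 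The crucial deduction is that a linear isometry must preserve the minimal field generated by a codeword up to an automorphism of $\fqn$, and since field automorphisms send $\F_{q^{m_1}}$ to $\F_{q^{m_1}}$, the minimal containing fields must have equal degree, i.e.\ $m_1=m_2$.

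Concretely, the argument runs as follows. Suppose $\psi\in\mathrm{GL}_n(q)$ satisfies $\psi(\cC_1)=\cC_2$. Since $S_1$ is generic in $\F_{q^{m_1}}$, I would restrict attention to the codes $\mathrm{Orb}(S_1)$ and $\mathrm{Orb}(\psi(S_1))=\cC_2$ regarded inside the appropriate ambient field and apply Theorem~\ref{thm:classisomlinear}: the two codes, being linearly isometric and defined by a generic subspace, are in fact Frobenius isometric, so $\psi$ can be taken in $\mathrm{N}_{\mathrm{GL}_n(q)}(\fqn^*)$ up to an element of $\fqn^*$ (a cyclic shift, which does not change the orbit). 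By Remark~\ref{rem:invariants}, the parameter $m(\cdot)$ is invariant under $\mathrm{Gal}(\fqn|\fq)\rtimes\fqn^*$, whence $m(\cC_1)=m(\cC_2)$.

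The main obstacle I anticipate is the reduction step that passes from an arbitrary (possibly non-generic) representative to a generic one inside the minimal field, and in particular justifying that a general linear isometry $\psi$ — which a priori need not normalize $\fqn^*$ — can be assumed, for the purpose of computing $m$, to act as a Frobenius isometry. This is exactly the content one must extract from \cite[Theorems 2.4 and 4.4]{Heideequiv}: that the minimal field $\F_{q^{m(\cC)}}$ is intrinsically attached to the code and that any isometry between one-orbit codes descends to an isometry between the corresponding generic codes on the minimal fields. Once this descent is set up, the rest is a direct appeal to Theorem~\ref{thm:classisomlinear} and Remark~\ref{rem:invariants}; the care needed is bookkeeping the identifications of the ambient fields and ensuring that the Frobenius isometry respects the field degrees so that $m_1=m_2$ follows.
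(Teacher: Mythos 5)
Your reduction to the generic case has a genuine gap, and it sits exactly at the step you flag as ``the main obstacle''. Theorem \ref{thm:classisomlinear} requires the representative to be generic \emph{in $\fqn$}, i.e.\ $m(\cC)=n$; that is precisely the case not covered by the corollary, whose whole point is to extend the invariance of $m(\cdot)$ beyond generic subspaces. Your attempt to restore genericity by shrinking the ambient field does not work: although a cyclic shift of $S_1$ lies in $\F_{q^{m_1}}$ and is generic there, the code $\cC_1=\mathrm{Orb}(S_1)$ is an orbit under the full Singer group $\fqn^*$, so its codewords $\alpha S_1$, $\alpha\in\fqn^*$, are scattered throughout $\fqn$ and $\cC_1$ is not a code in a Grassmannian over $\F_{q^{m_1}}$. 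Replacing $\cC_1$ by $\mathrm{Orb}_{\F_{q^{m_1}}^*}(S_1)$ changes the code, and a linear isometry $\psi\in\mathrm{GL}_n(q)$ with $\psi(\cC_1)=\cC_2$ need not carry $\F_{q^{m_1}}$ onto $\F_{q^{m_2}}$ nor induce any isometry between these smaller orbits. The ``descent'' you hope to extract from \cite[Theorems 2.4 and 4.4]{Heideequiv} --- that an arbitrary linear isometry respects the minimal fields --- is not what those theorems say, and it is essentially equivalent to the statement being proved; as written, your argument is circular.

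The paper's actual proof goes a completely different way and avoids this issue: it uses only the \emph{sizes} of automorphism groups, which are manifestly invariant under linear isometry (isometric codes have conjugate automorphism groups in $\mathrm{GL}_n(q)$). By \cite[Theorem 4.4]{Heideequiv} one has $\mathrm{GL}_{n/m_i}(q^{m_i}) \unlhd \mathrm{Aut}(\cC_i) \leq \mathrm{N}_{\mathrm{GL}_n(q)}(\mathrm{GL}_{n/m_i}(q^{m_i}))$, and by \cite[Theorem 2.4(d)]{Heideequiv} the normalizer has order $m_i\,|\mathrm{GL}_{n/m_i}(q^{m_i})|$, so $|\mathrm{Aut}(\cC_i)|$ is sandwiched between $|\mathrm{GL}_{n/m_i}(q^{m_i})|$ and $m_i\,|\mathrm{GL}_{n/m_i}(q^{m_i})|$. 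An elementary estimate (with $t_i=n/m_i$, the sandwich gives $q^{(n-1)t_1}<|\mathrm{Aut}(\cC_1)|=|\mathrm{Aut}(\cC_2)|<q^{n(t_2+1)}$, which forces $t_1=t_2+1$ and then a contradiction) shows these bounds pin down $m_i$ uniquely. If you want to salvage your outline, you need an invariant of the abstract isometry class, such as $|\mathrm{Aut}(\cC)|$, that determines $m$ --- not an appeal to the generic classification theorem, whose hypothesis fails in exactly the situation at hand.
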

\begin{proof}
    Let $m_i:=m(\cC_i)$ for $i=1,2$. By \cite[Theorem 4.4]{Heideequiv}
    \begin{equation}\label{eq:subgroups}
        \mathrm{GL}_{n/m_i}(q^{m_i}) \unlhd \mathrm{Aut}(\cC_i) \leq \mathrm{N}_{\mathrm{GL}_{n}(q)}(\mathrm{GL}_{n/m_i}(q^{m_i})),
    \end{equation}
    for $i\in \{1,2\}$, where $\mathrm{GL}_{n/m_i}(q^{m_i})=\{\psi \in \mathrm{GL}_{n}(q) \,:\, \psi \, \mbox{is} \,\,  \mathbb{F}_{q^{m_i}} \,\, \mbox{-linear}\}$. 
    Since by \cite[Theorem 2.4(d)]{Heideequiv}
    $$\mathrm{N}_{\mathrm{GL}_{n}(q)}(\mathrm{GL}_{n/m_i}(q^{m_i})) \cong  \mathrm{Gal}(\mathbb{F}_{q^{m_i}} | \fq) \rtimes  \mathrm{GL}_{n/m_i}(q^{m_i}),$$
    from \eqref{eq:subgroups} we get 
    \begin{equation} \label{eq: order aut group}
        |\mathrm{GL}_{n/m_i}(q^{m_i})| \leq  |\mathrm{Aut}(\cC_i)| \leq m_i |\mathrm{GL}_{n/m_i}(q^{m_i})|,
 \end{equation}
    for $i\in \{1,2\}$. If $\cC_1$ and $\cC_2$ are linearly isometric, then $\mathrm{Aut}(\cC_1)$ and $\mathrm{Aut}(\cC_2)$ are isomorphic groups and hence they have the same size. Now, by contradiction assume that $m_1\neq m_2$, let $m_2>m_1$ and  $m_i=n/t_i$ for $i\in\{1,2\}$, from (\ref{eq: order aut group})   we have
  \begin{equation} \label{eq: order aut group2}
 \prod_{i=0}^{t_1-1} (q^n-q^{m_1i})       \leq  |\mathrm{Aut}(\cC_1)|=|\mathrm{Aut}(\cC_2)| \leq m_2\prod_{i=0}^{t_2-1} (q^n-q^{m_2i}), 
  \end{equation}
  and hence, since $q^n-q^{m_1i}>q^{n-1}$ for any $i \in \{0,\ldots,t-1\}$ and $m_2\leq n\leq q^n$ for any positive integer $n$, we get  
\begin{equation} \label{eq: order aut group3}
 q^{(n-1)t_1}       <  |\mathrm{Aut}(\cC_1)|=|\mathrm{Aut}(\cC_2)| < q^{n(t_2+1)}. 
  \end{equation}
By (\ref{eq: order aut group3}) it follows that $t_1=t_2+1$ and by \eqref{eq: order aut group2}, since $m_2>m_1$, we easily get
    \begin{equation} \label{eq: order aut group4}
 q^n-q^{t_2m_1}    < m_2,  
  \end{equation}
  and, since $m_2 <q^{m_2}$, hence
  \begin{equation*} 
 q^{(t_2-1)m_2}    \leq q^{t_2m_1-m_2}    
  \end{equation*}
   which gives $m_2\leq m_1$, a contradiction. So $m_1=m_2$.
\end{proof}

\section{Auxiliary results}\label{sec:auxresults}

In order to prove Theorem \ref{thm:maintheorem}, pur main result, we need to recall some definitions and properties, and to prove some preliminary results that will be used later.\\
As first, we recall that, given an $\fq$-subspace $S$ of $\fqn$, the {\bf stabilizer of $S$} is 
\[
H(S)=\lbrace x\in \fqn^*\colon xS= S\rbrace \cup \{0\}.
\]
Note that $H(S)$ is a subfield of $\fqn$ and $S$ is linear over $H(S)$. This implies that also $S\cap\alpha S$ is linear over $H(S)$ for any $\alpha\in\fqn$. The following lemma gives us information on the weight distribution of a one-orbit cyclic subspace code in relation with the stabilizer of one of its representatives.

\begin{lemma}
\label{lem:w2i>0}
Let $\C=\mathrm{Orb}(S)\subseteq \mathcal{G}_q(n,k)$ and let $(\omega_2(\C),\dots,\omega_{2k}(\C))$ be its weight distribution. If $\omega_{2i}(\C)>0 $ for some $i\in\{1,\dots,k\}$, then $k\equiv i \pmod{[H(S)\colon \fq]}$.
\end{lemma}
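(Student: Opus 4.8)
The plan is to reduce the claimed congruence to a single divisibility statement about the $\fq$-dimension of an intersection $S \cap \alpha S$. First I would unwind the hypothesis: by Definition \ref{dist/weidistri}, $\omega_{2i}(\C) > 0$ means that there exists some $\alpha \in \fqn^*$ with $d(S, \alpha S) = 2i$, and since $d(S, \alpha S) = 2k - 2\dim_{\fq}(S \cap \alpha S)$, this is equivalent to
\[ \dim_{\fq}(S \cap \alpha S) = k - i. \]
Thus it suffices to show that $[H(S)\colon\fq]$ divides $\dim_{\fq}(S \cap \alpha S)$, since $[H(S)\colon\fq] \mid (k-i)$ is precisely the assertion $k \equiv i \pmod{[H(S)\colon\fq]}$.

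The core step is to observe that $S \cap \alpha S$ is an $H(S)$-subspace of $\fqn$. By the discussion preceding the lemma, $H(S)$ is a subfield of $\fqn$ and $S$ is $H(S)$-linear, i.e.\ $H(S) \cdot S = S$. I would then verify that the cyclic shift $\alpha S$ is likewise $H(S)$-linear: for every $h \in H(S)$ one has $h(\alpha S) = \alpha(hS) = \alpha S$, where the first equality uses the commutativity of multiplication in the field $\fqn$ and the second uses $hS = S$. Consequently $\alpha S$ is an $H(S)$-subspace, and the intersection $S \cap \alpha S$ of two $H(S)$-subspaces is again an $H(S)$-subspace.

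Finally, any $H(S)$-subspace of $\fqn$ has $\fq$-dimension equal to $[H(S)\colon\fq]$ times its $H(S)$-dimension, so $[H(S)\colon\fq] \mid \dim_{\fq}(S \cap \alpha S) = k - i$, which yields the claim. There is no serious obstacle here; the only point requiring genuine care is the verification that $\alpha S$ inherits $H(S)$-linearity, which is exactly where the field (rather than merely additive-group) structure of $H(S)$, together with the commutativity of $\fqn$, is used. I note in passing that the same argument applied to $S$ itself shows $[H(S)\colon\fq] \mid k$, so in fact both $k$ and $i$ are multiples of $[H(S)\colon\fq]$; but only their difference is needed for the statement as phrased.
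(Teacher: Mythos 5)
Your proposal is correct and matches the paper's proof essentially verbatim: both reduce $\omega_{2i}(\C)>0$ to $\dim_{\fq}(S\cap\alpha S)=k-i$ and conclude via the $H(S)$-linearity of $S\cap\alpha S$, a fact the paper records in the discussion immediately preceding the lemma. Your explicit check that $\alpha S$ inherits $H(S)$-linearity (via commutativity of $\fqn$) is exactly the justification the paper leaves implicit, so there is nothing to flag.
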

\begin{proof}
If $\omega_{2i}(\C)>0$ then there exists $\alpha\in\fqn\setminus\fq$ such that $d(S,\alpha S)=2k-2\cdot \dim_{\fq}(S\cap\alpha S)=2i$, i.e. $\dim_{\fq}(S\cap\alpha S)=k-i$. Since $S\cap\alpha S$ is an $H(S)$-subspace of $\fqn$, we have the desired congruence.
\end{proof}

To prove our main result, we will investigate some properties on the weight distribution of the codes of the families given in Theorem \ref{thm:maintheorem}. To this aim,  we will need the orthogonal complement of a subspace defined by the trace function.

\begin{remark}\label{rem:dual}
The trace function $\mathrm{Tr}_{\fqn/\fq}\colon a\in\fqn \mapsto \displaystyle\sum_{i=0}^{n-1} a^{q^i}\in\fq$  
 is an  $\fq$-linear map of $\fqn$ and it also defines a nondegenerate, symmetric, bilinear form as follows:
\[
(a,b)\in\fqn\times\fqn \mapsto \mathrm{Tr}_{\fqn/\fq} 
(ab)\in \fq.
\]
Therefore, for any subset $S\subseteq\fqn $ we may define the orthogonal complement 
 (or dual subspace)  $S^{\perp}$ of $S$ in the following way:
\[
S^{\perp}=\lbrace a\in\fqn \colon \mathrm{Tr}_{\fqn/\fq}(ab)=0 \text{ for any b}\in S\rbrace.
\]
Note that if $\fqt$ is a subfield of $\fqn$ and $W$ is an $\fqt$-subspace of $\fqn$, then $W^{\perp}$ is an $\fqt$-subspace of $\fqn$ as well, i.e. the orthogonal complement $\perp$ preserves the  field of linearity of the  $\fq$-subspaces.  
\end{remark}

As we will see in the next section, the $\fq$-subspaces of $\fqn$ that admit a basis defined by consecutive powers of the same element,  define cyclic subspace codes  for which it is possible to determine the weight distribution. Also, in some cases there is no zero in the weight distribution. In what follows, we will say that an $\fq$-subspace $S$ of $\fqn$ of dimension $k$ admits a {\bf polynomial basis}, if there exist $\rho, \lambda  \in \fqn ^*$ such that $S=\rho \langle 1, \lambda,.....\lambda^{k-1} \rangle$. In \cite[Proposition 2.9]{NPSZminsize} it has been proved that the orthogonal complement of a subspace which has a polynomial basis, has a polynomial basis as well.

\begin{proposition}\cite[Proposition 2.9]{NPSZminsize}
\label{prop:dualbasisbasepol}
Let $\lambda\in\fqn$ such that $\fq(\lambda)=\fqn$. Let $f(x)=a_0+a_1x+\dots+a_{n-1}x^{n-1}+x^n$ be the minimal polynomial of $\lambda$ over $\fq$. Let $W=\langle 1, \lambda, \dots, \lambda^{l-1}\rangle_{\fq}$ with $l\in\lbrace 1, \dots, n-1\rbrace$. Then $W^{\perp}=\delta^{-1}\langle 1, \lambda, \dots, \lambda^{n-l-1}\rangle_{\fq}$, where $\delta=f'(\lambda)$.
\end{proposition}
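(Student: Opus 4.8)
The plan is to reduce the statement to the classical description of the trace-dual of a power basis, and then to settle it by a single dimension count plus one trace identity. Since $\fq(\lambda)=\fqn$, the minimal polynomial $f$ has degree $n$ and is separable, so $\{1,\lambda,\dots,\lambda^{n-1}\}$ is an $\fq$-basis of $\fqn$ and $W=\langle 1,\lambda,\dots,\lambda^{l-1}\rangle_{\fq}$ has dimension $l$. Because the trace form is nondegenerate (recalled in Remark~\ref{rem:dual}), we have $\dim_{\fq}W^{\perp}=n-l$. On the other hand $\delta^{-1}\langle 1,\lambda,\dots,\lambda^{n-l-1}\rangle_{\fq}$ also has dimension $n-l$, since multiplication by $\delta^{-1}$ is an $\fq$-isomorphism and $1,\lambda,\dots,\lambda^{n-l-1}$ are $\fq$-independent. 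Hence it suffices to prove the single inclusion
\[
\delta^{-1}\langle 1,\lambda,\dots,\lambda^{n-l-1}\rangle_{\fq}\subseteq W^{\perp},
\]
and equality will then follow automatically.

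For this inclusion it is enough to check that $\mathrm{Tr}_{\fqn/\fq}(\delta^{-1}\lambda^i\cdot\lambda^j)=0$ for all $0\le i\le n-l-1$ and $0\le j\le l-1$. Since in this range $0\le i+j\le n-2$, the inclusion reduces to the vanishing half of the trace identity
\[
\mathrm{Tr}_{\fqn/\fq}\!\left(\frac{\lambda^m}{f'(\lambda)}\right)=\begin{cases}0 & 0\le m\le n-2,\\ 1 & m=n-1.\end{cases}
\]
This is the one genuinely non-routine step, and it is where I would concentrate the work. I would prove it as follows. Let $\lambda_1=\lambda,\lambda_2,\dots,\lambda_n$ be the conjugates of $\lambda$, i.e.\ the $n$ distinct roots $\lambda,\lambda^q,\dots,\lambda^{q^{n-1}}$ of $f$. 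Because $f$ and $f'$ have coefficients in $\fq$, applying the Frobenius termwise gives $\mathrm{Tr}_{\fqn/\fq}(\lambda^m/f'(\lambda))=\sum_{k=1}^{n}\lambda_k^m/f'(\lambda_k)$. I would then invoke Lagrange interpolation at the nodes $\lambda_1,\dots,\lambda_n$: for every polynomial $g$ with $\deg g<n$ one has $g(x)=\sum_{k=1}^{n}g(\lambda_k)\,f(x)/\big((x-\lambda_k)f'(\lambda_k)\big)$, where each $f(x)/(x-\lambda_k)$ is monic of degree $n-1$. Taking $g(x)=x^m$ with $0\le m\le n-1$ and comparing the coefficients of $x^{n-1}$ on both sides yields $\sum_{k=1}^{n}\lambda_k^m/f'(\lambda_k)=\delta_{m,n-1}$, which is exactly the identity above.

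With the identity in hand the conclusion is immediate: for $0\le i\le n-l-1$ and $0\le j\le l-1$ we get $\mathrm{Tr}_{\fqn/\fq}(\delta^{-1}\lambda^{i+j})=0$, so $\delta^{-1}\lambda^i\in W^{\perp}$ for each admissible $i$, proving the inclusion; together with the equality of dimensions this gives $W^{\perp}=\delta^{-1}\langle 1,\lambda,\dots,\lambda^{n-l-1}\rangle_{\fq}$. I expect the interpolation/trace identity to be the only real obstacle; the rest is bookkeeping. As an alternative route worth recording, one may instead quote the standard fact that the trace-dual basis of $\{1,\lambda,\dots,\lambda^{n-1}\}$ is $\{b_0\delta^{-1},\dots,b_{n-1}\delta^{-1}\}$, where $f(x)/(x-\lambda)=\sum_i b_i x^i$, and then note that the $b_i$ form a monic, strictly-increasing-degree (hence triangular) system in the powers of $\lambda$, so that $\langle b_l,\dots,b_{n-1}\rangle_{\fq}=\langle 1,\lambda,\dots,\lambda^{n-l-1}\rangle_{\fq}$; the dimension-count version above is, however, shorter and more self-contained.
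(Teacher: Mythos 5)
Your proposal is correct, and it does not follow the route of the source: the paper itself gives no proof of this statement, quoting it from \cite{NPSZminsize}, where the argument goes through the classical dual-basis fact (from Lidl--Niederreiter) that the trace-dual of the polynomial basis $\{1,\lambda,\dots,\lambda^{n-1}\}$ is $\{b_0\delta^{-1},\dots,b_{n-1}\delta^{-1}\}$ with $f(x)/(x-\lambda)=\sum_i b_i x^i$, followed by the triangularity observation that $b_i$ is monic of degree $n-1-i$ in $\lambda$ --- exactly the ``alternative route'' you record in your last sentences. Your primary argument --- a one-sided inclusion plus a dimension count, with the inclusion reduced to the vanishing half of Euler's identity $\sum_{k}\lambda_k^m/f'(\lambda_k)=\delta_{m,n-1}$ for $0\le m\le n-1$, proved by comparing coefficients of $x^{n-1}$ in the Lagrange interpolation formula --- is genuinely different and checks out at every step: separability is automatic over finite fields so $\delta=f'(\lambda)\neq 0$ and the $n$ conjugates $\lambda^{q^k}$ are distinct; the computation $\mathrm{Tr}_{\fqn/\fq}(\lambda^m/f'(\lambda))=\sum_k \lambda_k^m/f'(\lambda_k)$ is legitimate precisely because $f'\in\fq[x]$; and the degree bookkeeping $i+j\le (n-l-1)+(l-1)=n-2$ is exactly what the vanishing case of the identity requires, so nondegeneracy of the trace form turns the inclusion into equality. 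As for what each approach buys: the dual-basis route of \cite{NPSZminsize} yields strictly more information (an explicit trace-dual basis, hence the orthogonal complement of any span of consecutive powers, with the spanning dual vectors identified), whereas your version is shorter and self-contained, replacing the citation of the dual-basis lemma by a two-line interpolation argument that isolates the single nontrivial input.
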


The following proposition gives some properties of $\fq$-subspaces of $\fqn$ contained in an $\fqt$-subspace of $\fqn$.

\begin{proposition}
    \label{prop:hyperplane}
    Let $t\mid n$ and let $U$ be an $\fq$-subspace of $\fqn$ contained in an $\fqt$-subspace $W$ of $\fqn$ of dimension $l$. Suppose that $\dim_{\fq}(U)=lt-1=\dim_{\fq}(W)-1$. Then there exists an $\fqt$-subspace $T$ of $\fqn$ such that 
    \[
    \dim_{\fqt}(T)=l-1 \,\,\text{ and } \, \,T\subseteq U.
    \]
    Also, if $\xi\fqt\subseteq W$, then either $\xi\fqt\subseteq U$ or $\dim_{\fq}(U\cap\xi\fqt)= t-1$.
\end{proposition}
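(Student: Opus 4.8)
The plan is to realize $U$ as the kernel of a single $\fq$-linear functional on $W$ and then extract from that functional the largest $\fqt$-subspace contained in $U$.

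First I would use the codimension hypothesis. Since $\dim_{\fq}(U)=\dim_{\fq}(W)-1$, the subspace $U$ is an $\fq$-hyperplane of $W$, so there is a nonzero $\fq$-linear map $\phi\colon W\to\fq$ with $\ker\phi=U$. I would then set
\[
T=\lbrace x\in\fqn \colon ax\in U \text{ for all } a\in\fqt\rbrace .
\]
Taking $a=1$ shows $T\subseteq U$, and if $x\in T$ and $b\in\fqt$ then $a(bx)=(ab)x\in U$ for every $a\in\fqt$, so $bx\in T$; hence $T$ is an $\fqt$-subspace of $\fqn$ (indeed the largest $\fqt$-subspace contained in $U$).

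To compute its dimension I would introduce the $\fq$-linear map $\Phi\colon W\to\mathrm{Hom}_{\fq}(\fqt,\fq)$ sending $x$ to the functional $a\mapsto\phi(ax)$; this is well defined because $W$ is $\fqt$-linear, and by construction $\ker\Phi=T$. As $\dim_{\fq}\mathrm{Hom}_{\fq}(\fqt,\fq)=t$, rank--nullity gives $\dim_{\fq}(T)\ge\dim_{\fq}(W)-t=(l-1)t$, while $T\subseteq U$ forces $\dim_{\fq}(T)\le lt-1$. Since $T$ is $\fqt$-linear, $\dim_{\fq}(T)$ is a multiple of $t$, and the only multiple of $t$ lying in the interval $[(l-1)t,\,lt-1]$ is $(l-1)t$; therefore $\dim_{\fqt}(T)=l-1$, which is the first assertion. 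For the dichotomy, given $\xi\fqt\subseteq W$ I would restrict $\phi$ to the $t$-dimensional $\fq$-subspace $\xi\fqt$: its kernel is exactly $U\cap\xi\fqt$, so if $\phi$ vanishes on $\xi\fqt$ then $\xi\fqt\subseteq U$, and otherwise $\phi|_{\xi\fqt}$ surjects onto $\fq$ with kernel of $\fq$-codimension one, giving $\dim_{\fq}(U\cap\xi\fqt)=t-1$.

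The whole argument is a rank--nullity computation, so I do not expect a serious obstacle; the one point that requires care is the squeeze in the dimension count, where it is the divisibility of $\dim_{\fq}(T)$ by $t$ together with the codimension-one hypothesis on $U$ that makes the interval $[(l-1)t,\,lt-1]$ contain a unique multiple of $t$ and thereby pins the dimension down exactly. Had $U$ had larger $\fq$-codimension in $W$, this interval would contain several multiples of $t$ and the conclusion would fail, which is why the hypothesis $\dim_{\fq}(U)=\dim_{\fq}(W)-1$ is used in an essential way.
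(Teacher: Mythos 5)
Your proof is correct: the functional $\phi$ exists because $U$ has $\fq$-codimension one in $W$, your $T$ is indeed the maximal $\fqt$-subspace of $U$ and equals $\ker\Phi$, the rank--nullity bound $\dim_{\fq}(T)\ge (l-1)t$ combined with $T\subseteq U$ and $t\mid \dim_{\fq}(T)$ pins the dimension down, and the restriction argument for the dichotomy is exactly right. The route, however, differs from the paper's. The paper works on the dual side: it passes to orthogonal complements with respect to the trace form $(a,b)\mapsto \mathrm{Tr}_{\fqn/\fq}(ab)$, writes $U^{\perp}=W^{\perp}\oplus\langle v\rangle_{\fq}$, enlarges this to the $\fqt$-subspace $V=W^{\perp}\oplus\langle v\rangle_{\fqt}$ (using that $\perp$ preserves the field of linearity, Remark \ref{rem:dual}), and takes $T=V^{\perp}$, which has $\fqt$-dimension $l-1$ and lies in $U$; the dichotomy is then dispatched in one line from $U$ being a hyperplane of $W$. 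The two constructions are dual to one another -- writing $\phi(x)=\mathrm{Tr}_{\fqn/\fq}(vx)$, your map $\Phi$ is precisely the dual of the enlargement step, and your $T$ coincides with the paper's $V^{\perp}$ -- but what your version buys is independence from the nondegenerate trace form: it works verbatim for an arbitrary field extension $\F\subseteq\KK\subseteq\LL$ with $[\KK:\F]=t$ finite, and it produces $T$ intrinsically as the largest $\fqt$-subspace of $U$ rather than through a dual subspace. The paper's version, in exchange, reuses machinery it has already set up for later duality arguments (Proposition \ref{prop:dualbasisbasepol} and its applications), so the choice there is one of economy within the paper rather than necessity. Your closing observation that codimension one is what makes the interval $[(l-1)t,\,lt-1]$ contain a unique multiple of $t$ is a nice explicit accounting of why the hypothesis is essential, which the paper leaves implicit.
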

\begin{proof}
    Let $n=st$ and note that $\dim_{\fq}(U^{\perp})=(s-l)t+1$ and $\dim_{\fq}(W^{\perp})=(s-l)t$. Since $U\subseteq W$, it results $W^{\perp}\subseteq U^{\perp}$ i.e.
    \[
    U^{\perp}=W^{\perp} \oplus \langle v\rangle_{\fq} \subset W^{\perp} \oplus \langle v\rangle_{\fqt} 
    \]
    for some $v\in U^{\perp}\setminus W^{\perp}$. Let $V=W^{\perp} \oplus \langle v\rangle_{\fqt}$,  then by Remark \ref{rem:dual}, $V$ is an $\fqt$-subspace of dimension $s-l+1$ over $\fqt$ and hence $V^{\perp}=T$ is an $\fqt$-subspace of dimension $l-1$ and  clearly, $T\subseteq U$. Finally, the last part of the assertion follows from the fact that $U$ is an hyperplane of $W$ seen as an $\fq$-subspace of $\fqn$.
\end{proof}

\section{Existence and constructions of FWS codes}\label{sec:startFWS}

This section analyzes the existence of FWS codes and we will deal with two families of such codes. First we will derive some necessary conditions on the parameters of an FWS code, then we will present a first family of FWS codes for each possible choice of  admissible parameters. Finally, we will present another non-equivalent family of FWS codes, which relies on some algebraic properties of the extension of $\fq$-subspaces of $\fqn$  over the quadratic extension $\F_{q^2}$ when $n$ is even.

\subsection{Bounds on the parameters of an FWS code}

In this section, we will see that not all potential parameter values permit the existence of 
 an FWS code.  In fact, the following proposition will establish necessary conditions for the existence of such codes in $\mathcal{G}_q(n,k)$. These conditions will involve the parameters $n$, $k$, and $m({\mathcal C})$ of the code.

\begin{proposition}
Let $\cC=\mathrm{Orb}(S)$ with $S\in\mathcal{G}_q(n,k)$ and let $m:=m(\cC)$. If $m<n$ and $\omega_{2k-2}(\C)\neq 0$ then $k\leq \frac{m+1}{2}$; if $m=n$ and $\omega_{2k}(\C)\neq 0$ then $k\leq n/2$. 
\end{proposition}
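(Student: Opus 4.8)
The plan is to reduce the whole statement to a single Grassmann-type dimension count, carried out inside the smallest subfield containing a cyclic shift of $S$. I would treat the case $m<n$ first. Since $\delta_m(\cC)=\delta_m(S)=1$, the $\F_{q^m}$-span $\langle S\rangle_{\F_{q^m}}$ is a one-dimensional $\F_{q^m}$-subspace of $\fqn$, say $\gamma\F_{q^m}$ with $\gamma\in\fqn^*$; hence $\gamma^{-1}S\subseteq\F_{q^m}$. Because $\mathrm{Orb}(\gamma^{-1}S)=\cC$ and the weight distribution $(\omega_2(\cC),\dots,\omega_{2k}(\cC))$ is an invariant of the orbit, I may replace $S$ by $\gamma^{-1}S$ and assume from the outset that $S\subseteq\F_{q^m}$, so that $k=\dim_{\fq}(S)\le m$.

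The hypothesis $\omega_{2k-2}(\cC)\neq0$ then supplies some $\alpha\in\fqn^*$ with $d(S,\alpha S)=2k-2$, equivalently $\dim_{\fq}(S\cap\alpha S)=1$. The one step I expect to require an actual argument is to show that such an $\alpha$ must lie in $\F_{q^m}$. My idea here is that $S\subseteq\F_{q^m}$ and $\alpha S\subseteq\alpha\F_{q^m}$, so that $\F_{q^m}$ and $\alpha\F_{q^m}$ are two one-dimensional $\F_{q^m}$-subspaces of $\fqn$; if $\alpha\notin\F_{q^m}$ they are distinct and hence meet only in $0$, which would force $S\cap\alpha S=\{0\}$ and contradict $\dim_{\fq}(S\cap\alpha S)=1$. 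Thus $\alpha\in\F_{q^m}$, and in particular $\alpha S\subseteq\F_{q^m}$ as well.

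Once $\alpha\in\F_{q^m}$ is secured, the conclusion follows from the dimension formula: since multiplication by $\alpha$ is an $\fq$-linear isomorphism, $\dim_{\fq}(\alpha S)=k$, and
\[
\dim_{\fq}(S+\alpha S)=\dim_{\fq}(S)+\dim_{\fq}(\alpha S)-\dim_{\fq}(S\cap\alpha S)=2k-1.
\]
As $S+\alpha S\subseteq\F_{q^m}$, this gives $2k-1\le m$, i.e. $k\le\frac{m+1}{2}$.

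For the case $m=n$ no reduction is needed: the hypothesis $\omega_{2k}(\cC)\neq0$ yields $\alpha\in\fqn^*$ with $d(S,\alpha S)=2k$, that is $S\cap\alpha S=\{0\}$, and the same computation gives $\dim_{\fq}(S+\alpha S)=2k\le n$, whence $k\le n/2$. I expect the main (and essentially the only) obstacle to be the middle step, namely verifying that the shift realizing the near-maximal weight cannot leave $\F_{q^m}$; the two ends of the argument are just the reduction via orbit-invariance and a routine application of the Grassmann identity.
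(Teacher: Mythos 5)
Your proposal is correct and takes essentially the same approach as the paper: reduce via a cyclic shift (orbit invariance) to $S\subseteq\F_{q^m}$, show that the shift $\alpha$ realizing weight $2k-2$ must lie in $\F_{q^m}$, and finish with the Grassmann identity $\dim_{\fq}(S+\alpha S)=2k-1\leq m$ (and the direct count $2k\leq n$ when $m=n$). The only immaterial difference is the middle step: the paper picks nonzero $s,s'\in S$ with $s=\alpha s'$ and concludes $\alpha=ss'^{-1}\in\F_{q^m}$, while you argue that the distinct $\F_{q^m}$-lines $\F_{q^m}$ and $\alpha\F_{q^m}$ would meet trivially if $\alpha\notin\F_{q^m}$; both are equally valid.
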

\begin{proof}
Let $m<n$ and $\omega_{2k-2}(\C)\neq 0$. Note that, up to a cyclic shift, we may suppose that $S$ is an $\F_{q}$-subspace of the subfield $\F_{q^m}$ of $\fqn$. Since $\omega_{2k-2}(\C)\neq 0$, then  
there exists $\alpha\in\fqn^*$ such that $d(S,\alpha S)=2k-2$. This means that $\dim_{\fq}(S \cap \alpha S)=1$ and so
 there exist $s,s'\in S\setminus\lbrace 0 \rbrace$ such that $s=\alpha s'$, and hence 
$\alpha=ss'^{-1}\in\F_{q^m}.$
Thus we have
\[
m \geq \dim_{\fq}(S+\alpha S)=2k-\dim_{\fq}(S\cap \alpha S)=2k-1
\]
and so $k\leq \frac{m+1}{2}$.
Let $m=n$ and  $\omega_{2k}(\C)\neq 0$. Then there exists $\alpha \in \fqn$ such that $d(S,\alpha S)=2k$, i.e.
\[
n \geq \dim_{\fq}(S+\alpha S)=2k-\dim_{\fq}(S\cap \alpha S)=2k
\]
 and so
$k\leq n/2$.

\end{proof}

As an immediate consequence we get the following corollary.

\begin{corollary}\label{cor:necessaryconditionsFWS}
Let $\C=\mathrm{Orb}(S)\subseteq \mathcal{G}_q(n,k)$ be an FWS code and let $m=m(\C)$. If $m<n$, then $k\leq \frac{m+1}{2}$; if $m=n$, then $k\leq n/2$. 
\end{corollary}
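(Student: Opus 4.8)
The statement is an immediate consequence of the Proposition just established, so the plan is simply to unpack the definition of an FWS code and feed the conclusion into that Proposition.

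First I would recall that, as observed in the introduction, for a one-orbit cyclic code $\C=\mathrm{Orb}(S)\subseteq\mathcal{G}_q(n,k)$ every nonzero distance $d(S,\alpha S)$ is even and lies in the range $2\leq d(S,\alpha S)\leq 2k$; hence the only possible nonzero weights are $2,4,\ldots,2k$, that is, exactly $k$ values. By definition, $\C$ being FWS means that all $k$ of these weights actually occur, i.e.\ $\omega_{2i}(\C)>0$ for every $i\in\{1,\ldots,k\}$. In particular the two largest admissible weights are attained: $\omega_{2k}(\C)\neq 0$, and (when $k\geq 2$) $\omega_{2k-2}(\C)\neq 0$.

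Next I would split into the two cases of the Proposition. If $m=m(\C)<n$, I invoke the hypothesis $\omega_{2k-2}(\C)\neq 0$ and apply the first part of the Proposition to conclude $k\leq\frac{m+1}{2}$; if $m=n$, I invoke $\omega_{2k}(\C)\neq 0$ and apply the second part to conclude $k\leq n/2$. No further computation is needed, since the Proposition has already carried out the dimension-counting argument (the point being that, up to a cyclic shift, $S\subseteq\F_{q^m}$, so an $\alpha$ realizing weight $2k-2$ must satisfy $\alpha\in\F_{q^m}$, forcing $S+\alpha S\subseteq\F_{q^m}$ and hence $2k-1\leq m$).

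There is essentially no obstacle here; the only point meriting a word of care is the degenerate case $k=1$ in the regime $m<n$, where the weight $2k-2=0$ is not among those indexed by Definition \ref{dist/weidistri}. That case is vacuous, however, since the asserted bound $k\leq\frac{m+1}{2}$ holds trivially whenever $m\geq 1$, so the corollary is true without restriction.
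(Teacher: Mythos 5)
Your proposal is correct and matches the paper's proof exactly: the paper derives this corollary as an immediate consequence of the preceding proposition, using precisely your observation that an FWS code has $\omega_{2k}(\mathcal{C})\neq 0$ and (for $k\geq 2$) $\omega_{2k-2}(\mathcal{C})\neq 0$, to which the two cases of the proposition apply. Your remark on the degenerate case $k=1$, where the bound $k\leq\frac{m+1}{2}$ holds trivially, is a careful touch the paper leaves implicit.
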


In the next section we will show that these bounds are tight.

\subsection{A family of FWS code}

Now we exhibit some examples of FWS codes obtained by subspaces that have a polynomial basis, and so belonging to Family (1) of Theorem \ref{thm:maintheorem}. To this aim, we start from a technical lemma, which follows the arguments of \cite[Lemma 2.5]{JVdV}.\\
We will adopt the following notation: ${\fq}[x]_{\leq k-1}$ denotes the set of the polynomials in the variable $x$ with coefficients in $\fq$ and degree at most $k-1$.

\begin{lemma}
    \label{Lem:lemmamu}
Let $\fqn=\fq (\lambda)$ and let $S=\langle 1, \lmb, \dots, \lmb^{k-1}\rangle_{\fq}\in\mathcal{G}_q(n,k)$  with $1<k<n-1$. \\
If $k\leq n/2$, then for any $j\in\lbrace 1,\dots,k\rbrace$, 
\[
\dim_{\fq}(S\cap\mu S)=j \, \text{ if and only if } \, \mu=\frac{p(\lmb)}{q(\lmb)}
\]
with $p(x),q(x)\in\fq[x]_{\leq k-1} $ such that $\gcd(p(x),q(x))=1$ and $k-j=\max\lbrace  \deg(p(x)),\deg(q(x))\rbrace$.\\
In particular, 
\[
S\cap\mu S=\lbrace p(\lmb)a(\lmb)\colon a(x)\in\fq[x]_{\leq j-1} \rbrace=\mu \lbrace  q(\lmb)a(\lmb)\colon a(x)\in\fq[x]_{\leq j-1} \rbrace.
\]
If $n/2< k < n-1$, then for any $i\in\lbrace 2k+1-n,\dots, k\rbrace$ 
\[
\dim_{\fq}(S\cap\mu S)=i \text{ if and only if } \mu=\frac{p(\lmb)}{q(\lmb)}
\]
with $p(x),q(x)\in\fq[x]_{\leq k-1} $ such that $\gcd(p(x),q(x))=1$ and $k-i=\max\lbrace \deg(p(x)),\deg(q(x)\rbrace$.
\end{lemma}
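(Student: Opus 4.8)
The plan is to translate the subspace intersection problem into a statement about polynomials and then control dimensions via degree counting. Since $\fqn=\fq(\lambda)$, the powers $1,\lambda,\dots,\lambda^{n-1}$ form an $\fq$-basis, so I identify $S=\langle 1,\lambda,\dots,\lambda^{k-1}\rangle_{\fq}$ with the image of $\fq[x]_{\leq k-1}$ under evaluation at $\lambda$, which is injective precisely because $\deg(\text{minimal polynomial of }\lambda)=n>k-1$. Writing $\mu=p(\lambda)/q(\lambda)$ with $p,q\in\fq[x]_{\leq k-1}$ and $\gcd(p,q)=1$, an element $s=a(\lambda)\in S$ lies in $\mu S$ iff $\mu^{-1}a(\lambda)=q(\lambda)a(\lambda)/p(\lambda)\in S$, i.e. iff there is $b(x)\in\fq[x]_{\leq k-1}$ with $q(\lambda)a(\lambda)=p(\lambda)b(\lambda)$.

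The key step is to pass from this evaluation identity to a genuine polynomial identity. Here is where the degree hypotheses enter: if $\deg(qa-\text{(lift of }pb))<n$ the evaluation being zero forces the polynomial itself to be zero. When $k\leq n/2$ both $q a$ and $p b$ have degree at most $2(k-1)\leq n-2<n$, so $q(x)a(x)=p(x)b(x)$ holds identically in $\fq[x]$; I would then use $\gcd(p,q)=1$ to deduce $p(x)\mid a(x)$, writing $a(x)=p(x)c(x)$, and correspondingly $b(x)=q(x)c(x)$. The constraint $\deg a\leq k-1$ and $\deg p=\max\{\deg p,\deg q\}=k-j$ forces $\deg c\leq j-1$, which shows $S\cap\mu S=\{p(\lambda)c(\lambda):c\in\fq[x]_{\leq j-1}\}$, a space of dimension exactly $j$ (injectivity of evaluation again, since $\deg(pc)\leq k-1<n$). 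This simultaneously establishes the dimension formula and the explicit description of the intersection, and running the argument in reverse (any $\mu$ of the stated form yields an intersection of dimension $j$) gives the ``if and only if''.

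For the range $n/2<k<n-1$ the clean polynomial identity no longer holds, since products can reach degree $2(k-1)\geq n-1$, so I must work modulo the minimal polynomial $f(x)$ of $\lambda$. The relation $q(\lambda)a(\lambda)=p(\lambda)b(\lambda)$ only gives $qa\equiv pb\pmod{f}$, and the dimension count becomes an inclusion–exclusion on the spaces $p\fq[x]_{\leq k-1}$ and $q\fq[x]_{\leq k-1}$ reduced mod $f$; the lower bound $i\geq 2k+1-n$ is exactly the range where the generic intersection dimension $2k-n$ (forced by $\dim(S+\mu S)\leq n$) is met, and one shows that $\max\{\deg p,\deg q\}=k-i$ pins down the dimension to $i$. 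I anticipate that this second case is the main obstacle: the passage from an evaluation identity to structural information about $p,q$ is genuinely harder once one cannot clear denominators and must reason modulo $f$, and I expect to invoke a companion-matrix or duality argument (possibly via the orthogonal-complement description in Proposition~\ref{prop:dualbasisbasepol}) to convert the problem back into the degree-counting form already handled in the first case, rather than redoing the whole analysis from scratch.
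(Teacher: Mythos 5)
Your treatment of the case $k\leq n/2$ is correct, and in fact more self-contained than the paper's: where the paper defers the key inclusion $S\cap\mu S\subseteq\{p(\lmb)c(\lmb)\colon c\in\fq[x]_{\leq j-1}\}$ to \cite[Lemma 2.5]{JVdV}, you prove it directly by lifting the evaluation identity $q(\lmb)a(\lmb)=p(\lmb)b(\lmb)$ to the polynomial identity $q(x)a(x)=p(x)b(x)$ (legitimate since both sides have degree at most $2k-2<n$ and $\lmb$ has minimal polynomial of degree $n$), then using $\gcd(p,q)=1$ to write $a=pc$, $b=qc$. One small slip: you assert $\deg p=\max\{\deg p,\deg q\}=k-j$, which need not hold; but the argument is symmetric, since the two constraints $\deg a\leq k-1$ and $\deg b\leq k-1$ give $\deg c\leq k-1-\deg p$ and $\deg c\leq k-1-\deg q$ respectively, and together these yield $\deg c\leq k-1-\max\{\deg p,\deg q\}=j-1$ whichever polynomial attains the maximum. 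This is cosmetic and easily repaired.

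The genuine gap is the case $n/2<k<n-1$, which is half the lemma and which you leave as a plan rather than a proof. Your first idea (working modulo the minimal polynomial $f$ with an inclusion--exclusion on $p\,\fq[x]_{\leq k-1}$ and $q\,\fq[x]_{\leq k-1}$ reduced mod $f$) is not carried out and would be genuinely delicate, precisely because the passage from evaluation identities to structural conclusions fails once degrees reach $n$. Your second idea --- invoking duality via Proposition \ref{prop:dualbasisbasepol} --- is exactly the paper's route, but naming the tool is not the same as executing it. What is missing is the actual reduction: by Proposition \ref{prop:dualbasisbasepol}, $S^{\perp}=\rho\langle 1,\lmb,\ldots,\lmb^{n-k-1}\rangle_{\fq}$ with $1<n-k<n/2$, so the first case applies to $S^{\perp}$ (the scalar $\rho$ is harmless for intersection dimensions); one then needs the identity $(\mu S)^{\perp}=\mu^{-1}S^{\perp}$, which follows from $\mathrm{Tr}_{\fqn/\fq}(\mu s\cdot a)=\mathrm{Tr}_{\fqn/\fq}(s\cdot\mu a)$, together with the dimension chain $\dim_{\fq}(S^{\perp}\cap\mu S^{\perp})=n-\dim_{\fq}(S+\mu^{-1}S)=n-2k+\dim_{\fq}(S\cap\mu^{-1}S)$ and the observation $\dim_{\fq}(S\cap\mu^{-1}S)=\dim_{\fq}(S\cap\mu S)$; finally, substituting $i=2k+j-n$ one checks that the degree bound transforms correctly, $n-k-j=k-i$, and that $\fq[x]_{\leq n-k-1}\subseteq\fq[x]_{\leq k-1}$ so the polynomials land in the degree range claimed by the statement. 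None of these steps is hard, but none is in your write-up, and the range bookkeeping (why $i$ starts at $2k+1-n$, i.e., why $j\geq 1$) is exactly the kind of detail that must be verified rather than anticipated.
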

\begin{proof}
Let $1<k\leq n/2$ and suppose $\mu=\frac{p(\lmb)}{q(\lmb)}$ where $p(x),q(x)\in\fq[x]_{\leq k-1} $ such that $\gcd(p(x),q(x))=1$ and $k-j=\max\lbrace  \deg(p(x)),\deg(q(x))\rbrace$ for $j\in\lbrace 1,\dots,k\rbrace$.\\
For any $a(x)\in\fq[x]_{\leq j-1}$ we have that 
\[
p(\lmb)a(\lmb)\in S \text{ and } q(\lmb)a(\lmb)\in S,
\]
thus
\[
H=\lbrace p(\lmb)a(\lmb)\colon a(\lmb)\in\fq[\lmb]_{\leq j-1} \rbrace=\mu \lbrace  q(\lmb)a(\lmb)\colon a(\lmb)\in\fq[\lmb]_{\leq j-1} \rbrace\subseteq S\cap\mu S.
\]
Using \cite[Lemma 2.5]{JVdV}, one can show that $H=S\cap \mu S$ and hence
$\dim_{\fq}(S\cap\mu S)=j$.

Conversely, suppose that $\dim_{\fq}(S\cap\mu S)=j$ for some $j\in\lbrace 1,\dots,k\rbrace$. Then there exist $p(\lmb),q(\lmb)\in S$ such that \[
\mu=\frac{p(\lmb)}{q(\lmb)}.
\]
Of course we may suppose $\gcd(p(x),q(x))=1$ and, since $k\leq n/2$, the pair of coprime polynomials $(p(x),q(x))$ such that $\mu=\frac{p(\lmb)}{q(\lmb)}$ is uniquely determined (up to scalars in $\fq$). Moreover, by the first implication, we get that $S\cap\mu S=H$ and  hence   $$k-j=\max\lbrace\deg(p(x)),\deg(q(x))\rbrace.$$
Let $n/2< k< n-1$, then by Proposition \ref{prop:dualbasisbasepol} we know that \[
S^{\perp}=\rho\langle 1, \lmb, \dots, \lmb^{n-k-1}\rangle_{\fq}
\]
for some $\rho\in\fqn^*$. In particular $1<n-k<n/2$ and hence we may apply the first case to $S^{\perp}$. Then we get that 
\[
\dim_{\fq}(S^{\perp}\cap\mu S^{\perp})=j \text{ if and only if }\mu=\frac{p(\lmb)}{q(\lmb)},
\]
with $p(x),q(x)\in\fq[x]_{\leq n-k-1} $ such that $\gcd(p(x),q(x))=1$ and 
\[ n-k-j=\max\lbrace  \deg(p(x)),\deg(q(x))\rbrace \mbox{ for } j\in\lbrace 1,\dots,n-k\rbrace.\]
Moreover, since $(\mu S)^{\perp}=\mu^{-1}S^\perp$,
\[
j=\dim_{\fq}(S^{\perp}\cap\mu S^{\perp})=\dim_{\fq}(S+\mu^{-1} S)^{\perp}=n-\dim_{\fq}(S+\mu^{-1} S)=n-2k+\dim_{\fq}(S\cap\mu^{-1} S).
\]
Thus, since $\dim_{\fq}(S\cap\mu^{-1} S)=\dim_{\fq}(S\cap\mu S) $, we have that for $j\in\lbrace 1,\dots, n-k\rbrace$, 
\[
\dim_{\fq}(S^{\perp}\cap\mu S^{\perp})=j \text{ if and only if }\dim_{\fq}(S\cap\mu S)= 2k+j-n.
\]
The statement follows setting $i=2k+j-n$.
\end{proof}

We can now use the above lemma to determine the weight distribution of one-orbit cyclic subspace codes defined by subspaces admitting a polynomial basis. The determination of the weight distribution is related to the problem of determining the weight distribution of certain type of linear sets on the projective line (see \cite[Theorem 2.7]{JVdV}) and to the problem of counting the number of coprime polynomials over finite fields (see \cite{countcoppol}).

\begin{theorem}\label{thm:polbasiscomplete}
    Let $\lambda \in \fqn\setminus\fq$ and let  $t:=[\fq(\lambda):\fq]$. Let $S=\la 1,\lambda,\ldots,\lambda^{k-1}\ra_{\fq}$ with $1< k<t$. Then for the code  $\cC=\mathrm{Orb}(S)$ we have the following.\\
    If $k\leq t/2$, then 
    \[ 
    \omega_{2i}(\C)= \begin{cases}
    (q+1)q^{2i-1} & \text{if}\,\, i \in \{1,\ldots,k-1\},\\ & \\
    \frac{q^n-q^{2k-1}}{q-1} & \text{if}\,\, i=k.
    \end{cases}
    \]
    If $k>t/2$, then 
     \[ 
    \omega_{2i}(\C)= \begin{cases}
    (q+1)q^{2i-1} & \text{if}\,\, i \in \{1,\ldots,t-k-1\},\\ 
     & \\
    \frac{q^t-q^{2(t-k)-1}}{q-1} & \text{if}\,\, i=t-k \\
    & \\
      0  & \text{if}\,\, i \in \{t-k+1,\ldots,k-1\} \text{ and } t-k+1\ne k, \\
        & \\
    \frac{q^n-q^{t}}{q-1} & \text{if}\,\, i=k.
    \end{cases}
    \]
\end{theorem}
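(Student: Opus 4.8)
The plan is to partition the distinct cyclic shifts of $S$ according to whether the multiplier lies in $\fqt=\fq(\lambda)$ or not, to compute the $\fqt$-shifts via Lemma~\ref{Lem:lemmamu} together with a count of coprime polynomial pairs, and to reduce the case $k>t/2$ to the case $k\le t/2$ by duality. First I would record two preliminary facts. Since $1\in S\subseteq\fqt$, the stabilizer $H(S)$ is a subfield of $\fqt$ contained in $S$; moreover $\lambda S=\la\lambda,\dots,\lambda^{k}\ra_{\fq}$ meets $S$ exactly in $\la\lambda,\dots,\lambda^{k-1}\ra_{\fq}$ (here $1\notin\lambda S$ and $\lambda^{k}\notin S$ because $1,\lambda,\dots,\lambda^{k}$ are $\fq$-independent for $k<t$), so $\dim_{\fq}(S\cap\lambda S)=k-1$. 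As $S\cap\lambda S$ is $H(S)$-linear, $[H(S):\fq]$ divides both $k$ and $k-1$, whence $H(S)=\fq$ and the distinct shifts are parametrised by $\fqn^*/\fq^*$, giving $(q^n-1)/(q-1)$ of them.

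Next I would dispose of the shifts $\alpha S$ with $\alpha\in\fqn^*\setminus\fqt^*$. Viewing $\fqn$ as a vector space over $\fqt$, the $\fqt$-lines $\fqt$ and $\alpha\fqt$ are distinct when $\alpha\notin\fqt$, so $S\cap\alpha S\subseteq\fqt\cap\alpha\fqt=\{0\}$; each such shift contributes weight $2k$. Since $\fqn^*\setminus\fqt^*$ is a union of $\fq^*$-cosets, these shifts account for exactly $(q^n-q^t)/(q-1)$ codewords of weight $2k$. It then remains to analyse the shifts $\mu S$ with $\mu\in\fqt^*$, for which I would apply Lemma~\ref{Lem:lemmamu} inside the field $\fqt=\fq(\lambda)$, i.e.\ with $t$ playing the role of $n$.

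For $k\le t/2$ the lemma furnishes, for each $i\in\{1,\dots,k-1\}$, a bijection between the $\mu\in\fqt^*$ with $\dim_{\fq}(S\cap\mu S)=k-i$ and the coprime pairs $(p,q)\in\fq[x]_{\le k-1}^2$ with $\max\{\deg p,\deg q\}=i$, taken up to a common $\fq^*$-scalar. The number $C_i$ of coprime ordered pairs with $\max\{\deg p,\deg q\}=i$ equals $(q-1)^2(q+1)q^{2i-1}$; this is the coprime-polynomial count of \cite{countcoppol}, which one obtains from the unique factorisation $(p,q)=g\cdot(p',q')$ with $g=\gcd$ monic, yielding $\sum_{d=0}^{i}q^{d}C_{i-d}=q^{2i}(q^2-1)$ and $C_0=q^2-1$. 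To pass from pairs to codewords I must divide by $q-1$ twice: once because $(p,q)$ and $(cp,cq)$ define the same $\mu$, and once because $\mu S=(c\mu)S$ for $c\in\fq^*=H(S)^*$ while $(p,q)$ and $(cp,q)$ are inequivalent pairs. Hence $\omega_{2i}(\cC)=C_i/(q-1)^2=(q+1)q^{2i-1}$ for $1\le i\le k-1$. Finally $\omega_{2k}(\cC)$ is obtained by subtracting these shifts and $S$ itself from the total $(q^n-1)/(q-1)$, giving $(q^n-q^{2k-1})/(q-1)$; equivalently it is the sum of the $(q^n-q^t)/(q-1)$ outside shifts and the $(q^t-q^{2k-1})/(q-1)$ multipliers $\mu\in\fqt^*$ with $S\cap\mu S=\{0\}$.

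For $t/2<k<t$ I would dualise inside $\fqt$. By Proposition~\ref{prop:dualbasisbasepol}, $S^{\perp}=\rho\la 1,\lambda,\dots,\lambda^{t-k-1}\ra_{\fq}$ for some $\rho\in\fqt^*$, so $S':=\la 1,\dots,\lambda^{t-k-1}\ra_{\fq}$ is a polynomial-basis subspace of dimension $k':=t-k<t/2$ with $\mathrm{Orb}(S')=\mathrm{Orb}(S^{\perp})$. For $\mu\in\fqt^*$, Remark~\ref{rem:dual} gives $S^{\perp}\cap\mu S^{\perp}=(S+\mu^{-1}S)^{\perp}$, hence $\dim_{\fq}(S^{\perp}\cap\mu S^{\perp})=\dim_{\fq}(S\cap\mu S)+t-2k$ (using $\dim_{\fq}(S\cap\mu^{-1}S)=\dim_{\fq}(S\cap\mu S)$), so the weight index of $\mu$ for $S$ coincides with its weight index for $S'$. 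Thus the $\fqt$-shifts reproduce the distribution of the previous case applied to $S'$, yielding $\omega_{2i}(\cC)=(q+1)q^{2i-1}$ for $i\in\{1,\dots,t-k-1\}$ and $\omega_{2(t-k)}(\cC)=(q^t-q^{2(t-k)-1})/(q-1)$. Since $\dim_{\fq}(S\cap\mu S)\ge 2k-t$ for every $\mu\in\fqt^*$, no $\fqt$-shift yields a weight index in $\{t-k+1,\dots,k-1\}$, whence $\omega_{2i}(\cC)=0$ there, while weight $2k$ is attained only by the outside shifts, giving $\omega_{2k}(\cC)=(q^n-q^t)/(q-1)$. The main obstacle throughout is the precise bookkeeping of the two factors of $q-1$ relating coprime pairs, values of $\mu$, and distinct codewords: mishandling this scales every $\omega_{2i}$ by a spurious factor of $q-1$.
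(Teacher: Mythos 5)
Your proof is correct and takes essentially the same route as the paper: the characterisation of the multipliers $\mu\in\fqt^*$ via Lemma~\ref{Lem:lemmamu}, the splitting of the shifts according to whether $\mu\in\fqt$ or $\mu\in\fqn\setminus\fqt$ (the latter all giving weight $2k$), the count of coprime polynomial pairs up to scalars with the double division by $q-1$ justified by $H(S)=\fq$, and the reduction of the case $k>t/2$ to $k\le t/2$ through the dual $S^{\perp}=\rho\langle 1,\lambda,\dots,\lambda^{t-k-1}\rangle_{\fq}$ from Proposition~\ref{prop:dualbasisbasepol} together with $\dim_{\fq}(S^{\perp}\cap\mu S^{\perp})=\dim_{\fq}(S\cap\mu S)+t-2k$ and the bound $\dim_{\fq}(S\cap\mu S)\ge 2k-t$ for $\mu\in\fqt^*$. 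The only (harmless) deviations are that you establish $H(S)=\fq$ upfront from $\dim_{\fq}(S\cap\lambda S)=k-1$ rather than invoking Lemma~\ref{lem:w2i>0} after the count, and that you rederive $C_i=(q-1)^2(q+1)q^{2i-1}$ via the gcd-factorisation recursion where the paper simply cites the computation in the proof of \cite[Lemma 2.6]{JVdV}.
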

\begin{proof}
    We start by noting that
    \[\omega_{2i}(\C)=\mid \lbrace \mu S\in \cC \colon \mu\in\fqn\setminus \fq \, \text{ and } \, \dim_{\fq}(S\cap \mu S)=k-i\rbrace\mid\] for any $i\in \{1,\ldots,k\}$.
    Suppose that $k\leq t/2$. By Lemma \ref{Lem:lemmamu}, $\dim_{\fq}(S\cap \mu S)=j$ if and only if $\mu=\frac{p(\lmb)}{q(\lmb)}$
    with $p(x),q(x)\in\fq[x]_{\leq k-1} $ such that $\gcd(p(x),q(x))=1$ and $k-j=\max\lbrace  \deg(p(x)),\deg(q(x))\rbrace$.
    Note that if $\mu \notin \fq(\lambda)=\F_{q^t}$ then $\dim_{\fq}(S\cap \mu S)=0$. Hence, assume that $\mu \in \F_{q^t}$.
    Therefore, we have a one-to-one correspondence between those $\mu \in \F_{q^t}$ such that $\dim_{\fq}(S \cap \mu S)=k-i$ for some $i\in \{1,\dots, k-1\}$ and pairs of coprime polynomials in $(\fq[x]_{\leq k-1})^2$ (up to scalars of $\fq$).
    In order to find the $\omega_{2i}$'s we need to determine the number of such a pairs for which the maximum degree among the two polynomials is $i$.
    These values have been computed in the proof of \cite[Lemma 2.6]{JVdV}, 
     obtaining that $\dim_{\fq}(S\cap \mu S)=k-i$ for $q^{2i+1}-q^{2i-1}$ different values of $\mu$ in $\F_{q^t}$ for any $i \in \{1,\ldots,k-1\}$ and $\dim_{\fq}(S\cap \mu S)=0$ for $q^{t}-q^{2k-1}$ different values of $\mu$ in $\F_{q^t}$, as this corresponds to those $\mu$ which cannot be written as the ratio of two polynomials over $\fq$ in $\lambda$ of degree less than $k$. Taking into account that for any $\mu \in \fqn\setminus \F_{q^t}$ we have $\dim_{\fq}(S\cap \mu S)=0$, the number of non-zero $\mu\in\fqn$ such that $d(S,\mu S)=2k$ is $(q^n-q^t)+(q^t-q^{2k-1})$.
    In particular, $\omega_{2(k-1)}(\C)>0$ and hence by Lemma \ref{lem:w2i>0} we get $H(S)=\fq$ and hence $\mu S=S$ if and only if $\alpha \in \fq^*$. So 
    $$\omega_{2i}(\C)=\frac{q^{2i+1}-q^{2i-1}}{q-1} \,\, \text{ for }i \in \{1,\ldots,k-1\} \, \text{ and }  \, \omega_{2k}(\C)=\frac{q^n-q^{2k-1}}{q-1}.$$\\
    Suppose now that $k>t/2$ and consider $S^\perp$ the dual of $S$ in $\F_{q^t}$ as in Remark \ref{rem:dual}.
    By Proposition \ref{prop:dualbasisbasepol}, we have that 
    \[S^\perp=\delta \la 1,\lambda,\ldots,\lambda^{t-k-1} \ra_{\fq},\]
    for some $\delta \in \F_{q^t}^*$. Since $\dim_{\fq} (S^{\perp})= t-k < t/2$ and \[ \dim_{\fq}(S\cap \mu S)=2k-t+\dim_{\fq}(S^\perp\cap \mu S^\perp), \] for any $\mu \in \fqt$, we get 
   \[ d(S, \mu S)=d(S^\perp, \mu S^\perp) \quad \forall \mu \in \fqt^*, \]
   so for any $\mu \in \fqt^*$, we can use the above arguments to get the second part of the assert. In particular, there exist $q^t-q^{2(t-k)-1}$ elements $\mu \in \F_{q^t}$  such that $d(S, \mu S)=2(t-k)$, whereas there are no elements $\mu$ in $\fqt$ such that $d(S, \mu S)=2i$ for $i\in \{t-k+1,\dots,k-1\}$ if $t-k+1<k$. Finally, if $\mu \in \fqn \setminus \fqt$ we have $\dim_{\fq} (S\cap \mu S)=0$ and hence $d(S, \mu S)=2k$.
    
\end{proof}


\begin{remark}
    Note that the weight distribution depends on the value $\lambda$ only for the degree extension $\fq(\lambda)/\fq$.
\end{remark}

As a consequence we can determine under which conditions on the parameters a subspace with a polynomial basis gives an FWS code.

\begin{corollary}
\label{cor:existFWSpolbas}
    For any $\lambda \in \fqn\setminus\fq$, let $t=\dim_{\fq}(\fq(\lambda))$ and $S=\la 1,\lambda,\ldots,\lambda^{k-1}\ra_{\fq}$ with $1< k<t$. Then $\cC=\mathrm{Orb}(S)$ is an FWS code in $\mathcal{G}_q(n,k)$ in the following cases:
    \begin{enumerate}
        \item $t<n$ and $k\leq \frac{t+1}{2}$;
        \item   $t=n$ and $k\leq \frac{n}{2}$. 
    \end{enumerate}
    Also, in both cases $m(\cC)=t$ and the weight distribution of $\cC$ is 
    \[\omega(\C)=\left((q+1)q,(q+1)q^{3},\dots,(q+1)q^{2i-1},\dots, (q+1)q^{2(k-1)-1}, \frac{q^n-q^{2k-1}}{q-1}\right)
    \]
    if $k\leq \frac{t}{2}$; whereas the weight distribution of $\cC$ is
     \[\omega(\C)=\left((q+1)q,(q+1)q^{3},\dots,(q+1)q^{2i-1},\dots,, (q+1)q^{2(t-k-1)-1},\frac{q^t-q^{2k-3}}{q-1}, \frac{q^n-q^{2k-1}}{q-1}\right)
    \]
     if $k= \frac{t+1}{2}$ and $t<n$. 
\end{corollary}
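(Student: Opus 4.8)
The plan is to derive Corollary~\ref{cor:existFWSpolbas} directly from Theorem~\ref{thm:polbasiscomplete}, since almost everything needed is already encoded in the weight distribution computed there. First I would observe that $\cC=\mathrm{Orb}(S)$ is FWS precisely when $\omega_{2i}(\cC)>0$ for all $i\in\{1,\dots,k\}$, so the task reduces to checking positivity of every entry in the distribution under the hypotheses $t<n,\ k\leq\frac{t+1}{2}$ (case 1) and $t=n,\ k\leq\frac{n}{2}$ (case 2). In both cases the bound $1<k<t$ lets me invoke Theorem~\ref{thm:polbasiscomplete}, and I would split the analysis according to whether $k\leq t/2$ or $k>t/2$, matching the two branches of that theorem.

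Next I would handle the branch $k\leq t/2$, which covers case 2 entirely (since $t=n$ and $k\le n/2$) and the sub-case $k\le t/2$ of case 1. Here Theorem~\ref{thm:polbasiscomplete} gives $\omega_{2i}(\cC)=(q+1)q^{2i-1}>0$ for $i\in\{1,\dots,k-1\}$ and $\omega_{2k}(\cC)=\frac{q^n-q^{2k-1}}{q-1}$. The only thing to verify is that the last entry is strictly positive, i.e. $q^n>q^{2k-1}$, which is immediate because $2k-1\le n-1<n$ (using $k\le n/2$ when $t=n$, and $2k\le t<n$ when $t<n$). This shows all $k$ weights occur and reads off the stated distribution verbatim.

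The remaining branch is $k>t/2$ with $t<n$, where the FWS condition is most delicate because Theorem~\ref{thm:polbasiscomplete} produces explicit zeros $\omega_{2i}(\cC)=0$ for $i\in\{t-k+1,\dots,k-1\}$ whenever $t-k+1\ne k$. The key point is that the hypothesis $k\le\frac{t+1}{2}$ combined with $k>t/2$ forces $k=\frac{t+1}{2}$ (so $t$ is odd and $t=2k-1$); I would verify that in this situation the "gap" index range $\{t-k+1,\dots,k-1\}$ is empty, precisely because $t-k+1=k$, so no zero weight arises. Substituting $t=2k-1$ into the theorem's formulas then collapses them: $t-k=k-1$, the first block gives $(q+1)q^{2i-1}$ for $i\in\{1,\dots,k-2\}$, the entry at $i=t-k=k-1$ becomes $\frac{q^t-q^{2(t-k)-1}}{q-1}=\frac{q^t-q^{2k-3}}{q-1}$, and the entry at $i=k$ is $\frac{q^n-q^t}{q-1}=\frac{q^n-q^{2k-1}}{q-1}$, matching the stated distribution. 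Positivity of these last two is again clear since $2k-3<2k-1=t$ and $t<n$.

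Finally, I would confirm the invariant claim $m(\cC)=t$ in all cases. Since $S=\langle 1,\lambda,\dots,\lambda^{k-1}\rangle_{\fq}$ with $\fq(\lambda)$ of degree $t$, the smallest subfield containing a cyclic shift of $S$ is $\F_{q^t}$: indeed $S\subseteq\fq(\lambda)=\F_{q^t}$ gives $m(\cC)\le t$, while $\delta_{t'}(\cC)>1$ for any proper divisor $t'$ of $t$ because $\lambda\in\langle S\rangle_{\F_{q^{t'}}}$ would force $\fq(\lambda)\subseteq\F_{q^{t'}}$, contradicting $[\fq(\lambda):\fq]=t$. I expect the main obstacle to be purely bookkeeping: carefully tracking the case boundary $k=\frac{t+1}{2}$ to show the zero-weight index set is empty, and re-indexing the theorem's formulas so they align exactly with the distribution vector written in the corollary. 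No genuinely new idea is required beyond Theorem~\ref{thm:polbasiscomplete}; the work is in organizing the inequalities.
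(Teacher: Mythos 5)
Your proposal is correct and follows essentially the same route as the paper's proof: both read the FWS property and the stated weight distributions directly off Theorem \ref{thm:polbasiscomplete}, splitting into $k\leq t/2$ and $k=\frac{t+1}{2}$ (where $t=2k-1$ makes the zero-weight range $\{t-k+1,\dots,k-1\}$ empty), and conclude $m(\cC)=t$ from $\fq(S)=\fq(\lambda)=\F_{q^t}$. One small phrasing fix for the invariant claim: you must rule out $\delta_{t'}(S)=1$ for \emph{every} divisor $t'$ of $n$ with $t'<t$, not only proper divisors of $t$, which your ratio argument ($1,\lambda\in S\subseteq\rho\F_{q^{t'}}$ forces $\lambda\in\F_{q^{t'}}$, hence $t\mid t'$) in fact already accomplishes.
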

\begin{proof}
If $k\leq t/2$, then 
by Theorem \ref{thm:polbasiscomplete} we have $\omega_{2i}(\C)>0$ for any $i\in\lbrace 1, \dots, k\rbrace$, so $\cC$ has exactly $k$ non-zero distinct weights, i.e. $\cC$ is an FWS code.
If $k=\frac{t+1}{2}$ and $t<n$, then by Theorem \ref{thm:polbasiscomplete} we have that $\omega_{2j}(\C)>0$ for any $j\in\lbrace 1, \dots, \frac{t-3}{2}\rbrace$. Also, $\omega_{t-1}(\C)=\frac{q^t-q^{t-2}}{q-1}>0$ and $\omega_{t+1}(\C)=\frac{q^n-q^t}{q-1}>0$. Therefore $\cC$ has $k$ non-zero distinct weights, i.e. $\cC$ is an FWS code. Finally,  since $\fqt=\fq(\lambda)=\fq(S)$, we have that $m(\cC)=t$ and the weight distributions comes from Theorem \ref{thm:polbasiscomplete}. 
\end{proof}

Therefore, for any admissible values of $k,t,n$ and $q$ (that is, satisfying the bound in Corollary \ref{cor:necessaryconditionsFWS}) there exists an FWS  one-orbit cyclic subspace code $\cC$  in $\mathcal{G}_q(n,k)$. Precisely, we get the following result.

\begin{corollary}
    For any positive integer $n$, for any divisor $t$ of $n$, with either $1<t<n$ and $k\leq (t+1)/2$ or $t=n$ and $k\leq n/2$, and any prime power $q$, there exists $\C=\mathrm{Orb}(S)$ in $\mathcal{G}_{q}(n,k)$ with $m(\C)=t$ which is an FWS code.
\end{corollary}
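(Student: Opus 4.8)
The plan is to show that the stated divisibility and inequality conditions on $k$, $t$, and $n$ are precisely the admissible parameters arising from Corollary \ref{cor:existFWSpolbas}, and then to exhibit a concrete $\lambda$ (hence a concrete subspace $S$ with a polynomial basis) realizing them. The whole statement is essentially an existence packaging of the previous corollary, so the work reduces to guaranteeing that for each prescribed $t\mid n$ one can find an element $\lambda\in\fqn$ with $\fq(\lambda)=\fqt$, i.e. with $[\fq(\lambda):\fq]=t$.

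First I would fix $n$, a divisor $t$ of $n$, a prime power $q$, and $k$ satisfying either $1<t<n$ with $k\le (t+1)/2$ or $t=n$ with $k\le n/2$. Since $t\mid n$, the field $\fqt$ is a subfield of $\fqn$, and I would invoke the standard fact that $\fqt=\fq(\lambda)$ for some primitive element $\lambda$ of $\fqt$ (for instance a generator of the cyclic group $\fqt^*$), so that $[\fq(\lambda):\fq]=t$. This places $\lambda$ in $\fqn\setminus\fq$ (using $t>1$) and gives $\dim_{\fq}(\fq(\lambda))=t$, matching the hypotheses of Corollary \ref{cor:existFWSpolbas}.

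With such a $\lambda$ in hand, I would set $S=\la 1,\lambda,\ldots,\lambda^{k-1}\ra_{\fq}$ and simply apply Corollary \ref{cor:existFWSpolbas}: in case $1<t<n$ with $k\le (t+1)/2$ we are in either the subcase $k\le t/2$ or the boundary subcase $k=(t+1)/2$, and in case $t=n$ with $k\le n/2$ we fall under the $t=n$ branch. In each instance the corollary yields that $\cC=\mathrm{Orb}(S)$ is an FWS code in $\mathcal{G}_q(n,k)$ with $m(\cC)=t$, which is exactly the desired conclusion. One should note that the constraint $1<k<t$ required by the corollary is automatically met, since $k\ge 2$ is forced by $k$ having $k$ distinct nonzero weights (an FWS code of dimension $k\ge 2$ is nondegenerate) and $k\le (t+1)/2<t$ (respectively $k\le n/2<n=t$) ensures $k<t$.

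I do not expect a genuine obstacle here, as the statement is a corollary in the literal sense: the substantive content lives in Theorem \ref{thm:polbasiscomplete} and Corollary \ref{cor:existFWSpolbas}. The only point demanding mild care is the boundary case $k=(t+1)/2$ when $t$ is odd and $t<n$, where the weight distribution is the second one listed in Corollary \ref{cor:existFWSpolbas}; I would verify there that all $k$ listed weights are strictly positive, which the corollary already records. Thus the proof is a short synthesis: choose $\lambda$ generating $\fqt$, form the polynomial-basis subspace, and cite the existence of all $k$ nonzero weights from the preceding corollary.
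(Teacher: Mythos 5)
Your proposal is correct and is essentially the paper's own argument: the paper proves this corollary in one line by citing Corollary \ref{cor:existFWSpolbas} (together with Corollary \ref{cor:necessaryconditionsFWS}), and you have simply made explicit the one implicit ingredient, namely choosing $\lambda$ with $\fq(\lambda)=\fqt$ (e.g.\ a generator of $\fqt^*$) so that the polynomial-basis subspace $S=\la 1,\lambda,\ldots,\lambda^{k-1}\ra_{\fq}$ has $m(\mathrm{Orb}(S))=t$. Your side remarks on the boundary case $k=(t+1)/2$ and on $1<k<t$ match what Corollary \ref{cor:existFWSpolbas} already records, so no further work is needed.
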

\begin{proof}
    
    The proof follows by Corollaries \ref{cor:necessaryconditionsFWS} and \ref{cor:existFWSpolbas}.
    
\end{proof}

\subsection{Other examples of FWS codes}

In this section, another family of FWS codes will be described. We will see that these examples arise from subspaces belonging to Family (2) of Theorem \ref{thm:maintheorem}. To achieve this goal, we need the following proposition, that will allow us to get some bounds on the parameters of these codes, and then a technical proposition to get information on the weight distribution of such codes.

\begin{proposition}
    \label{prop1 h'h}
Let $n$ be an even positive integer and let 
\begin{equation} \label{EqBarS}
    S=\Bar{S}\oplus b \fq,
\end{equation}
where $\Bar{S}$ is an $\F_{q^2}$-subspace of $\fqn$ of dimension $l$ over $\F_{q^2}$ and $b\in \fqn \setminus \Bar{S}$. Denoted by $Y=\langle S\rangle_{\F_{q^2}}=\Bar{S}\oplus b \F_{q^2}$, for any $\mu \in \fqn$ one of the following occurs: 
\begin{itemize}
    \item [I)] $\Bar{S}\cap\mu\Bar{S}=S\cap\mu S=Y\cap\mu Y$ and $S\cap\mu S$ is an $\F_{q^2}$-subspace;
    \item[II)] $\dim_{\fq}(\Bar{S}\cap\mu\Bar{S})=\dim_{\fq}(Y\cap\mu Y)-2$, $\dim_{\fq}(S\cap\mu S) \in \{\dim_{\fq}(\Bar{S}\cap\mu\Bar{S}),\dim_{\fq}(\Bar{S}\cap\mu\Bar{S})+1\}$ and $S\cap\mu S$ is an $\F_{q^2}$-subspace if and only if $S\cap \mu S=\bar{S}\cap \mu\bar{S}$;
    \item[III)]  $\dim_{\fq}( S\cap\mu S)=2+\dim_{\fq}(\Bar{S}\cap\mu \Bar{S})=\dim_{\fq}( Y\cap\mu Y)-2$ and $S\cap\mu S$ is not an $\F_{q^2}$-subspace of $\fqn$.
\end{itemize}
In particular, if $\dim_{\fq}( S\cap\mu S)=2l$, then either $\mu \in H(Y)$ or $\mu \in H(\Bar{S})$.
\end{proposition}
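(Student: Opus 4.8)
The plan is to reduce the whole statement to a transparent linear-algebra computation inside the two-dimensional $\F_{q^2}$-space $\F_{q^2}\times\F_{q^2}$, exploiting that $\Bar S$ and $Y$ are both $\F_{q^2}$-linear while $S$ sits between them with $\fq$-codimension one on each side. First I would record the two direct-sum decompositions $S=\Bar S\oplus b\fq$ and $Y=\Bar S\oplus b\F_{q^2}$ (the latter is the displayed equality in the statement; both are genuine direct sums because $b\notin\Bar S$ and $\Bar S$ is $\F_{q^2}$-linear), so that $\dim_{\fq}\Bar S=2l$, $\dim_{\fq}S=2l+1$ and $\dim_{\fq}Y=2l+2$. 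Set $A=\Bar S\cap\mu\Bar S$, $B=S\cap\mu S$, $C=Y\cap\mu Y$ and write $a,s,y$ for their $\fq$-dimensions; then $A\subseteq B\subseteq C$, and $A,C$ are intersections of $\F_{q^2}$-subspaces, hence $\F_{q^2}$-linear, so $a$ and $y$ are even.

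The key device is the $\F_{q^2}$-linear map $\phi\colon C\to\F_{q^2}\times\F_{q^2}$ sending $c\in C$ to the pair $(\zeta,\zeta')$ consisting of its $b$-coordinate in $Y=\Bar S\oplus b\F_{q^2}$ and its $\mu b$-coordinate in $\mu Y=\mu\Bar S\oplus\mu b\F_{q^2}$. I would verify three things: $\ker\phi=A$; the image $V:=\phi(C)$ is an $\F_{q^2}$-subspace of $\F_{q^2}\times\F_{q^2}$ with $\dim_{\F_{q^2}}V=(y-a)/2\in\{0,1,2\}$; and, crucially, $\phi(B)=V\cap(\fq\times\fq)$. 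The inclusion $\phi(B)\subseteq V\cap(\fq\times\fq)$ is immediate, and the reverse follows because an element $c\in C$ lies in $S$ (resp. $\mu S$) exactly when its $b$-coordinate (resp. $\mu b$-coordinate) lies in $\fq$, which is precisely what $S=\Bar S\oplus b\fq$ says.

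The trichotomy then falls out by inspecting $V$ against the "rational points" $\fq\times\fq$. If $\dim_{\F_{q^2}}V=0$ then $A=B=C$, giving Case I. If $\dim_{\F_{q^2}}V=2$ then $V=\F_{q^2}\times\F_{q^2}$, so $V\cap(\fq\times\fq)=\fq\times\fq$ and $s-a=2$, $y-a=4$, giving Case III. If $\dim_{\F_{q^2}}V=1$ then $V\cap(\fq\times\fq)$ is an $\fq$-line or $\{0\}$, since it cannot be all of $\fq\times\fq$ (that would force $V\supseteq\langle\fq\times\fq\rangle_{\F_{q^2}}=\F_{q^2}\times\F_{q^2}$), so $s-a\in\{0,1\}$ and $y-a=2$, giving Case II. The $\F_{q^2}$-linearity assertions are read off the same picture: $B$ is $\F_{q^2}$-linear iff $\omega x\in B$ for every $x\in B$ and a fixed $\omega$ with $\F_{q^2}=\fq(\omega)$, i.e. iff $\omega\,\phi(x)\in\fq\times\fq$ for all $x\in B$; as $\omega\notin\fq$ this happens iff $\phi(B)=\{0\}$, that is iff $B=A=\Bar S\cap\mu\Bar S$. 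Finally, for the "in particular" clause I set $s=2l$: parity of $a$ rules out $s=a+1$, so in Cases I and II necessarily $a=2l=\dim_{\fq}\Bar S$, whence $\Bar S\cap\mu\Bar S=\Bar S$, so $\mu\Bar S=\Bar S$ and $\mu\in H(\Bar S)$; in Case III one has $y=s+2=2l+2=\dim_{\fq}Y$, whence $Y\cap\mu Y=Y$, so $\mu Y=Y$ and $\mu\in H(Y)$.

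The main obstacle is setting up $\phi$ correctly and establishing the exact identity $\phi(B)=V\cap(\fq\times\fq)$, rather than only the trivial inclusion; the reverse inclusion is exactly what pins down the values of $s-a$ in each case (and in particular makes the line case split precisely into $s-a\in\{0,1\}$). It is also the place where the hypothesis $\omega\notin\fq$ is genuinely used, namely in the fact that a true $\F_{q^2}$-line of $\F_{q^2}\times\F_{q^2}$ meets $\fq\times\fq$ in at most an $\fq$-line. Everything else is dimension bookkeeping together with the parity constraint coming from the $\F_{q^2}$-linearity of $A$ and $C$.
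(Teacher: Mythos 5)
Your proof is correct, and it takes a genuinely different route from the paper's. The paper works with the two chains $\Bar{S}\cap\mu\Bar{S}\subseteq S\cap\mu S\subseteq Y\cap\mu Y$ and $\Bar{S}+\mu\Bar{S}\subseteq S+\mu S\subseteq Y+\mu Y$, uses the Grassmann formula to squeeze $\dim_{\fq}(S\cap\mu S)$ between $\dim_{\fq}(Y\cap\mu Y)-2$ and $\dim_{\fq}(\Bar{S}\cap\mu\Bar{S})+2$, and then runs a case analysis on $h-h'\in\{0,1,2\}$, invoking the maximality of $\Bar{S}$ (resp.\ $\mu\Bar{S}$) as the largest $\F_{q^2}$-subspace of $S$ (resp.\ $\mu S$) both to exclude the subcase $\dim_{\fq}(S\cap\mu S)=\dim_{\fq}(Y\cap\mu Y)$ inside Case II and to obtain the non-linearity in Case III. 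You instead package everything into the single $\F_{q^2}$-linear map $\phi\colon Y\cap\mu Y\to\F_{q^2}\times\F_{q^2}$ recording the $b$- and $\mu b$-coordinates, for which $\ker\phi=\Bar{S}\cap\mu\Bar{S}$ and, crucially, $\phi(S\cap\mu S)=\phi(Y\cap\mu Y)\cap(\fq\times\fq)$ holds as an exact identity (your verification of the reverse inclusion is the correct load-bearing step, and it is sound: membership of an element of $Y\cap\mu Y$ in $S$, resp.\ $\mu S$, is exactly rationality of its $b$-, resp.\ $\mu b$-coordinate). The trichotomy then becomes the trichotomy on $\dim_{\F_{q^2}}\phi(Y\cap\mu Y)\in\{0,1,2\}$; the bound $\dim_{\fq}(S\cap\mu S)-\dim_{\fq}(\Bar{S}\cap\mu\Bar{S})\leq 1$ in Case II follows from the fact that a one-dimensional $\F_{q^2}$-subspace of $\F_{q^2}\times\F_{q^2}$ meets $\fq\times\fq$ in at most an $\fq$-line, and all the $\F_{q^2}$-linearity claims, including the ``if and only if'' in Case II and the non-linearity in Case III, fall out of one uniform observation (closure under some $\omega$ with $\F_{q^2}=\fq(\omega)$ forces $\phi(S\cap\mu S)=0$, i.e.\ $S\cap\mu S=\Bar{S}\cap\mu\Bar{S}$), rather than from separate maximality contradictions. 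Your treatment of the final clause matches the paper's in substance, with the parity of $\dim_{\fq}(\Bar{S}\cap\mu\Bar{S})$ ruling out the odd value when $\dim_{\fq}(S\cap\mu S)=2l$, so Cases I--II give $\mu\in H(\Bar{S})$ and Case III gives $\mu\in H(Y)$. What your approach buys is that the dimension relations come out as exact identities rather than inequalities to be refined case by case, and the linearity statements are centralized in one lemma; what the paper's approach buys is elementarity, needing only inclusion-plus-Grassmann bookkeeping with no auxiliary quotient construction. The only cosmetic caveat is the degenerate value $\mu=0$ allowed by the statement, where $\mu Y$ is not a direct sum as written; there all three spaces are zero and Case I holds trivially, in both your argument and the paper's.
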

\begin{proof}
Note that $\Bar{S}\subset S\subset Y$, then we easily get
\begin{equation}
    \label{eq:inter}
    \Bar{S}\cap\mu\Bar{S}\subseteq S\cap\mu S\subseteq Y\cap\mu Y
\end{equation}
and 
\begin{equation}
\label{eq:sum}
\Bar{S}+\mu\bar{S}\subseteq S+\mu S\subseteq Y+\mu Y.  
\end{equation}
Moreover, $\Bar{S}\cap\mu\Bar{S}$ and $Y\cap\mu Y$ are $\F_{q^2}$-subspaces of $\fqn$. Let $\dim_{\fq}(\bar{S}\cap\mu\Bar{S})=2h'$ and $\dim_{\fq}(Y\cap\mu Y)=2h$. Then by  \eqref{eq:inter} we have $h'\leq h$. Moreover, using the Grassmann formula we get $\dim_{\fq}(S+\mu S)=4l+2-\dim_{\fq}(S\cap\mu S)$ and by  \eqref{eq:sum} we obtain that
\begin{equation}
    4l-\dim_{\fq}(\Bar{S}\cap\mu\Bar{S}) \leq \dim_{\fq}(S+\mu S)\leq 4l+4-\dim_{\fq}(Y\cap\mu Y)
\end{equation}
and so 
\begin{equation}
\label{eq:rangehh'}
    2h-2\leq \dim_{\fq}(S\cap\mu S)\leq 2h'+2.
\end{equation}
Therefore, we get that $h\in\lbrace h', h'+1,h'+2\rbrace$. We will analyze these cases separately.\\
{\bf Case 1:} If $h'=h$, by \eqref{eq:inter} we have that $\Bar{S}\cap\mu\Bar{S}=S\cap\mu S=Y\cap\mu Y$ and so $S\cap\mu S$ is an $\F_{q^2}$-subspace. This corresponds to I) of the statement. \\
 {\bf Case 2:} If $h=h'+1$, by  \eqref{eq:rangehh'} we have that $2h'\leq\dim_{\fq}(S\cap\mu S)\leq 2h'+2$. \\
If $\dim_{\fq}(S\cap\mu S)=2h'=\dim_{\fq}(\bar{S}\cap\mu\bar{S})$, then by  \eqref{eq:inter} we get $\bar{S}\cap\mu\bar{S}=S\cap\mu S\subset Y\cap\mu Y$ and so $S\cap\mu S$ is an $\F_{q^2}$-subspace of $\fqn$. \\
If $\dim_{\fq}(S\cap\mu S)=2h'+2=\dim_{\fq}(Y\cap\mu Y)$, then $S\cap\mu S=Y\cap\mu Y$ is an $\F_{q^2}$-subspace of $\fqn$ of dimension $h'+1$. Since $\bar{S}$ is the maximal $\F_{q^2}$-subspace contained in $S$, and $\mu \bar{S}$ is the maximal $\F_{q^2}$-subspace contained  $\mu S$, then $S\cap\mu S \subseteq \bar{S}\cap \mu \bar{S}$, a contradiction to their dimensions. . This corresponds to II) of the statement.\\
 {\bf Case 3:} If $h=h'+2$, then by  \eqref{eq:rangehh'} we get  $\dim_{\fq}(S\cap\mu S)=2h'+2$. Moreover, if $S\cap\mu S$ is an $\F_{q^2}$-subspace of $\fqn$, then $S\cap\mu S\subseteq \bar{S}\cap\mu\bar{S}$,  a contradiction. Therefore $\dim_{\fq}(S\cap\mu S)=2h'+2$ and $S\cap\mu S$ is not an $\F_{q^2}$-subspace of $\fqn$. . This corresponds to III) of the statement.\\

Finally, if $\dim_{\fq}( S\cap\mu S)=2l$, then in Case III) we have that $Y\cap \mu Y=Y$ and so $\mu \in H(Y)$. In Cases I and II we get $\mu \in H(\bar{S})$.
\end{proof}

From the previous proposition, we derive a necessary condition for obtaining an FWS code starting from a subspace $S$ as in (\ref{EqBarS}). This condition will be used in Section 6 to prove Theorem 6.2.

\begin{corollary}
\label{cor:SisFWSthenbarSisFWSq2}
Let $S=\bar{S}\oplus b\fq$ be an $\fq$-subspace of $\fqn$ where $\bar{S}$ is an $\F_{q^2}$-subspace of $\fqn$ of dimension $2l$ over $\fq$. If  $\mathrm{Orb}(S) \subset \mathcal{G}_q(n,2l+1)$ is an $FWS$ code, then $\mathrm{Orb}(\bar{S})$ is an $FWS$ code in  $\mathcal{G}_{q^2}(\frac{n}2,l)$. 
Also, if $Y=\langle S\rangle_{\F_{q^2}}$, then  
 $\dim_{\F_{q^2}}(Y\cap\mu Y)$, $\mu\in\fqn^*$, admits all the possible values in $\{1,\ldots,l\}$.
\end{corollary}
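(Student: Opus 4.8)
The plan is to exploit the fact that an FWS code $\mathrm{Orb}(S)$ in $\mathcal{G}_q(n,2l+1)$ realizes \emph{every} admissible intersection dimension, and then to read off the behaviour of $\bar S$ and $Y$ directly from the trichotomy of Proposition \ref{prop1 h'h}. Since $\dim_{\fq}(S)=2l+1$, the FWS hypothesis is equivalent to $\omega_{2i}(\mathrm{Orb}(S))>0$ for every $i\in\{1,\ldots,2l+1\}$, i.e. $\dim_{\fq}(S\cap\mu S)$ attains every value in $\{0,1,\ldots,2l\}$ as $\mu$ runs over $\fqn^*$. The key observation I would isolate is that the \emph{odd} values in this list single out Case II of the proposition.

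Concretely, writing $2h'=\dim_{\fq}(\bar S\cap\mu\bar S)$ and $2h=\dim_{\fq}(Y\cap\mu Y)$ (both even, since $\bar S\cap\mu\bar S$ and $Y\cap\mu Y$ are $\F_{q^2}$-subspaces), Proposition \ref{prop1 h'h} gives $\dim_{\fq}(S\cap\mu S)\in\{2h',2h'+1,2h'+2\}$, where the two even extremes occur in Cases I and III and the odd value $2h'+1$ can occur \emph{only} in Case II, in which moreover $h=h'+1$. First I would fix, for each $j\in\{0,1,\ldots,l-1\}$, an element $\mu_j\in\fqn^*$ with $\dim_{\fq}(S\cap\mu_j S)=2j+1$; this is available from the FWS hypothesis because $2j+1\le 2l-1$ is an admissible odd value. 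For such $\mu_j$ the proposition then pins down exactly $\dim_{\F_{q^2}}(\bar S\cap\mu_j\bar S)=j$ and $\dim_{\F_{q^2}}(Y\cap\mu_j Y)=j+1$.

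These two equalities yield both assertions simultaneously. On the one hand, as $j$ ranges over $\{0,\ldots,l-1\}$ the value $\dim_{\F_{q^2}}(\bar S\cap\mu_j\bar S)=j$ runs over $\{0,\ldots,l-1\}$, so the subspace distance computed over $\F_{q^2}$, namely $2l-2j=2(l-j)$, runs over all of $\{2,4,\ldots,2l\}$; since $\mathrm{Orb}(\bar S)$ lives in $\mathcal{G}_{q^2}(n/2,l)$, where $l$ is precisely the number of admissible non-zero weights, this says exactly that $\mathrm{Orb}(\bar S)$ is FWS. On the other hand, $\dim_{\F_{q^2}}(Y\cap\mu_j Y)=j+1$ runs over $\{1,2,\ldots,l\}$, which is the second claim.

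I expect essentially no obstacle beyond invoking Proposition \ref{prop1 h'h} correctly, since all the structural difficulty has already been absorbed into that proposition; the corollary is then the elementary parity argument that an \emph{odd} intersection dimension for $S$ determines the intersection dimensions of $\bar S$ and $Y$ uniquely. The only points needing care are to check that each $\mu_j$ gives a genuine non-trivial shift of $\bar S$ (it does, as $j<l$ forces $\bar S\cap\mu_j\bar S\subsetneq\bar S$, hence $d(\bar S,\mu_j\bar S)>0$ over $\F_{q^2}$) and that the odd values $1,3,\ldots,2l-1$ in $\{0,\ldots,2l\}$ indeed exhaust the residues needed to reach every $j\in\{0,\ldots,l-1\}$.
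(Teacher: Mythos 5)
Your proposal is correct and takes essentially the same route as the paper's proof: both select, for each $j\in\{0,\ldots,l-1\}$, a shift $\mu_j$ with $\dim_{\fq}(S\cap\mu_j S)=2j+1$ and use Proposition \ref{prop1 h'h} to force Case II, yielding $\dim_{\F_{q^2}}(\bar{S}\cap\mu_j\bar{S})=j$ and $\dim_{\F_{q^2}}(Y\cap\mu_j Y)=j+1$, from which both assertions follow at once. The only difference is presentational: you spell out the parity argument (an odd intersection dimension for $S$ rules out Cases I and III, since those produce $\F_{q^2}$-subspaces or even dimensions) that the paper invokes implicitly.
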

\begin{proof}
Since $S$ is an FWS code, then for any $i\in \{0,1,\dots, l-1\}$ there exists $\mu_i\in\fqn\setminus\fq$ such that $\dim_{\fq}(S\cap\mu_i S)=2i+1$. Case II) of Proposition \ref{prop1 h'h} occurs. It follows that 
\[
\dim_{\fq}(\Bar{S}\cap\mu_i\Bar{S})=2i \text{ and } \dim_{\fq}(Y\cap\mu_i Y)=2i+2,
\]
so that, $\mathrm{Orb}(\bar{S})$ is an FWS code in  $\mathcal{G}_{q^2}(\frac{n}2,l)$ and 
\[ \{ \dim_{\F_{q^2}}(Y\cap\mu_i Y) \colon i \in \{0,\ldots,l-1\} \}=\lbrace 1,\dots,l\rbrace.\]
\end{proof}

In the next proposition we will determine information on the weight distribution of the cyclic orbit code associated with the subspace $S=\langle 1, \lmb, \dots, \lmb^{l-1}\rangle_{\F_{q^2}}\oplus\lmb^{l}\fq$. 
The proof relies on the use of Lemma \ref{Lem:lemmamu} and Proposition \ref{prop1 h'h}. 

\begin{proposition}
    \label{prop3}
Let $n$ a  positive even  integer,  let $\lambda \in \fqn \setminus \F_{q^2}$ and  let $Y=\langle 1, \lmb, \dots, \lmb^{l}\rangle_{\F_{q^2}}$ with  $1\leq l < \frac{t}{2}$ where $t=[\F_{q^2}(\lambda): \F_{q^2}]$.  
Let $S$ be the following $\fq$-subspace of $Y$:
$$S=\langle 1, \lmb, \dots, \lmb^{l-1}\rangle_{\F_{q^2}}\oplus\lmb^{l}\fq \in \mathcal{G}_q(n,2l+1),$$
then 
\begin{itemize}
    \item [i)] $\dim_{\fq}(S\cap\mu S)=2l$ if and only if $ \mu\in\F_{q^2}\setminus\fq$.
    \item [ii)] $\dim_{\fq}(S\cap\mu S)=2(l-r), 0<r< l$ if and only if, up to replace $\mu$ by $\mu^{-1}$, we have  $\mu=\frac{p(\lambda)}{q(\lambda)}$ where 
    $p(\lambda)=\sum_{i=0}^{r-1}\alpha_i \lambda ^i+\lmb^{r}$, $q(\lambda)=\sum_{i=0}^{r}\beta_i\lmb^i$, with $p(x),q(x) \in \F_{q^2}[x]_{\leq r}$, $\mathrm{gcd}(p(x),q(x))=1$ and  $\beta_r\in\F_{q^2}\setminus\fq$; also in this case $S\cap\mu S=\Bar{S}\cap \mu \Bar{S}$.
    \item [iii)] $\dim_{\fq}(S\cap\mu S)=2(l-r)+1$, with $0<r<l$ if and only if, up to replace $\mu$ by $\mu^{-1}$, we have $\mu=\frac{p(\lambda)}{q(\lambda)}$ where 
    $p(\lambda)=\sum_{i=0}^{r-1}\alpha_i \lambda ^i+\lmb^{r}$, $q(\lambda)=\sum_{i=0}^{r}\beta_i\lmb^i$, with $p(x),q(x) \in \F_{q^2}[x]_{\leq r}$, $\mathrm{gcd}(p(x),q(x))=1$ and    $\beta_r\in\fq$; in this case $\dim_{\fq}(S\cap\mu S)=\dim_{\fq} (\Bar{S}\cap \mu \Bar{S})+1$.
\end{itemize}
Finally, for any $r\in \{1,\dots, l-1\}$  there exist elements  $\mu \in \fqn$ for which ii) is realized and for any $r\in \{1,\dots, l\}$  there exist elements  $\mu \in \fqn$ for which iii) is realized.
\end{proposition}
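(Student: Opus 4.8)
The plan is to translate every intersection $\dim_{\fq}(S\cap\mu S)$ into a statement about polynomials in $\lmb$ of degree strictly below $t$, so that the minimal polynomial of $\lmb$ over $\F_{q^2}$ never intervenes, and then to read off the three regimes from the behaviour of a single top coefficient. Since $S\subseteq Y\subseteq \F_{q^{2t}}:=\F_{q^2}(\lmb)$ and $2t\mid n$, I would first record the reduction: if $\mu\notin \F_{q^{2t}}$ then $\F_{q^{2t}}$ and $\mu\F_{q^{2t}}$ are distinct $\F_{q^{2t}}$-lines in $\fqn$, so $S\cap\mu S\subseteq \F_{q^{2t}}\cap\mu\F_{q^{2t}}=\{0\}$; hence only $\mu\in\F_{q^{2t}}$ produces a positive-dimensional intersection, and we may assume $\mu\in\F_{q^{2t}}$ throughout. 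I would also record the stabilizers $H(\bar S)=H(Y)=\F_{q^2}$: writing a stabilizing element as a polynomial $x(\lmb)$ of degree $<l$ (resp. $\le l$) and multiplying by $\lmb^{l-1}$ (resp. $\lmb^{l}$), the product has degree at most $2l-2<t$ (resp. $2l<t$), so no reduction occurs and a nonconstant $x$ would push a basis vector out of the space. This is exactly where the hypothesis $l<t/2$ is used.

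Given these reductions, part (i) is short. For $\mu\in\F_{q^2}\setminus\fq$ one has $\mu\bar S=\bar S$, hence $\bar S\subseteq S\cap\mu S$ and $\dim_{\fq}(S\cap\mu S)\ge 2l$; moreover $S\ne\mu S$ because $\mu\lmb^{l}\notin S$, so $S\cap\mu S\subsetneq S$ and $\dim_{\fq}(S\cap\mu S)=2l$. Conversely, if $\dim_{\fq}(S\cap\mu S)=2l$, the final clause of Proposition \ref{prop1 h'h} forces $\mu\in H(Y)\cup H(\bar S)=\F_{q^2}$, and $\mu\notin\fq$, since otherwise $\mu S=S$ would have dimension $2l+1$.

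For (ii) and (iii), assume $\mu\in\F_{q^{2t}}\setminus\F_{q^2}$ and write $\mu=p(\lmb)/q(\lmb)$ with $p,q\in\F_{q^2}[x]$ coprime and, after possibly replacing $\mu$ by $\mu^{-1}$, with $p$ monic of degree $r:=\max\{\deg p,\deg q\}$ and $\deg q\le r$. Lemma \ref{Lem:lemmamu} applied to $\bar S$ over $\F_{q^2}$ (legitimate since $\dim_{\F_{q^2}}\bar S=l<t/2$) gives $\dim_{\fq}(\bar S\cap\mu\bar S)=2(l-r)$ and $\bar S\cap\mu\bar S=p(\lmb)\langle 1,\ldots,\lmb^{l-r-1}\rangle_{\F_{q^2}}$ for $1\le r\le l-1$. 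The crucial step is to compute $S\cap\mu S$ directly: an element is $a(\lmb)$ with $a\in\F_{q^2}[x]$ of degree $\le l$ whose $\lmb^{l}$-coefficient lies in $\fq$ (an \emph{$S$-polynomial}), and membership in $\mu S$ forces a polynomial identity $q\,a=p\,b$ with $b$ another $S$-polynomial; all degrees are $<t$ by $l<t/2$, so this is a genuine identity in $\F_{q^2}[x]$. Coprimality yields $a=p\,g$, $b=q\,g$ with $\deg g\le l-r$, so $S\cap\mu S=\{p(\lmb)g(\lmb)\}$ where the only constraints are on $\gamma:=g_{l-r}$, namely that the $\lmb^{l}$-coefficients of $pg$ and $qg$ lie in $\fq$, i.e. $\gamma\in\fq$ and $\beta_r\gamma\in\fq$ (with $\beta_r$ the coefficient of $x^r$ in $q$). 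If $\beta_r\in\fq$ this cuts out a $1$-dimensional $\fq$-space of admissible $\gamma$, giving $\dim_{\fq}(S\cap\mu S)=2(l-r)+1$ and case (iii); if $\beta_r\in\F_{q^2}\setminus\fq$ it forces $\gamma=0$, giving $S\cap\mu S=\bar S\cap\mu\bar S$ of dimension $2(l-r)$ and case (ii). The same count applied to $Y$ (which imposes no constraint on the top coefficient, and so needs no appeal to the lemma in the possibly delicate range $l+1>t/2$) yields $\dim_{\fq}(Y\cap\mu Y)=2(l-r)+2$, confirming that we sit in Case II of Proposition \ref{prop1 h'h}. Recalling that $\dim_{\fq}(S\cap\mu S)>0$ forces $\mu\in\F_{q^{2t}}$ and $r\ge1$ forces $\mu\notin\F_{q^2}$, the parity of the intersection dimension singles out exactly the stated normal form, which settles both implications.

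Finally, for the existence statements I would exhibit explicit coprime pairs. For $r\in\{1,\ldots,l-1\}$ the choice $p=x^r$, $q=\beta_r x^r+1$ with $\beta_r\in\F_{q^2}\setminus\fq$ is coprime (as $q\equiv 1\pmod{x^r}$) and realizes (ii); for $r\in\{1,\ldots,l\}$ the choice $q=1$, $p=x^r$ realizes (iii), where for the boundary $r=l$ one checks by the $\deg p=l$ version of the same coefficient count that $\dim_{\fq}(S\cap\mu S)=1$. In each case $q(\lmb)\ne 0$ since $\deg q<t=[\F_{q^2}(\lmb):\F_{q^2}]$. I expect the main obstacle to be the computation of $S\cap\mu S$ in (ii)/(iii): one must justify that the divisibility $p\mid a$ is a legitimate polynomial statement (needing the degree bound $<t$ from $l<t/2$) and then isolate the single coefficient $\gamma=g_{l-r}$ whose two membership conditions $\gamma\in\fq$, $\beta_r\gamma\in\fq$ encode the entire dichotomy. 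The subtle consistency point is verifying that this dichotomy is invariant under $\mu\mapsto\mu^{-1}$ (in the equal-degree case $\beta_r\mapsto\beta_r^{-1}$, and $\fq$-membership is preserved), so the normalization chosen for $p,q$ does not affect the conclusion.
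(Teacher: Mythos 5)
Your proof is correct and follows essentially the same route as the paper's: both write $\mu=p(\lambda)/q(\lambda)$ with coprime polynomials normalized so that $p$ is monic of maximal degree $r$, exploit $l<t/2$ so that all products stay in degree $<t=[\F_{q^2}(\lambda):\F_{q^2}]$ (making every evaluation identity a genuine polynomial identity), settle part i) through the stabilizer clause of Proposition \ref{prop1 h'h}, decide between the even and odd intersection dimensions via the dichotomy $\beta_r\in\fq$ versus $\beta_r\in\F_{q^2}\setminus\fq$ on the top coefficient, and exhibit the same explicit witnesses (including $\mu=\lambda^l$ giving $S\cap\lambda^l S=\lambda^l\fq$) for the existence claims. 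The only immaterial deviation is that you compute $S\cap\mu S$ in one stroke from the identity $q\,a=p\,b$ and coprimality, where the paper instead sandwiches $\dim_{\fq}(S\cap\mu S)$ between $\dim_{\fq}(\Bar{S}\cap\mu\Bar{S})$ and $\dim_{\fq}(Y\cap\mu Y)$ using Proposition \ref{prop1 h'h} and Lemma \ref{Lem:lemmamu} and then tests for one extra element.
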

\begin{proof}
Let $\bar{S}=\langle 1, \lmb, \dots, \lmb^{l-1}\rangle_{\F_{q^2}}$ and  note that by Lemma  \ref{Lem:lemmamu} $\dim_{\F_{q^2}}(\bar{S}\cap \lambda \bar{S})=l-1=\dim_{\F_{q^2}}(\bar{S})-1$ and $\dim_{\F_{q^2}}(Y\cap \lambda Y)=l=\dim_{\F_{q^2}}(Y)-1$, hence $H(Y)=H(\bar{S})=\F_{q^2}$. Suppose that $\mu\in\F_{q^2}\setminus\fq$, then
$\mu S=\langle 1, \lmb, \dots, \lmb^{l-1}\rangle_{\F_{q^2}}\oplus\mu\lmb^{l}\fq$,
hence $S\cap\mu S=\Bar{S}$ and so $\dim_{\fq}(S\cap\mu S)=2l$. On the other hand, 
if  $\dim_{\fq}(S\cap\mu S)=2l$, then by Proposition \ref{prop1 h'h} we get either  $\mu\in H(Y)=\F_{q^2}$ or  $\mu \in H(\Bar{S})=\F_{q^2}$.  This proves \emph{i)}.\\

Now we study the other possible dimensions. By Lemma \ref{Lem:lemmamu}, for any $r\in\{1,\dots,l-1\}$, we have that 
\[
\dim_{\F_{q^2}}(\Bar{S}\cap\mu \Bar{S})=l-r \text{ if and only if }\mu=\frac{p(\lmb)}{q(\lmb)}
\]
with $p(x),q(x)\in\F_{q^2}[x]_{\leq l-1} $ such that $\gcd(p(x),q(x))=1$ and $r=\max\lbrace  \deg(p(x)),\deg(q(x))\rbrace$. Up to multiply by a non-zero factor in $\F_{q^2}$ and to replace $\mu$ by $\mu^{-1}$ (as $\dim_{\fq}(\Bar{S}\cap \mu \Bar{S})=\dim_{\fq}(\mu^{-1} \Bar{S} \cap \Bar{S})$), we may suppose 
\[
\mu=\frac{p(\lmb)}{q(\lmb)}=\frac{\alpha_0+\dots+\alpha_{r-1}\lambda^{r-1}+\lmb^{r}}{\beta_0+\dots+\beta_r\lmb^r}
\]
where $\alpha_i,\beta_j\in\F_{q^2}$ for any $i\in\lbrace 0, \dots,r-1\rbrace$ and $j\in\lbrace 0, \dots,r\rbrace$.\\
Then by Lemma \ref{Lem:lemmamu}  we get that $\dim_{\F_{q^2}}(Y\cap\mu Y)=l+1-r$ and 
\begin{equation}
Y\cap\mu Y= p(\lmb)\la 1,\lmb,\ldots,\lmb^{l-r} \ra_{\F_{q^2}}= \mu q(\lmb)\la 1,\lmb,\ldots,\lmb^{l-r} \ra_{\F_{q^2}},
\end{equation}

\begin{equation}
\Bar{S}\cap\mu \Bar{S}=p(\lmb)\la 1,\lmb,\ldots,\lmb^{l-r-1} \ra_{\F_{q^2}}=\mu q(\lmb)\la 1,\lmb,\ldots,\lmb^{l-r-1} \ra_{\F_{q^2}}.
\end{equation}
Thus 
\[
2(l-r)\leq \dim_{\fq}(S\cap\mu S)\leq 2(l-r+1).
\]
In particular, note that 
\[
p(\lmb)(a_0+a_1\lmb+\dots+a_{l-r}\lmb^{l-r})=
\mu q(\lmb)(a_0+a_1\lmb+\dots+a_{l-r}\lmb^{l-r})\in (S\cap\mu S)\setminus(\Bar{S}\cap\mu\Bar{S})
\]
if and only if $a_{l-r}\in\fq^*$ and $\beta_r a_{l-r}\in\fq^*$, that is if and only if $\beta_r\in\fq^*$.
Therefore, if $\beta_r\in\fq$ then
\[
S\cap\mu S=\lbrace p(\lmb)(a_0+a_1\lmb+\dots+a_{l-r}\lmb^{l-r})\colon a_0,\ldots, a_{l-r-1}\in\F_{q^2}, a_{l-r}\in\fq^*\rbrace
\]
and so $\dim_{\fq}(S\cap\mu S)=2(l-r)+1$.\\
If $\beta_r\in\F_{q^2}\setminus\fq$, then $S\cap\mu S=\Bar{S}\cap\mu\Bar{S}$ and so $\dim_{\fq}(S\cap\mu S)=2(l-r)$.\\
Finally, let  $r\in \{1,\dots, l-1\}$.
If 
\[
\mu=\frac{\lambda^r}{\beta_0+\beta_r \lambda^r} \,\, \,\text{with} \,\, \, \beta_r \in \F_{q^2} \setminus \fq \,\, \, \text{and} \,\, \,  \beta_0 \in \F_{q^2}^*,
\]
then  by \emph{ii)}  we get $\dim_{\fq} (S\cap \mu S)=2(l-r)$; whereas  if 
\[
\mu=\frac{\lambda^r}{\beta_0+ \lambda^r} \,\, \,\text{with}  \,\, \,  \beta_0 \in \F_{q^2}^*,
\]
then  by \emph{iii)}  we get $\dim_{\fq} (S\cap \mu S)=2(l-r)+1$. 
To conclude, we need to show that there exists $\mu \in \fqn$ such that $\dim_{\fq}(S\cap \mu S)=1$. Since $l<\frac{t}{2}$, it can be easily seen that 
$$    S\cap \lambda ^l S=\lambda^l \F_{q},$$ hence $\dim_{\fq}(S\cap\lambda^l S)=1$. 
\end{proof}

As a corollary of Proposition \ref{prop3} we get the following.

\begin{corollary}
\label{cor:FWSq2}
Let $n$ be an even integer, let $\lambda \in \fqn\setminus \F_{q^2}$ and let  $t=[\F_{q^2}(\lambda): \F_{q^2}]$. Then $\cC=\mathrm{Orb}(S)$, with $S=\langle 1,\lmb,\ldots,\lmb^{l-1}\rangle_{\F_{q^2}}\oplus \lmb^l\fq$, is an $FWS$ code with $m(\cC)=2t$ in $\mathcal{G}_q(n,2l+1)$ if and only if $2l+1\leq t$.
In particular, $\omega_2(\cC)=q$.
\end{corollary}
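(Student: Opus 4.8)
My plan is to reduce the FWS property to the attainability of intersection dimensions, prove the two implications, and then read off $\omega_2$. First I would record the basic data. Since $\bar S=\langle 1,\lmb,\dots,\lmb^{l-1}\rangle_{\F_{q^2}}$ has $\fq$-dimension $2l$ and $\lmb^l\notin\bar S$, one has $\dim_{\fq}S=2l+1=:k$, and $S\subseteq\F_{q^2}(\lmb)=\F_{q^{2t}}\subseteq\fqn$, whence $2t\mid n$. For $m(\cC)=2t$: from $S\subseteq\F_{q^{2t}}$ we get $\delta_{2t}(S)=1$, so $m(\cC)\le 2t$; and if $\alpha S\subseteq\F_{q^{m'}}$ for a subfield $\F_{q^{m'}}$, then $\alpha$, $\alpha c$ (all $c\in\F_{q^2}$) and $\alpha\lmb$ lie in $\F_{q^{m'}}$, forcing $\F_{q^2}(\lmb)=\F_{q^{2t}}\subseteq\F_{q^{m'}}$ and $m'\ge 2t$. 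Since $d(S,\mu S)=2k-2\dim_{\fq}(S\cap\mu S)$ with $0\le\dim_{\fq}(S\cap\mu S)\le 2l$, the code $\cC$ is FWS exactly when every value of $\{0,1,\dots,2l\}$ occurs as some $\dim_{\fq}(S\cap\mu S)$, $\mu\in\fqn^*$.

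Assume $2l+1\le t$, equivalently $l<t/2$, so that Proposition \ref{prop3} applies; its cases i)--iii) (with $r$ ranging over $\{1,\dots,l\}$) already realize all dimensions $1,\dots,2l$. It remains to realize dimension $0$, i.e.\ the top weight $2k$. Here I would take the explicit shift $\mu=\lmb^l/(\beta_0+\beta\lmb^l)$ with $\beta_0\in\F_{q^2}^*$ and $\beta\in\F_{q^2}\setminus\fq$ (well defined since $\lmb^l\notin\F_{q^2}$). Because $2l\le t-1$, the powers $1,\lmb,\dots,\lmb^{2l}$ are $\F_{q^2}$-independent, so comparing coefficients in $(\beta_0+\beta\lmb^l)v=\lmb^l w$ with $v,w\in S$ forces first $\bar v=0$, and then the top coefficient yields $\beta c=c'$ with $c,c'\in\fq$, which is impossible for $c\ne 0$ as $\beta\notin\fq$; hence $v=w=0$ and $S\cap\mu S=0$. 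Thus all of $\{0,\dots,2l\}$ occur and $\cC$ is FWS. (When $n>2t$ one could instead take any $\mu\in\fqn\setminus\F_{q^{2t}}$, since $\dim_{\fq}(S\cap\mu S)\ge1$ forces $\mu=s/s'\in\F_{q^{2t}}$; the role of the explicit $\mu$ is to cover $n=2t$.)

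For the converse I argue by contraposition. If $2l+1>t$, i.e.\ $l\ge t/2$, then for every $\mu\in\F_{q^{2t}}^*$ with $\mu S\ne S$ we have $S,\mu S\subseteq\F_{q^{2t}}$, so Grassmann's identity gives $\dim_{\fq}(S\cap\mu S)\ge 2(2l+1)-2t=2(2l+1-t)\ge 2$. Since $\dim_{\fq}(S\cap\mu S)\ge 1$ again forces $\mu\in\F_{q^{2t}}$, no $\mu\in\fqn^*$ gives $\dim_{\fq}(S\cap\mu S)=1$; hence $\omega_{2k-2}(\cC)=0$ and $\cC$ is not FWS. I expect this Grassmann bound to be the main point of the argument: it locates the missing weight uniformly for all $l\ge t/2$, including the boundary $l=t/2$ where $\bar S$ is still ``small enough'' and a dimension count on $\bar S$ alone would not suffice; the companion delicate step is the explicit dimension-$0$ shift in the case $n=2t$ above.

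Finally, for $\omega_2(\cC)$ the weight $2$ means $\dim_{\fq}(S\cap\mu S)=k-1=2l$, which by Proposition \ref{prop3} i) occurs exactly for the $q^2-q$ elements $\mu\in\F_{q^2}\setminus\fq$. As $\cC$ is FWS it has two consecutive nonzero weights, so Lemma \ref{lem:w2i>0} gives $[H(S):\fq]=1$, i.e.\ $H(S)=\fq$; thus $\mu S=\mu'S$ iff $\mu/\mu'\in\fq^*$, and $\omega_2(\cC)=(q^2-q)/(q-1)=q$.
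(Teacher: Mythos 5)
Your proof is correct, and it follows the same overall route as the paper: reduce FWS to the attainability of all intersection dimensions, use Proposition \ref{prop3} to realize the dimensions $1,\dots,2l$, get necessity of $2l+1\leq t$ from the parameter bound, and compute $\omega_2(\cC)=(q^2-q)/(q-1)$ once $H(S)=\fq$ is known. The differences are worth noting. For necessity, the paper simply cites Corollary \ref{cor:necessaryconditionsFWS}; your contrapositive Grassmann argument (if $k=2l+1>t$ then every $\mu\in\F_{q^{2t}}^*$ gives $\dim_{\fq}(S\cap\mu S)\geq 2(2l+1-t)\geq 2$, while $\mu\notin\F_{q^{2t}}$ gives intersection zero, so $\omega_{2k-2}(\cC)=0$) is exactly the mechanism behind that corollary, just inlined --- a matter of packaging. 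The genuinely valuable difference is in sufficiency: Proposition \ref{prop3} only produces the dimensions $1,\dots,2l$, and the paper's proof then asserts $\omega_{2i}(\cC)>0$ for all $i\in\{1,\dots,2l+1\}$ without saying how the top weight $2k$ (i.e.\ $S\cap\mu S=\{0\}$) is realized. When $n>2t$ this is immediate (any $\mu\notin\F_{q^{2t}}$ works, as you observe), but in the boundary case $n=2t$ with $t=2l+1$ a counting argument at the level of $Y=\langle S\rangle_{\F_{q^2}}$ fails: every $\mu\in\F_{q^{2t}}^*$ has $Y\cap\mu Y\neq\{0\}$, so one must exploit $S$ itself. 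Your explicit shift $\mu=\lmb^l/(\beta_0+\beta\lmb^l)$ with $\beta\in\F_{q^2}\setminus\fq$ does precisely this (the coefficient comparison is valid since $1,\lmb,\dots,\lmb^{2l}$ are $\F_{q^2}$-independent, and the top coefficient forces $\beta c=c'$ with $c,c'\in\fq$, hence $c=0$), so your write-up makes explicit, and correctly justifies, a step the paper's proof passes over silently. The paper's version buys brevity by citation; yours buys self-containedness and closes the dimension-zero case.
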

\begin{proof}
First note that $\fq(S)=\F_{q^2}(\lambda)$ and hence $m(\cC)=2t$. If $\cC$ is an FWS code, by Corollary \ref{cor:necessaryconditionsFWS} we get that $2l+1 \leq t$. Now, suppose $2l+1 \leq t$. Then   $l < \frac{t}{2}$ and hence, by ii) and iii) of Proposition \ref{prop3}, for any $r\in \{1,\dots, l-1\}$  there exist   $\mu, \mu' \in \fqn$ such that 
$$\dim_{\fq} (S\cap \mu S)=2(l-r) \,\,\,  \text{and} \,\, \, \dim_{\fq} (S\cap \mu' S)=2(l-r)+1.$$ Also, there exists $\mu''\in \fqn$ such that $\dim_{\fq} (S\cap \mu'' S)=1$ and $\dim_{\fq} (S\cap \rho S)=2l$ if and only if $\rho \in \F_{q^2} \setminus \fq$. This implies that $H(S)=\fq$, $\omega_{2i}(\C) > 0$ for any  $i\in \{1,\dots, 2l+1\}$ and $\omega_2(\C)=\frac{q^2-q}{q-1}=q$.    This concludes the proof.
\end{proof}

Note that in the examples constructed in Corollary \ref{cor:existFWSpolbas} the value of $\omega_2$ is $q(q+1)$, whereas in the examples constructed in Corollary \ref{cor:FWSq2} it is $q$. This allows us to state the following result.

\begin{corollary}
   The examples of FWS codes constructed in Corollary \ref{cor:existFWSpolbas}  and the examples constructed in Corollary \ref{cor:FWSq2} are not linearly isometric.
\end{corollary}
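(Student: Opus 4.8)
The plan is to compute the value of $\omega_2$ for each of the two constructions and observe that they differ, since $\omega_2$ is an invariant under linear isometry. Recall that a linear isometry preserves the entire distance/weight distribution: if $\psi\in\mathrm{GL}_n(q)$ satisfies $\psi(\cC_1)=\cC_2$, then $\psi$ induces a bijection between the codes preserving the subspace metric $d(U,V)=2k-2\dim_{\fq}(U\cap V)$, because $\dim_{\fq}(\psi(U)\cap\psi(V))=\dim_{\fq}(U\cap V)$ for any $\fq$-linear bijection $\psi$. Consequently $\omega_{2i}(\cC_1)=\omega_{2i}(\cC_2)$ for every $i$, and in particular the single scalar $\omega_2$ is a linear-isometry invariant.

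The two explicit values have already been extracted in the preceding corollaries, so the argument is essentially a comparison. For the family produced in Corollary \ref{cor:existFWSpolbas}, the codes arise from a polynomial basis $S=\langle 1,\lambda,\dots,\lambda^{k-1}\rangle_{\fq}$, and by Theorem \ref{thm:polbasiscomplete} (lowest weight, $i=1$) together with the identification of $H(S)=\fq$ one reads off $\omega_2(\cC)=(q+1)q=q^2+q$. For the family produced in Corollary \ref{cor:FWSq2}, coming from $S=\langle 1,\lambda,\dots,\lambda^{l-1}\rangle_{\F_{q^2}}\oplus\lambda^l\fq$, the same corollary records $\omega_2(\cC)=q$. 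I would simply invoke these two computed values.

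The comparison is then immediate: since $q^2+q>q$ for every prime power $q$ (indeed $q^2+q-q=q^2>0$), the two values of $\omega_2$ are distinct. Were the two families linearly isometric, their weight distributions, and in particular their values of $\omega_2$, would have to coincide by the invariance noted above; this contradiction shows no code from the first family can be linearly isometric to any code from the second. Hence the two constructions yield genuinely non-isometric FWS codes.

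I do not expect any real obstacle here, as the heavy lifting has already been done in Theorem \ref{thm:polbasiscomplete} and Corollaries \ref{cor:existFWSpolbas} and \ref{cor:FWSq2}; the only point requiring a line of justification is that $\omega_2$ is preserved under linear isometry, which follows directly from Definition \ref{def:frobiso} and the fact that an $\fq$-linear bijection preserves intersection dimensions. The mild subtlety worth flagging is that the two families may live in Grassmannians with different $k$ (namely $k$ versus $2l+1$); if one wishes to compare only codes sharing the same ambient parameters $(n,k,q)$, the dimension constraint $k=2l+1$ forces the comparison to be between the polynomial-basis family with odd $k$ and the second family, and the $\omega_2$ discrepancy still settles the matter. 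I would state the result for codes with matching parameters and note that $\omega_2$ distinguishes them.
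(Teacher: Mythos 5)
Your proposal is correct and follows essentially the same route as the paper, which (in the remark immediately preceding the corollary) compares the values $\omega_2=q(q+1)$ from Corollary \ref{cor:existFWSpolbas} with $\omega_2=q$ from Corollary \ref{cor:FWSq2} and uses the invariance of the weight distribution under linear isometry. Your additional remark about matching the ambient parameters $(n,k,q)$ via $k=2l+1$ is a reasonable point of care that the paper leaves implicit.
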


In the following sections, we will prove that actually the examples of FWS codes given in Corollary \ref{cor:existFWSpolbas} and Corollary \ref{cor:FWSq2} are the only ones possible (up to the action of the group of the linear isometries).

\section{Critical pairs and a classification of cyclic subspace codes with minimum distance two}\label{sec:dist2}

In the study the weight distribution of a cyclic subspace code, it is crucial to understand the intersections of a subspace with its shifts and hence to evaluate its multiplicative behavior within the field. Such behavior is examined in certain problems that serve as linear analogues to some additive combinatorial questions. Indeed, critical pairs are classically studied in theory of combinatorial number theory and are defined as a pair of sets of a group $(G,\cdot)$ with the property that their product set has size smaller than the sum of their sizes. This is connected with Kneser's Addition Theorem and the Cauchy-Davenport inequality. Recently, linear analogues of these results have been investigated. The linear analogue of Kneser's Addition Theorem has been proved by Hou, Leung and Xiang in \cite{HouLeungXiang2002} for separable extensions fields, motivated by a problem on different sets. 

In the next we will use the following notation: let $S,T \subseteq \LL$, where $\LL$ is any field and $\F$ is a subfield of $\LL$, then
\[ ST=\{ s\cdot t \colon s \in S, t \in T\}, \]
and $\langle S T \rangle_{\F}$ denotes the $\F$-span in $\LL$ of $ST$.

\begin{theorem}\cite[Theorem 2.4]{HouLeungXiang2002} \label{teo:bachocserrazemorext}
Let $\LL/\F$ be a separable field extension and let $S$ and $T$ be  $\F$-subspaces of $\LL$ of finite positive dimension. Then

    \[
    \dim_{\F}(\langle S T \rangle_{\F})\geq \dim_{\F}(S)+\dim_{\F}(T)-\dim_{\F}(H(ST)),
    \]
    where $H(ST)=\{x\in \LL \,:\, xST\subseteq ST\}$.

\end{theorem}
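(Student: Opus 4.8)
The plan is to prove this the same way one proves Kneser's addition theorem, replacing the ambient abelian group by the multiplicative structure of $\LL$ and the cardinalities of sets by the $\F$-dimensions of subspaces. Write $W=\langle ST\rangle_{\F}$ and read $H(ST)$ as the stabilizer $\{x\in\LL : xW=W\}$. First I would record that $H(ST)$ is a subfield of $\LL$ containing $\F$: it is closed under addition and multiplication and contains $1$, and because $W$ is finite-dimensional over $\F$, multiplication by a nonzero $x$ with $xW\subseteq W$ is an injective, hence bijective, $\F$-endomorphism of $W$, so $x^{-1}W=W$ and $x^{-1}\in H(ST)$. In particular $W$ is an $H(ST)$-vector space. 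Since replacing $S,T$ by $s_0^{-1}S,\,t_0^{-1}T$ for nonzero $s_0\in S,\ t_0\in T$ rescales $W$ by a unit and alters neither the dimensions involved nor $H(ST)$, I may normalize so that $1\in S\cap T$.

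The engine of the proof is the linear analogue of the Dyson $e$-transform. For $e\in\LL^{*}$ set $S_e=S+eT$ and $T_e=T\cap e^{-1}S$. Two properties make it useful. The dimension count is preserved: using $\dim_{\F}(T\cap e^{-1}S)=\dim_{\F}(eT\cap S)$ together with Grassmann's identity $\dim_{\F}(S+eT)+\dim_{\F}(S\cap eT)=\dim_{\F}S+\dim_{\F}T$, one gets $\dim_{\F}S_e+\dim_{\F}T_e=\dim_{\F}S+\dim_{\F}T$. And the product contracts into $W$: if $u\in T_e$ then $u\in T$ and $eu\in S$, so $S_e\,u=Su+(eu)T\subseteq\langle ST\rangle_{\F}=W$, whence $\langle S_eT_e\rangle_{\F}\subseteq W$.

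I would then argue by induction on $\dim_{\F}T$. In the base case $\dim_{\F}T=1$, say $T=\F t$, we have $W=tS$, so $\dim_{\F}W=\dim_{\F}S\geq \dim_{\F}S+1-\dim_{\F}H(ST)$ because $\dim_{\F}H(ST)\geq 1$. For the inductive step, when $T\not\subseteq S$ the choice $e=1$ (legitimate since $1\in S\cap T$) yields $0<\dim_{\F}T_e=\dim_{\F}(S\cap T)<\dim_{\F}T$, so the inductive hypothesis applies to the pair $(S_e,T_e)$ and gives $\dim_{\F}\langle S_eT_e\rangle_{\F}\geq \dim_{\F}S_e+\dim_{\F}T_e-\dim_{\F}H(S_eT_e)=\dim_{\F}S+\dim_{\F}T-\dim_{\F}H(S_eT_e)$; combined with $\langle S_eT_e\rangle_{\F}\subseteq W$ this bounds $\dim_{\F}W$ from below. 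The degenerate case $T\subseteq S$, together with the symmetric situation where $T$ lies inside the stabilizer, must be treated separately, reducing to the product of $S$ with a subfield.

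The main obstacle is exactly the step that the additive proof also finds hardest: to conclude, I must show that the transform does not enlarge the stabilizer, that is $\dim_{\F}H(S_eT_e)\le \dim_{\F}H(ST)$, so that the estimate obtained for $(S_e,T_e)$ is at least as strong as the one sought for $(S,T)$. Establishing this requires a delicate case analysis of how the stabilizer subfield evolves under $S\mapsto S+eT$ and $T\mapsto T\cap e^{-1}S$, and it is precisely here that the hypothesis that $\LL/\F$ is \emph{separable} becomes indispensable. One concrete route is to base change to a splitting field, over which a separable $\LL$ decomposes as a product of fields and the multiplicative inequality can be read off coordinatewise from the additive Kneser theorem; the decomposition into a product of fields is exactly the feature lost for inseparable extensions, where the inequality is known to fail, so the hypothesis cannot be dropped. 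I therefore expect this stabilizer-control argument, rather than the transform identities, to be the genuine crux of the proof.
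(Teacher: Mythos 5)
The paper gives no proof of this statement: it is imported verbatim from \cite{HouLeungXiang2002}, so there is no internal argument to compare with, and your attempt has to stand on its own. As a proof it has a genuine gap, and it is exactly the one you flag yourself. Your preparatory steps are all correct and routine: $H(ST)$ is a subfield (finite-dimensional over $\F$, since $H(ST)\,w\subseteq \langle ST\rangle_{\F}$ for any nonzero $w$), the transform $S_e=S+eT$, $T_e=T\cap e^{-1}S$ satisfies $\dim_{\F}S_e+\dim_{\F}T_e=\dim_{\F}S+\dim_{\F}T$ and $\langle S_eT_e\rangle_{\F}\subseteq \langle ST\rangle_{\F}$, and the base case $\dim_{\F}T=1$ is trivial. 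But the induction does not close. Applying the inductive hypothesis to $(S_e,T_e)$ yields $\dim_{\F}\langle ST\rangle_{\F}\geq \dim_{\F}S+\dim_{\F}T-\dim_{\F}H(S_eT_e)$, where $H(S_eT_e)$ stabilizes the possibly \emph{strictly smaller} span $\langle S_eT_e\rangle_{\F}$, not $\langle ST\rangle_{\F}$; there is no general monotonicity $\dim_{\F}H(S_eT_e)\leq\dim_{\F}H(ST)$, and even in the additive prototype Kneser's theorem is not proved by showing the stabilizer cannot grow under the $e$-transform. The hard case, in which every admissible transform leaves a large stabilizer, is handled in Kneser-type arguments by a separate structural (periodic-decomposition) analysis, and that analysis is the actual content of the theorem. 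Since you explicitly defer it, what you have written is the easy half of any Kneser-type proof, not a proof.

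A second, factual problem: you assert that the inequality ``is known to fail'' for inseparable extensions, so that separability cannot be dropped. The opposite is true, and the paper itself says so: Bachoc, Serra and Z\'emor \cite{BSZ2018} removed the separability hypothesis, so the inequality holds for arbitrary field extensions; the separability assumption in \cite{HouLeungXiang2002} is an artifact of their method, not an obstruction in the statement. This undercuts the justification you give for the proposed fallback via base change to a splitting field, and that route needs more care in any event: after tensoring, the relevant statement concerns spans of coordinatewise products of subspaces of a product ring $\overline{\F}^{\,n}$, which is not the additive Kneser theorem, and neither stabilizers nor spans obviously interact well with this base change. If you want a complete argument, the place to look is the actual proof in \cite{HouLeungXiang2002} (or the later, separability-free treatment in \cite{BSZ2018}), whose substance lies precisely in the stabilizer analysis you left open.
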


In \cite{HouLeungXiang2002}, the authors started the study of the linear extensions of classical addition theorems. 
Then Bachoc, Serra and Z\'emor in \cite{BSZ2018} extended the result in \cite{HouLeungXiang2002} by removing the separability condition. 
Vosper in \cite{Vosper} proved an inverse statement of the Cauchy-Davenport inequality in additive theory by providing examples of pair of sets attaining the equality in the Cauchy-Davenport inequality, which are known as \textbf{critical pairs}.
Bachoc, Serra and Z\'emor in \cite{BSZ2015} proved a linear analogue of Vosper's result in the case of prime degree extensions of finite fields, providing a classification results for {\bf critical pairs} of $\fqn$, with $n$ prime, i.e. pairs of $\fq$-subspaces $S$ and $T$ of $\fqn$  satisfying the equality
\[
    \dim_{\fq}(\langle S T \rangle_{\fq})= \dim_{\fq}(S)+\dim_{\fq}(T)-1.
    \]

A characterization of critical pairs $(S,T)$ in extension fields $\LL/\F$ where $T$ is a two-dimensional $\F$-subspace of $\LL$ is given in the following result. 

\begin{theorem} \label{thm:extalg}\cite[Proposition 6.3]{NPSZminsize}\label{thm:classcritpairs2dim}
Let $\LL/\F$ be a field extension and let $S$ be an $\F$-subspace of $\LL$ of finite positive dimension $k$ and let $T=\langle 1,\lambda\rangle_{\F}$, for some $\lambda \in \LL\setminus \F$, such that $k+2 \leq \dim_{\F}(\LL)$ and $\lambda$ is algebraic over $\F$.
Let $\KK=\F(\lambda)$ and let $t=\dim_{\F}(\KK)$.
Suppose that $(S,T)$ is a critical pair, i.e. 
$\dim_{\F}(\langle S T \rangle_{\F})=\dim_{\F}(S)+1.$
Then one of the following cases occurs:
\begin{itemize}
    \item $t> k$ and $S=b \langle 1,\lambda,\ldots,\lambda^{k-1}\rangle_{\F}$, for some $b \in \LL^*$;
    \item $t\leq k-1$,  $S=\overline{S}\oplus b\langle 1,\lambda,\ldots,\lambda^{m-1}\rangle_{\F}$, where $\overline{S}$ is a $\KK$-subspace of dimension $\ell\geq 0$, $b \in \LL^*$, $b \KK \cap \overline{S}=\{0\}$ and   $k=t\ell+m$ with $0<m<t$.
\end{itemize}
\end{theorem}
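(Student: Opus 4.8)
The plan is to first reduce the critical-pair condition to a statement about a single cyclic shift. Since $T=\la 1,\lambda\ra_{\F}$, we have $\langle ST\rangle_{\F}=\langle S\cup\lambda S\rangle_{\F}=S+\lambda S$, so by the Grassmann formula the hypothesis $\dim_{\F}(\langle ST\rangle_{\F})=k+1$ is equivalent to $\dim_{\F}(S\cap\lambda S)=k-1$. Thus the problem becomes: classify the $k$-dimensional $\F$-subspaces $S$ for which $S$ and $\lambda S$ share a hyperplane. I record the two hyperplanes of $S$ that will drive the induction, namely $W:=S\cap\lambda S$ and $\lambda^{-1}W=S\cap\lambda^{-1}S$, both of $\F$-dimension $k-1$.

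Next I would set up the ascending tower $U_j:=\sum_{i=0}^{j}\lambda^i S$. Writing $U_j=S+\lambda U_{j-1}$ and using $S\cap\lambda U_{j-1}\supseteq S\cap\lambda S=W$, the Grassmann formula gives $\dim_{\F}U_j\le \dim_{\F}U_{j-1}+1$; moreover, once an increment vanishes the space $U_j$ is $\lambda$-invariant and the tower stabilizes. Hence the dimensions run $k,k+1,\dots,k+d$ and then stay constant at $U_d=\la S\ra_{\KK}=:Y$, the smallest $\KK$-subspace containing $S$. In particular $\dim_{\F}Y=k+d$ is a multiple of $t$, which forces $t\mid(k+d)$ and yields the coarse invariant $r:=\dim_{\KK}Y=(k+d)/t$. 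This increment lemma will be reused below for $S^{\perp}$.

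The base case is $t>k$, where $Y=b\KK$ is a single $\KK$-line ($r=1$), and here I argue by induction on $k$. The key point is that $W=S\cap\lambda S$ is \emph{again} a critical pair: passing to duals via the trace form (Remark \ref{rem:dual}), for which $\lambda$ is self-adjoint so that $(\lambda^i S)^{\perp}=\lambda^{-i}S^{\perp}$, the identity $(S\cap\lambda S\cap\lambda^2 S)^{\perp}=S^{\perp}+\lambda^{-1}S^{\perp}+\lambda^{-2}S^{\perp}$ together with the increment lemma applied to $S^{\perp}$ shows $\dim_{\F}(W\cap\lambda W)\in\{k-1,k-2\}$, the value $k-1$ occurring exactly when $W$ is $\KK$-invariant. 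Since a nonzero $\KK$-subspace has $\F$-dimension at least $t>k-1$, here $W$ cannot be $\KK$-invariant, so $\dim_{\F}(W\cap\lambda W)=k-2$ and $\la W\ra_{\KK}=Y$ is still a single $\KK$-line. By induction $W=c\la 1,\lambda,\dots,\lambda^{k-2}\ra_{\F}$, and since $W\neq\lambda^{-1}W$ the two hyperplanes span $S=W+\lambda^{-1}W=b\la 1,\lambda,\dots,\lambda^{k-1}\ra_{\F}$ with $b=c\lambda^{-1}$; tracking these polynomial bases under duality is exactly Proposition \ref{prop:dualbasisbasepol}.

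For the remaining case $t\le k-1$ I would run the same induction on $k$, again reducing to $W=S\cap\lambda S$. If $W$ is not $\KK$-invariant the reconstruction $S=W+\lambda^{-1}W$ goes through verbatim and peels one degree off the polynomial part; if $W$ is $\KK$-invariant (the terminal situation, forced exactly when $t\mid(k-1)$, i.e. $m=1$) then $S=W\oplus\F b=\overline{S}\oplus b\la 1\ra_{\F}$, and $b\KK\cap\overline{S}=\{0\}$ follows because $\overline{S}$ is stable under multiplication by $\KK^{*}$. Tracking $k=t\ell+m$ with $0<m<t$ through the induction then yields $S=\overline{S}\oplus b\la 1,\lambda,\dots,\lambda^{m-1}\ra_{\F}$ with $\dim_{\KK}\overline{S}=\ell$ (and the divisible case $t\mid k$ produces a $\KK$-invariant $S$, hence no critical pair). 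The main obstacle is precisely the dichotomy at the heart of the induction — proving that $\dim_{\F}(S\cap\lambda S\cap\lambda^2 S)=k-2$ whenever $W$ is not a $\KK$-subspace, equivalently that $Y=\la S\ra_{\KK}$ is as small as possible ($r=\lfloor k/t\rfloor+1$). The clean proof of this passes through the trace-form duality above, which is immediate for finite (hence separable) extensions but, in the general possibly inseparable setting of $\LL/\F$, must be replaced by the nonseparable linear Kneser machinery underlying Theorem \ref{teo:bachocserrazemorext}.
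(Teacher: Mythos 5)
A preliminary remark: this paper does not prove the statement at all — it is imported verbatim as \cite[Proposition 6.3]{NPSZminsize} — so there is no in-paper proof to compare against, and your proposal must be judged as a self-contained argument. As such, its skeleton is sound and checks out. The reduction of criticality to $\dim_{\F}(S\cap\lambda S)=k-1$ is correct since $\langle ST\rangle_{\F}=S+\lambda S$; the increment lemma for $U_j=\sum_{i=0}^{j}\lambda^iS$ is valid, including stabilization at $\langle S\rangle_{\KK}$ (a $\lambda$-stable finite-dimensional space is an $\F[\lambda]$-module, and $\F[\lambda]=\KK$ because $\lambda$ is algebraic); the reconstruction $S=W+\lambda^{-1}W$ from the two distinct hyperplanes $W=S\cap\lambda S$ and $\lambda^{-1}W=S\cap\lambda^{-1}S$ works whenever $W$ is not $\KK$-invariant; in the terminal case $S=W\oplus b\F$ the condition $b\KK\cap W=\{0\}$ does follow from $\KK$-stability of $W$; and the exclusion of $t\mid k$ is genuine, since peeling with $m'=t-1$ (or with $k-1<t=k$) rebuilds $S$ as $\overline{W}\oplus c\lambda^{-1}\KK$, a $\KK$-subspace, whence $\lambda S=S$ contradicts criticality. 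Run as an induction on $k$ with the strengthened statement ``critical implies $t\nmid k$ plus the two normal forms,'' all branches close.

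The one blemish is precisely the step you flag as the main obstacle: the dichotomy that either $W$ is $\KK$-invariant or $\dim_{\F}(W\cap\lambda W)=k-2$. You derive it via trace-form duality, using $(\lambda^iS)^{\perp}=\lambda^{-i}S^{\perp}$ and the increment lemma for $S^{\perp}$; this is perfectly fine in the paper's setting $\LL=\fqn$, but not as written for the general statement, which allows $\LL/\F$ infinite-dimensional or inseparable, where your $S^{\perp}$ is unavailable and your fallback (``replace by the nonseparable Kneser machinery underlying Theorem \ref{teo:bachocserrazemorext}'') is a gesture rather than an argument. The irony is that no duality and no Kneser-type input are needed at all: $W=S\cap\lambda S$ and $\lambda W=\lambda S\cap\lambda^2S$ are both $(k-1)$-dimensional subspaces of the $k$-dimensional space $\lambda S$, so $\dim_{\F}(W\cap\lambda W)=\dim_{\F}(S\cap\lambda S\cap\lambda^2S)\geq k-2$ by Grassmann, with the value $k-1$ occurring exactly when $W=\lambda W$, i.e.\ when $W$ is a $\KK$-subspace. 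Substituting this two-hyperplane observation for your duality detour makes the proof complete, elementary, and valid for arbitrary field extensions; it also shows that your argument never actually invokes the hypothesis $k+2\leq\dim_{\F}(\LL)$ nor Theorem \ref{teo:bachocserrazemorext}, and it delivers the extra conclusion $t\nmid k$ that the statement leaves implicit.
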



We can use the above theorem to obtain that the one-orbit cyclic subspace codes with minimum distance two can be divided into two families, according to the different values of the invariant $\delta_t$.

\begin{theorem}\label{thm:class}
    Let $\C=\mathrm{Orb}(S) \subseteq \mathcal{G}_q(n,k)$ with $\omega_2(\C)>0$. Then there exists $\lambda \in \fqn^*$ such that $m(\C)=[\fq(\lambda)\colon \fq]$ and one of the following occurs
    \begin{itemize}
        \item[i)] $S=b \langle 1,\lmb,\ldots,\lmb^{k-1}\rangle_{\fq}$, for some $b \in \fqn^*$ if  $k<t$;
        \item[ii)]  $S=\overline{S}\oplus b\langle 1,\lmb,\ldots,\lmb^{m-1}\rangle_{\fq}$ if  $k\geq t+1$,  where $\overline{S}$ is an $\F_{q^t}$-subspace of dimension $\ell>0$, $b \in \fqn^*$, $b \F_{q^t} \cap \overline{S}=\{0\}$, $k=t\ell+m$ with $0<m<t$.
    \end{itemize}
    In particular, in Case i) we have $\delta_t(\cC)=1$ and in Case ii) we have $\delta_t(\cC)= \ell+1\geq 2$.
\end{theorem}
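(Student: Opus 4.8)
The plan is to deduce everything from the classification of two-dimensional critical pairs, Theorem~\ref{thm:classcritpairs2dim}. First I would translate the hypothesis: $\omega_2(\C)>0$ means, by Definition~\ref{dist/weidistri}, that there is some $\alpha\in\fqn^*$ with $d(S,\alpha S)=2$, and since $\alpha S\neq S$ necessarily $\alpha\notin\fq$. Now $d(S,\alpha S)=2$ is the same as $\dim_{\fq}(S\cap\alpha S)=k-1$, hence as $\dim_{\fq}(S+\alpha S)=k+1$. Putting $\lambda:=\alpha$ and $T:=\langle 1,\lambda\rangle_{\fq}$, one has $\langle ST\rangle_{\fq}=S+\lambda S$, so that
\[
\dim_{\fq}(\langle ST\rangle_{\fq})=k+1=\dim_{\fq}(S)+\dim_{\fq}(T)-1,
\]
i.e. $(S,T)$ is a critical pair. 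The degenerate cases I would set aside at once: $k=n$ forces $S=\fqn$, which has no shift at distance $2$; $k=1$ is immediate; and $k=n-1$, where Theorem~\ref{thm:classcritpairs2dim} does not apply since it requires $k+2\le n$, I would handle through the duality $d(S,\alpha S)=d(S^{\perp},\alpha S^{\perp})$ of Remark~\ref{rem:dual}, using $\dim_{\fq}S^{\perp}=1$.

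With $2\le k\le n-2$ I would then invoke Theorem~\ref{thm:classcritpairs2dim} over $\LL=\fqn$, $\F=\fq$: the element $\lambda$ is automatically algebraic, and with $t=[\fq(\lambda):\fq]$ the classification leaves exactly the two regimes $t>k$ and $t\le k-1$, the value $t=k$ being excluded for critical pairs. The regime $t>k$ gives $S=b\langle 1,\lambda,\dots,\lambda^{k-1}\rangle_{\fq}$, which is case~(i) under the condition $k<t$; the regime $t\le k-1$ gives $S=\overline S\oplus b\langle 1,\lambda,\dots,\lambda^{m-1}\rangle_{\fq}$ with $\overline S$ an $\F_{q^t}$-subspace, $b\F_{q^t}\cap\overline S=\{0\}$ and $k=t\ell+m$, $0<m<t$, which is case~(ii) under $k\ge t+1$. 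Here a one-line check gives $\ell\ge 1$: if $\ell=0$ then $k=m<t$, contradicting $k\ge t+1$, so $\overline S\neq\{0\}$.

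The values of $\delta_t(\C)$ then drop out of the explicit forms. In case~(i), since $\lambda\in\F_{q^t}$ we have $\langle 1,\lambda,\dots,\lambda^{k-1}\rangle_{\F_{q^t}}=\F_{q^t}$, whence $\langle S\rangle_{\F_{q^t}}=b\,\F_{q^t}$ and $\delta_t(\C)=1$. In case~(ii), $\overline S$ is already $\F_{q^t}$-linear and the second summand spans $b\,\F_{q^t}$ over $\F_{q^t}$, so $\langle S\rangle_{\F_{q^t}}=\overline S\oplus b\,\F_{q^t}$ and, using $b\F_{q^t}\cap\overline S=\{0\}$, $\delta_t(\C)=\ell+1\ge 2$.

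The step I expect to cost the most is matching the field degree $t$ with the intrinsic invariant $m(\C)$. In case~(i) this is clean: since $k\ge 2$, any subfield $\F_{q^s}$ containing a cyclic shift $\gamma S$ contains both $\gamma b$ and $\gamma b\lambda$, hence $\lambda$ and therefore $\fq(\lambda)=\F_{q^t}$; thus $\delta_s(\C)=1$ forces $t\mid s$, giving $m(\C)=t$. In case~(ii) the matching is the genuinely delicate point, since there $\delta_t(\C)\ge 2$ and one must analyse precisely which subfields of $\fqn$ can contain a cyclic shift of $\overline S\oplus b\langle 1,\dots,\lambda^{m-1}\rangle_{\fq}$, ruling out the intermediate divisors of $n$ between the field of $\lambda$ and the field of $\overline S$; this subfield analysis is where I would concentrate the argument.
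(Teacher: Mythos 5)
Your core argument is precisely the paper's proof. The paper, too, translates $\omega_2(\C)>0$ into $\dim_{\fq}(S\cap \lambda S)=k-1$, observes that $S+\lambda S=\langle S\langle 1,\lambda\rangle_{\fq}\rangle_{\fq}$ has dimension $\dim_{\fq}(S)+1$, declares $(S,\langle 1,\lambda\rangle_{\fq})$ a critical pair, and concludes by citing Theorem \ref{thm:classcritpairs2dim} --- nothing more. In particular the paper never discusses the hypothesis $k+2\leq n$ of Theorem \ref{thm:classcritpairs2dim}, so your handling of the boundary cases ($k=n$ vacuous, $k=1$ direct, $k=n-1$ via duality and Proposition \ref{prop:dualbasisbasepol}) is added diligence rather than a divergence, and it does work: for a hyperplane one gets $S=c\langle 1,\lambda,\ldots,\lambda^{n-2}\rangle_{\fq}$ for a suitable generator $\lambda$ of $\fqn$, i.e.\ case i) with $t=n$. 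Your checks that $\ell\geq 1$, that $t=k$ is excluded, the computation of $\delta_t(\C)$, and the subfield argument giving $m(\C)=t$ in case i) all coincide with what the paper leaves implicit.

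The step you defer --- proving $m(\C)=[\fq(\lambda)\colon\fq]$ in case ii) by ``ruling out the intermediate divisors'' --- is where the proposal cannot be completed, because that equality is false in case ii). Take $S=\langle 1,\lmb_0,\ldots,\lmb_0^{l-1}\rangle_{\F_{q^2}}\oplus \lmb_0^{l}\fq$ from family (2) of Theorem \ref{thm:maintheorem}: by Proposition \ref{prop3}~i), the elements $\mu$ with $d(S,\mu S)=2$ are exactly those of $\F_{q^2}\setminus\fq$, so every $\lambda$ produced by the critical-pair reduction satisfies $[\fq(\lambda)\colon\fq]=2$ and every admissible decomposition in case ii) has $t=2$; yet $m(\C)=2\,[\F_{q^2}(\lmb_0)\colon\F_{q^2}]\geq 4$ (this is the first line of the proof of Corollary \ref{cor:FWSq2}). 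The displayed equality in the statement of Theorem \ref{thm:class} must therefore be read as the \emph{definition} of $t$ (which is otherwise undefined in the statement), with $m(\C)=t$ a substantive claim only in case i) --- and indeed that is the only case in which the paper ever uses it, in the proof of Theorem \ref{thm:maintheorem}. Consistently, the paper's own proof makes no attempt at any $m(\C)$ identity. So the fix is not a deeper subfield analysis but a removal: keep your case-i) argument, drop the planned case-ii) one, and your proof is then complete, and more careful than the paper's about the range of $k$.
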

\begin{proof}
    Since $\omega_2(\C)>0$, then there exists $\lambda \in \fqn\setminus \fq$ such that $d( S,\lambda S)=2$, implying that 
    \[ \dim_{\fq}( S\cap \lambda S)=k-1. \]
    Observe that
    \[  S + \lambda S=  \la S\la 1,\lambda\ra_{\fq}\ra_{\fq}\] 
    and by the hypothesis, $\dim_{\fq}(S+\lambda S)=\dim_{\fq}(S)+1$.
    Therefore, $(S, \la 1,\lambda\ra_{\fq})$ is a critical pair and so, using Theorem \ref{thm:classcritpairs2dim} we get the assertion.
\end{proof}

\section{Classification of Full Weight Spectrum codes}\label{sec:proofnoFWScode}

This section is devoted to the classification of FWS one-orbit cyclic subspace codes. Let us start by observing that an FWS code must be considered within the class of one-orbit cyclic subspace codes with minimum distance two, as there cannot be any gaps in the weight distribution. In Theorem \ref{thm:class}, we divided the family of one-orbit cyclic subspace codes into two families i) and ii). Regarding Family i), we have already characterized in Corollary \ref{cor:existFWSpolbas} the FWS codes in this family. Therefore we need to analyze the FWS codes given by the subspaces of Family ii). It will be proved that a cyclic subspace code as in $ii)$ of 
Theorem \ref{thm:class} is an FWS code only when $t=2$ and $H(\langle S\rangle_{\F_{q^2}})=\F_{q^2}$, which will allow us to prove that this happens if and only if the code is defined by a subspace of the following form 
\[
S=\langle 1,\lmb,\ldots,\lmb^{l-1}\rangle_{\F_{q^2}}\oplus \lmb^l\fq,
\]
which indeed defines an FWS code as proved in Corollary \ref{cor:FWSq2}.
The proof of the classification result will proceed by some steps. Subsection \ref{subsec:proofnoFWS} is dedicated to the proof of the following theorem, which states that, the codes $\mathrm{Orb}(S)$ in Family ii), unless $\dim_{\fq}(S)=2l+1$ and $H(\langle S\rangle_{\fqt})=\F_{q^2}$, are not FWS codes. 


\begin{theorem}\label{thm:noFWScode}
    Let $\lambda \in \fqn \setminus \fq$ and $\fq(\lambda)=\fqt$. Let
    $$S=\Bar{S}\oplus b\langle 1, \lmb, \dots, \lmb^{m-1}\rangle_{\fq},$$
    where $\overline{S}$ is an $\F_{q^t}$-subspace of $\mathcal{G}_q(n,k)$ dimension $\ell>0$ of $\fqn$, $b \in \fqn \setminus \Bar{S} $ and  $0<m<t$.  If one of the following two conditions are fulfilled 
    \begin{itemize}
        \item $(t,m)\ne (2,1)$;
        \item $(t,m)=(2,1)$ and $\F_{q^2} \subset H(\langle S\rangle _{\F_{q^2}})$,
    \end{itemize}
    then $\mathrm{Orb}(S)$ is not an FWS code.
\end{theorem}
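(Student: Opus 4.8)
The plan is to show that under either hypothesis the weight distribution of $\mathrm{Orb}(S)$ must contain a zero. Throughout write $Y=\langle S\rangle_{\fqt}=\bar{S}\oplus b\fqt$, and for $\mu\in\fqn^*$ set $a=\dim_{\fqt}(\bar{S}\cap\mu\bar{S})$. Since $0<m<t$, the maximal $\fqt$-subspace of $S$ is $\bar{S}$, whence the maximal $\fqt$-subspace of $S\cap\mu S$ is $\bar{S}\cap\mu\bar{S}$; combining $\bar{S}\subseteq S\subseteq Y$ with Grassmann's identity (exactly as in Proposition \ref{prop1 h'h}, which is the case $t=2$, $m=1$) gives the two-sided bound $ta\le \dim_{\fq}(S\cap\mu S)\le ta+2m$, with $\dim_{\fq}(S\cap\mu S)=ta$ precisely when $S\cap\mu S$ is $\fqt$-linear. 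I would first record that an FWS code forces $H(S)=\fq$: if $[H(S):\fq]=d_0>1$, then by Lemma \ref{lem:w2i>0} the indices $i$ with $\omega_{2i}>0$ all lie in one residue class modulo $d_0$, so $\omega_{2(k-1)}=0$ and the code is not FWS. Hence from now on assume $H(S)=\fq$.

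For the case $(t,m)\ne(2,1)$, that is $t\ge 3$, I split according to the size of $t$ relative to $2m$. If $t\ge 2m+2$, the bound already isolates a gap: the value $d=2m+1$ cannot be written as $\dim_{\fq}(S\cap\mu S)$ for any $\mu$, since $a=0$ forces $d\le 2m$ while $a\ge 1$ forces $d\ge t\ge 2m+2$; as $2m+1\le k-1$, the weight $\omega_{2(k-2m-1)}$ vanishes and $\mathrm{Orb}(S)$ is not FWS.

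The genuinely hard regime is $3\le t\le 2m+1$, where consecutive blocks $[ta,ta+2m]$ overlap so that no gap is visible from the coarse bound; here I must show that some intermediate intersection dimension is not realizable. The plan is to pass to the one-dimensional quotient $\fqt$-space $\bar{W}=(Y\cap\mu Y)/(\bar{S}\cap\mu\bar{S})$ and to analyze the $\fq$-subspace $(S\cap\mu S)/(\bar{S}\cap\mu\bar{S})$ inside it. Realizing the largest increment (the analogue of Case III of Proposition \ref{prop1 h'h}) forces $b\langle 1,\lambda,\dots,\lambda^{m-1}\rangle_{\fq}$ together with its $\mu$-multiple into a two-dimensional $\fqt$-configuration in $\bar{W}$; a degree argument (using that $\mu^{2}\notin\langle 1,\mu\rangle_{\fqt}$ unless $[\fqt(\mu):\fqt]\le 2$, in the spirit of the coprimality analysis of Lemma \ref{Lem:lemmamu}) then shows this can occur only when $[\fqt(\mu):\fqt]=2$, and in that case $b$ is driven into the field $\F_{q^{2t}}$. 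In the model case $\ell=1$ this already closes the argument: either no admissible $\mu$ exists, producing the missing weight directly, or a cyclic shift of $S$ lands in $\F_{q^{2t}}$, making $m(\mathrm{Orb}(S))$ so small that $k>\tfrac{m(\mathrm{Orb}(S))+1}2$, contradicting Corollary \ref{cor:necessaryconditionsFWS}. The main obstacle I anticipate is running this uniformly for all $\ell$ and for an \emph{arbitrary} $\fqt$-subspace $\bar{S}$ (not merely one with a polynomial basis): I expect to normalize $\bar{S}$ up to the Frobenius/linear isometries of Section \ref{sec:Equiv} and to peel off the increments block by block, controlling the non-$\fqt$-linear part via the critical-pair classification of Theorem \ref{thm:classcritpairs2dim}.

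Finally, the case $(t,m)=(2,1)$ with $\F_{q^2}\subsetneq H(Y)$ is comparatively short. By Corollary \ref{cor:SisFWSthenbarSisFWSq2}, if $\mathrm{Orb}(S)$ were FWS then $\dim_{\F_{q^2}}(Y\cap\mu Y)$ would attain every value in $\{1,\dots,\ell\}$, in particular the value $1$. But $Y$ is $H(Y)$-linear, hence so is $Y\cap\mu Y$, so its $\F_{q^2}$-dimension is a multiple of $[H(Y):\F_{q^2}]\ge 2$ and can never equal $1$. This contradiction shows that $\mathrm{Orb}(S)$ is not FWS, completing all cases.
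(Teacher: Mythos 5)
Your opening reductions are correct, and two of your three cases genuinely work. The two-sided bound $ta\le\dim_{\fq}(S\cap\mu S)\le ta+2m$ (with $a=\dim_{\fqt}(\Bar{S}\cap\mu\Bar{S})$) is right, and the resulting gap argument for $t\ge 2m+2$ is valid: dimensions in $\{2m+1,\dots,t-1\}$ are unattainable, so $\omega_{2(k-2m-1)}=0$. This is in fact cleaner than the paper's treatment of that range, since it needs no hypothesis on $H(Y)$, whereas the paper splits on $H(Y)$ and invokes cases (i) and (iii) of Theorem \ref{thm:maintheoremjump}. Likewise your $(t,m)=(2,1)$ argument via Corollary \ref{cor:SisFWSthenbarSisFWSq2} (the value $\dim_{\F_{q^2}}(Y\cap\mu Y)=1$ is forbidden by $H(Y)$-linearity) is a correct alternative to the paper's case (iii). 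You also overlooked that half of your declared hard regime falls to the counting dual to your own: from $S+\mu S\subseteq Y+\mu Y$ one gets $\dim_{\fq}(S\cap\mu S)\ge 2m-2t+\dim_{\fq}(Y\cap\mu Y)$, and any nonzero $S\cap\mu S$ lies in the $\fqt$-subspace $Y\cap\mu Y$, which then has $\fq$-dimension at least $t$; hence every nonzero intersection has dimension at least $2m-t$, so for $t\le 2m-2$ the dimensions $1,\dots,2m-t-1$ are missing. This is exactly case (ii) of Theorem \ref{thm:maintheoremjump} and, like your bound, requires no structural input.

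The genuine gap is the remaining band $2m-1\le t\le 2m+1$ (which includes, e.g., $(t,m)=(3,2)$ and all cases $t=2m,2m\pm 1$): there your text is a plan, not a proof, as you yourself signal with ``I expect to normalize\dots'' and ``the main obstacle I anticipate\dots''. No counting of the above kind can close this band, because the intervals $[ta,ta+2m]$ overlap, and the paper closes it with two structural results you do not supply. First, the long hyperplane argument behind case (i) of Theorem \ref{thm:maintheoremjump}: assuming $H(Y)=\fqt$, $m<t-1$ and $\dim_{\fq}(S\cap\mu S)=t\ell-1$, Proposition \ref{prop:hyperplane} produces an $\fqt$-subspace $\Bar{T}=\Bar{S}\cap\mu\Bar{S}$ of codimension one in $S\cap\mu S$ and a line $\xi\fqt\subseteq Y\cap\mu Y$ meeting $S\cap\mu S$ in a hyperplane $\xi H$ with $\dim_{\fq}(H)=t-1$; writing $\xi=\Bar{s}+b\alpha$ then forces $\alpha H\subseteq S_m$, impossible since $\dim_{\fq}(S_m)=m<t-1$. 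Second, Lemma \ref{lem:case2.4}, proved by applying the critical-pair classification (Theorem \ref{thm:class}) to $Y$ itself, shows $\dim_{\fq}(S\cap\mu S)\ge\ell t\Rightarrow\mu\in H(Y)$ and settles $(t,m)=(3,2)$ via $\omega_4=0$ (case (iv)). Note also that your sketch aims at the wrong quantity: the missing weights in this band correspond to intersection dimensions at or just below $t\ell$ ($t\ell-1$ in case (i), $t\ell$ in case (iv)), i.e.\ to increment $t-1$ over $\Bar{S}\cap\mu\Bar{S}$, which equals your ``largest increment'' $2m$ only at the boundary $t=2m+1$; and your claimed reduction to $[\fqt(\mu):\fqt]=2$ with $b\in\F_{q^{2t}}$ is unsubstantiated and discussed only for $\ell=1$. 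As written, your argument proves the theorem only for $t\ge 2m+2$ and for $(t,m)=(2,1)$ with $\F_{q^2}\subsetneq H(Y)$; even adding the case-(ii) counting, the band $2m-1\le t\le 2m+1$ remains open without the paper's structural analysis.
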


In Subsection \ref{subsec:smallcase}, we will analyze the missing case, i.e. $\dim_{\fq}(S)=2l+1$ and $H(\langle S\rangle_{\F_{q^2}})=\F_{q^2}$, proving the following theorem.

\begin{theorem}
\label{thm:casopiccoloFWS}
Let $n$ be an even positive integer and let $\Bar{S}$ an $\F_{q^2}$-subspace of $\fqn$ of dimension $l>0$. Let
$$S=\Bar{S}\oplus b\fq,$$ 
where  $\Bar{S}\cap b\F_{q^2}=\lbrace 0\rbrace$ and $H(\langle S\rangle _{\F_{q^2}}) =\F_{q^2}$. Then $Orb(S)$ is an FWS code if and only if, up to a cyclic shift, $S=\langle 1,\lmb,\ldots,\lmb^{l-1}\rangle_{\F_{q^2}}\oplus \lmb^l\fq$
    for some $\lmb\in\fqn$ such that $t:=[\F_{q^2}(\lambda) : \F_{q^2}]\geq 2l+1 $. 
\end{theorem}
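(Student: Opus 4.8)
The plan is to prove both directions, with the ``if'' direction already essentially done. For the ``if'' direction, suppose $S=\langle 1,\lmb,\ldots,\lmb^{l-1}\rangle_{\F_{q^2}}\oplus \lmb^l\fq$ with $t=[\F_{q^2}(\lambda):\F_{q^2}]\geq 2l+1$. Then $l<t/2$, so Corollary \ref{cor:FWSq2} applies directly and gives that $\mathrm{Orb}(S)$ is an FWS code. So the substance of the theorem is the ``only if'' direction, and the strategy there is to reduce to the classification already established in Theorem \ref{thm:noFWScode} and the structural results of Proposition \ref{prop1 h'h} and Proposition \ref{prop3}.

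For the ``only if'' direction, assume $\mathrm{Orb}(S)$ is an FWS code, where $S=\Bar S\oplus b\fq$, $\Bar S$ is an $\F_{q^2}$-subspace of dimension $l$ over $\F_{q^2}$ (so $\dim_{\fq}S=2l+1$), and $H(\langle S\rangle_{\F_{q^2}})=\F_{q^2}$. First I would apply Corollary \ref{cor:SisFWSthenbarSisFWSq2}: since $\mathrm{Orb}(S)$ is FWS in $\mathcal{G}_q(n,2l+1)$, the code $\mathrm{Orb}(\Bar S)$ is an FWS code in $\mathcal{G}_{q^2}(\tfrac n2,l)$, now viewed over the field $\F_{q^2}$. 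This is the key leverage: it lets me treat $\Bar S$ as a genuine FWS one-orbit code over $\F_{q^2}$ and invoke the already-proven classification of FWS codes in a smaller Grassmannian. Because $H(\langle S\rangle_{\F_{q^2}})=\F_{q^2}$ by hypothesis, the exceptional branch of Theorem \ref{thm:noFWScode} (the case $(t,m)=(2,1)$ with $H$ properly containing $\F_{q^2}$) is excluded, so for the $\F_{q^2}$-code $\mathrm{Orb}(\Bar S)$ the only way it can be FWS is through Family (1) of Corollary \ref{cor:existFWSpolbas} applied over $\F_{q^2}$. Concretely, this should force $\Bar S$ to be $\F_{q^2}$-equivalent (up to a cyclic shift and Frobenius action) to $\langle 1,\lmb,\ldots,\lmb^{l-1}\rangle_{\F_{q^2}}$ for some $\lmb$ generating a suitable extension of $\F_{q^2}$, with the dimension bound $l\leq t/2$, i.e. $2l\leq t$.

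The next step is to recover the extra summand $b\fq$ in the correct position, namely to show $b$ can be normalized (after a cyclic shift placing $\Bar S$ in polynomial-basis form) so that $S=\langle 1,\lmb,\ldots,\lmb^{l-1}\rangle_{\F_{q^2}}\oplus\lmb^l\fq$. Here I would use Proposition \ref{prop1 h'h} and the detailed weight analysis in Proposition \ref{prop3}: the requirement that all dimensions $\dim_{\fq}(S\cap\mu S)$ in $\{1,\ldots,2l+1\}$ are attained (FWS) pins down how $b$ must sit relative to the flag $\langle 1,\lmb,\ldots,\lmb^{j}\rangle_{\F_{q^2}}$. In particular $b\F_{q^2}$ must extend $\Bar S$ to $Y=\langle S\rangle_{\F_{q^2}}=\langle 1,\lmb,\ldots,\lmb^{l}\rangle_{\F_{q^2}}$, and the odd-dimensional intersections forced by FWS (case iii of Proposition \ref{prop3}) require $b$ to contribute exactly the ``top'' coordinate $\lmb^l$; after absorbing an $\F_{q^2}$-scalar and applying the cyclic shift, $b\fq$ becomes $\lmb^l\fq$. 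Finally I would verify the sharp bound $t\geq 2l+1$ (equivalently $\dim_{\fq}S=2l+1\leq t$) from Corollary \ref{cor:necessaryconditionsFWS} applied with $m(\mathrm{Orb}(S))=2t$, since $m<n$ would otherwise violate $k\leq (m+1)/2$ in the relevant regime; combined with $2l\leq t$ this yields $2l+1\leq t$.

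The main obstacle I anticipate is the middle step: ensuring that the FWS hypothesis on $S$ over $\fq$ forces $\Bar S$ into polynomial-basis form \emph{and} simultaneously constrains $b$, rather than just giving information about $\Bar S$ alone. Corollary \ref{cor:SisFWSthenbarSisFWSq2} cleanly handles $\Bar S$, but the position of $b$ is not an invariant of $\Bar S$; I expect to need the fine-grained correspondence of Proposition \ref{prop3}, matching the parity pattern of attained dimensions (alternating even/odd values $2(l-r)$ and $2(l-r)+1$) against the coprime-polynomial description, to certify that no FWS pattern survives unless $b$ lies in the single admissible one-dimensional complement. Care will also be needed to confirm that the hypothesis $H(\langle S\rangle_{\F_{q^2}})=\F_{q^2}$ is exactly what rules out the other potential placements of $b$ (those giving a larger stabilizer, which by Lemma \ref{lem:w2i>0} would introduce a congruence obstruction and hence a zero in the weight distribution), thereby closing off every case except the asserted one.
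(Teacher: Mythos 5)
There is a genuine gap at the pivotal step of your ``only if'' direction. You claim that the hypothesis $H(\langle S\rangle_{\F_{q^2}})=\F_{q^2}$ excludes the exceptional branch of Theorem \ref{thm:noFWScode} when that theorem is applied to the $\F_{q^2}$-code $\mathrm{Orb}(\Bar{S})$, and hence that $\Bar{S}$ must lie in the polynomial-basis family over $\F_{q^2}$. This conflates two different stabilizers. Applying Theorems \ref{thm:class} and \ref{thm:noFWScode} over the base field $\F_{q^2}$, the branch that survives for $\Bar{S}$ besides the polynomial-basis case is $\Bar{S}=\Bar{S}_1\oplus b_1\F_{q^2}$ with $\Bar{S}_1$ an $\F_{q^4}$-subspace and $H(\langle\Bar{S}\rangle_{\F_{q^4}})=\F_{q^4}$ --- a condition on the $\F_{q^4}$-span of $\Bar{S}$, not on $\langle S\rangle_{\F_{q^2}}$. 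Your hypothesis is perfectly compatible with this nested configuration: an element of $\F_{q^4}$ stabilizing $\Bar{S}_1$ need not stabilize the lines $b_1\F_{q^2}$ and $b\F_{q^2}$ inside $Y=\langle S\rangle_{\F_{q^2}}$, so $H(Y)=\F_{q^2}$ does not rule it out. This nested case is exactly where the paper's proof does its real work: it runs a finite descent through the tower $\F_{q^2}\subset\F_{q^4}\subset\F_{q^8}\subset\cdots$, at each level either landing in a polynomial-basis case or pushing the Family-(2) structure one field deeper, and it kills every such branch using Proposition \ref{prop:pushup2}, which shows that any subspace of the shape $\langle 1,\lmb,\ldots,\lmb^{h-1}\rangle_{\F_{q^4}}\oplus\lmb^h\F_{q^2}\oplus b\fq$ sitting above an FWS code has $\omega_4=0$ and so cannot be FWS. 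Since $n$ is finite the descent terminates, leaving only the polynomial-basis case for $\Bar{S}$. Your proposal contains no mechanism of this kind --- you never invoke Proposition \ref{prop:pushup2} --- so the nested branches survive and your classification does not close.

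A secondary, more repairable issue concerns how you recover $b$ once $\Bar{S}=a\langle 1,\lmb,\ldots,\lmb^{l-1}\rangle_{\F_{q^2}}$: you propose to use Proposition \ref{prop3}, but that proposition computes intersections only for the already-normalized subspace $\langle 1,\lmb,\ldots,\lmb^{l-1}\rangle_{\F_{q^2}}\oplus\lmb^l\fq$ and so cannot constrain an arbitrary $b$. The correct tool is Proposition \ref{prop:pushup1}: the FWS hypothesis forces $\omega_4(\cC)>0$, i.e.\ the existence of $\mu$ with $\dim_{\fq}(S\cap\mu S)=2l-1$, and that proposition converts this single numerical condition into the two normal forms, both of which reduce, after absorbing an $\F_{q^2}$-scalar and a cyclic shift, to $\langle 1,\lmb,\ldots,\lmb^{l-1}\rangle_{\F_{q^2}}\oplus\lmb^l\fq$. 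Your use of Corollary \ref{cor:SisFWSthenbarSisFWSq2} at the start, the bound $t\geq 2l+1$ via Corollary \ref{cor:necessaryconditionsFWS} with $m(\cC)=2t$, and the ``if'' direction via Corollary \ref{cor:FWSq2} are all fine.
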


The proof of the main result, cf.\ Theorem \ref{thm:maintheorem}, is detailed in Subsection \ref{subs:maintheoremfinal} and, as we will see, it is obtained by combining the results of this section and those developed in the previous sections.


\subsection{Proof of Theorem \ref{thm:noFWScode}}\label{subsec:proofnoFWS}
In the next we will prove a stronger version of Theorem \ref{thm:noFWScode}, in which we will explicitly show that some  weights in the weight distribution of the codes in Family ii) of Theorem \ref{thm:class} are zero.

\begin{theorem}
\label{thm:maintheoremjump}
Let $\mathcal{C}=\mathrm{Orb}(S)$ be the cyclic subspace code defined by $S\in\mathcal{G}_q(n,k)$ where
\[
S=\Bar{S}\oplus b\langle 1, \lmb, \dots, \lmb^{m-1}\rangle_{\fq}
\]
where $\fqt=\fq(\lmb)$ is a subfield of $\fqn$, $\Bar{S}$ is an $\fqt$-subspace of $\fqn$ of dimension $l$ such that $\Bar{S}\cap b\fqt=\lbrace 0\rbrace$ and   $0<m<t$.\\
Let $(\omega_2(\mathcal{C}),\dots,\omega_{2k}(\mathcal{C}))$ be the weight distribution of $\cC$. Also, denoted by $Y$ the $\fqt$-subspace generated by $S$, the following hold 
    \begin{itemize}
        \item[(i)] if $m<t-1$ and $H(Y)=\fqt$, then $\omega_{2m+2}(\mathcal{C})=0$;
        \item[(ii)] if $m>\frac{t+1}{2}$, then $\omega_{2(k-j)}(\mathcal{C})=0$ for any $j\in\{1,\dots,2m-t-1\}$;
        \item[(iii)] if $ \fqt \subset H(Y)$, then $\omega_{2(k-j)}(\mathcal{C})=0$ for any $j\in\{1,\dots,2m-1\}$;
        \item[(iv)] if $t=3$ and $m=2$ and $H(Y)=\F_{q^3}$, then $\omega_4(\mathcal{C})=0$;
    \end{itemize}
\end{theorem}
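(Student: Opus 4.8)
The common engine for all four parts is a single description of $S\cap\mu S$ obtained by reducing modulo $\Bar S$. Writing $V=\la 1,\lambda,\ldots,\lambda^{m-1}\ra_{\fq}$, so that $S=\Bar S\oplus bV$ and $Y=\Bar S\oplus b\fqt$, I would use that $\Bar S\cap b\fqt=\{0\}$ to make sense of the $\fqt$-linear projection $\rho\colon Y\to\fqt$ along $\Bar S$ (sending $\Bar s+bx$ to $x$), which satisfies $\rho^{-1}(V)=S$ and $\ker\rho=\Bar S$, and the analogous $\rho_\mu\colon\mu Y\to\fqt$ along $\mu\Bar S$ with $\rho_\mu^{-1}(V)=\mu S$. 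On the $\fqt$-subspace $W:=Y\cap\mu Y$ I would consider the $\fqt$-linear map $\Phi=(\rho,\rho_\mu)\colon W\to\fqt\times\fqt$, whose kernel is exactly $\Bar S\cap\mu\Bar S$ and for which $S\cap\mu S=\Phi^{-1}(V\times V)$. This yields
\[
\dim_{\fq}(S\cap\mu S)=t\,h'+\dim_{\fq}\!\big(\mathrm{Im}\,\Phi\cap(V\times V)\big),
\]
where $h':=\dim_{\fqt}(\Bar S\cap\mu\Bar S)$, and $r:=\dim_{\fqt}\mathrm{Im}\,\Phi=h-h'\in\{0,1,2\}$ with $h:=\dim_{\fqt}(Y\cap\mu Y)$. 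Reading off the three possibilities, the increment $\dim_{\fq}(\mathrm{Im}\,\Phi\cap(V\times V))$ equals $0$ if $r=0$, is at most $m$ if $r=1$ (it injects into a single copy of $V$), and equals $2m$ if $r=2$. Alongside this I would record the elementary bound coming from $S+\mu S\subseteq Y+\mu Y$ and Grassmann's formula,
\[
\dim_{\fq}(S\cap\mu S)\ge 2m-2t+\dim_{\fq}(Y\cap\mu Y).
\]

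Parts (ii) and (iii) fall out of the lower bound. If $S\cap\mu S\neq\{0\}$ then $Y\cap\mu Y\neq\{0\}$ is $\fqt$-linear, so $\dim_{\fq}(Y\cap\mu Y)\ge t$ and the bound gives $\dim_{\fq}(S\cap\mu S)\ge 2m-t$; since $m>\tfrac{t+1}{2}$ forces $2m-t\ge 2$, no nonzero intersection can have dimension in $\{1,\ldots,2m-t-1\}$, which is (ii). For (iii), $\fqt\subsetneq H(Y)$ forces $[H(Y):\fq]\ge 2t$, and as $Y$ and $\mu Y$ are both $H(Y)$-linear, so is $Y\cap\mu Y$; hence a nonzero $Y\cap\mu Y$ has $\fq$-dimension at least $2t$, and the lower bound yields $\dim_{\fq}(S\cap\mu S)\ge 2m$, ruling out all dimensions in $\{1,\ldots,2m-1\}$.

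For (i) I would use the exact formula together with a congruence. The target value $k-(m+1)=t\ell-1$ is $\equiv t-1\pmod t$, while $t\,h'$ is divisible by $t$, so the increment must be $\equiv t-1\pmod t$. Since it lies in $\{0,\ldots,m\}\cup\{2m\}$ and $m<t-1$ forces $2m<2t-1$, the only candidate is the value $t-1$ itself; this cannot come from $r=1$ (increment $\le m<t-1$) nor from $r=0$, so it would require $r=2$ with $2m=t-1$. But $r=2$ forcing $\dim_{\fq}(S\cap\mu S)=t\ell-1$ pins down $h'=\ell-1$, $h=\ell+1$, i.e. $Y\cap\mu Y=Y$ and $\mu\in H(Y)=\fqt$; then $\mu\Bar S=\Bar S$ gives $h'=\ell$, a contradiction. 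Hence $\dim_{\fq}(S\cap\mu S)=t\ell-1$ is impossible and $\omega_{2m+2}(\cC)=0$.

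Part (iv) is the delicate case and I expect it to be the main obstacle. With $t=3$, $m=2$ the target is $k-2=3\ell$, and the congruence modulo $3$ forces the increment to be a multiple of $3$ lying in $\{0,1,2\}\cup\{4\}$, hence $0$; thus $h'=\ell$ and $r\in\{0,1\}$. If $r=1$ then $Y\cap\mu Y=Y$, so $\mu\in H(Y)=\F_{q^3}$ and $S\cap\mu S=\Bar S\oplus b(V\cap\mu V)$ with $\dim_{\fq}(V\cap\mu V)=1$ (since $H(V)=\fq$), giving increment $1\neq 0$, a contradiction. The only surviving configuration is $r=0$ with $h'=\ell$, i.e. $\mu\Bar S=\Bar S$, so $\mu\in H(\Bar S)$ and $S\cap\mu S=\Bar S$. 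The crux is therefore to exclude such $\mu$ with $\mu S\neq S$, equivalently to prove $H(\Bar S)=\fqt$: granting this, any $\mu$ fixing $\Bar S$ lies in $\F_{q^3}$ and hence produces $h=\ell+1$ rather than $h=\ell$, excluding the case and giving $\omega_4(\cC)=0$. Establishing $H(\Bar S)=\fqt$ from the standing hypotheses is exactly the hard step I would treat most carefully, since a $\Bar S$ linear over a proper extension of $\fqt$ would otherwise yield shifts $\mu$ with $\dim_{\fq}(S\cap\mu S)=3\ell$; I expect this to follow from the minimality attached to the distance-two classification (the field $\fqt=\fq(\lambda)$ being the one canonically associated with $\cC$), which forbids $\Bar S$ from carrying a larger field of linearity.
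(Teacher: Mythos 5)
Your parts (ii) and (iii) are essentially the paper's own argument: the Grassmann bound $\dim_{\fq}(S\cap\mu S)\geq 2m-2t+\dim_{\fq}(Y\cap\mu Y)$ together with the fact that a nonzero $Y\cap\mu Y$ is linear over $\fqt$ (resp.\ over $H(Y)$, hence of $\fq$-dimension at least $2t$). Part (i) is correct but takes a genuinely different route: the paper excludes $\dim_{\fq}(S\cap\mu S)=tl-1$ through a duality argument (its Proposition \ref{prop:hyperplane}), producing the hyperplane $\Bar{T}=\Bar{S}\cap\mu\Bar{S}$ of $Y\cap\mu Y$ and a line $\xi\fqt$ meeting $S\cap\mu S$ in dimension $t-1$, then comparing components in $Y=\Bar{S}\oplus b\fqt$ to force $m\geq t-1$; your projection $\Phi=(\rho,\rho_\mu)$ and the congruence of the increment modulo $t$ reach the same conclusion in a few lines, and your identity $\dim_{\fq}(S\cap\mu S)=t\,h'+\dim_{\fq}\bigl(\mathrm{Im}\,\Phi\cap(V\times V)\bigr)$ also subsumes the paper's Proposition \ref{prop1 h'h} (their case $t=2$, $m=1$). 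This part of your proposal is a cleaner mechanism than the paper's.

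Part (iv) is, as you suspected, where the real content lies, and your proposal has a genuine gap there: the claim $H(\Bar{S})=\fqt$ does \emph{not} follow from the stated hypotheses, because nothing in them ties $\Bar{S}$ to $\lmb$. Concretely, take $n=18$, $\fqt=\F_{q^3}=\fq(\lmb)$, $\Bar{S}=\F_{q^9}$ (so $l=3$ and $H(\Bar{S})=\F_{q^9}$), and $b\in\fqn\setminus\F_{q^9}$. Then $\Bar{S}\cap b\F_{q^3}=\{0\}$, and $H(Y)=\F_{q^3}$ for $Y=\F_{q^9}\oplus b\F_{q^3}$: a strictly larger stabilizer would force the compositum $\F_{q^6}\cdot\F_{q^9}=\F_{q^{18}}$ inside the $12$-dimensional space $Y$. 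For any $\mu\in\F_{q^9}\setminus\F_{q^3}$, your own formula (using $\F_{q^9}\cap b\F_{q^9}=\{0\}$ and the fact that ratios of nonzero elements of $V\subseteq\F_{q^3}$ lie in $\F_{q^3}$) gives $S\cap\mu S=\Bar{S}$, of dimension $3l=k-2$, i.e.\ $d(S,\mu S)=4$. So the configuration $r=0$, $h'=l$, $\mu\in H(\Bar{S})\setminus\fqt$ that you could not exclude actually occurs under every hypothesis of the theorem, and no "minimality" rescue is available: your subcase is a true obstruction to the statement as given, not merely to your proof. The paper closes this case via its Lemma \ref{lem:case2.4} ($\dim_{\fq}(S\cap\mu S)\geq lt$ implies $\mu\in H(Y)$), proved by applying the distance-two/critical-pair classification (Theorem \ref{thm:class}) to $Y$ with respect to $\mu$, writing $Y=\Bar{T}\oplus\xi\la 1,\mu,\dots,\mu^{c-1}\ra_{\fqt}$ over $\fqt(\mu)$ and forcing $m=t$; but the decisive inclusion $\xi\mu\la a_{0,0},\dots,a_{0,m-1}\ra_{\fq}\subseteq\xi\la\mu,\dots,\mu^{c-1}\ra_{\fqt}$ used there is vacuous when $c=1$, which is exactly the case the configuration above realizes ($h=1$, $c=1$, $\Bar{T}=\F_{q^9}$, $\xi=b$). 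So you correctly isolated the one point where the proof must do real work; repairing it requires an extra hypothesis (e.g.\ $H(\Bar{S})=\fqt$, equivalently excluding $c=1$, or restricting to the decompositions actually produced in the paper's classification flow), and the paper's own argument, as written, skips precisely this boundary case.
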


Note that the only missing case in Theorem \ref{thm:maintheoremjump} is the one corresponding to $(t,m)=(2,1)$ and $H(Y)=\F_{q^2}$. This is not due to a limit of the proof, indeed by Corollary \ref{cor:FWSq2}  this case actually gives other examples of FWS codes.\\

Let us start by proving {\bf (i)} of Theorem \ref{thm:maintheoremjump}.
\begin{proof}
Let $S=\Bar{S}\oplus b\langle 1, \lmb, \dots, \lmb^{m-1}\rangle_{\fq}$ and set $S_m:=\langle 1, \lmb, \dots, \lmb^{m-1}\rangle_{\fq}$. Note that $Y=\langle S\rangle_{\fqt}=\Bar{S}\oplus b \fqt$ and that $S\cap b\fqt =bS_m$.  Let consider $d(S, \mu S)$ with $\mu\in\fqn^*$.\\
Note that if $\mu\in\fqt$, then
$\Bar{S}\subseteq S\cap\mu S$,
therefore
$\dim_{\fq}(S\cap\mu S)\geq tl=k-m$
and so $d(S,\mu S)\leq 2m$.
So, suppose there exists $\mu\notin\fqt$ such that $H(Y)=\fqt$, $m<t-1$ and  
\[
d(S,\mu S)=2m+2,
\]
i.e. $\dim_{\fq}(S\cap\mu S)=lt-1$.
Note that $S\cap\mu S\subseteq Y\cap\mu Y$ and, since $Y\cap\mu Y$ is an $\fqt$-subspace of $\fqn$ of dimension $l+1$, we have that 
\[
l\leqslant \dim_{\fqt}(Y\cap\mu Y) \leqslant l+1.
\]
Moreover, since $H(Y)=\fqt$ and $\mu\notin\fqt$, $\dim_{\fq}(Y\cap\mu Y)=lt$.\\
Let observe that if $b\fqt\subseteq Y\cap\mu Y$, by Proposition \ref{prop:hyperplane}  it follows that
\[
\dim_{\fq}((S\cap \mu S)\cap b\fqt)\geq t-1,
\]
and hence $\dim_{\fq}(S\cap b\fqt)=\dim_{\fq}(S_m)=m\geq t-1$,
 a contradiction to the assumption $m<t-1$.
This implies that $b\fqt\not\subseteq Y\cap\mu Y$ and, similar arguments show that  $\mu b\fqt\not\subseteq Y\cap\mu Y$.\\
Now, since $\dim_{\fq}(S\cap\mu S)=lt-1=\dim_{\fq}(Y\cap\mu Y)-1$, Proposition \ref{prop:hyperplane} implies the existence of an $\fqt$-subspace $\Bar{T}$ of $\fqn$ such that \[
\dim_{\fqt}(\Bar{T})=l-1 \quad \quad \mbox{and} \quad \quad \Bar{T}\subseteq S\cap\mu S. \]

Note that $\Bar{S}$ is the maximal $\fqt$-subspace of $\fqn$ contained in $S$ and $\mu\Bar{S}$ is the maximal $\fqt$-subspace of $\fqn$ contained in $\mu S$, then $\Bar{T}\subseteq\Bar{S}$ and   $\Bar{T}\subseteq\mu \Bar{S}$. Then 
\[
\Bar{T}\subseteq \Bar{S}\cap\mu\Bar{S}\subseteq \Bar{S}
\]
and so 
\[
l-1=\dim_{\fqt}(\Bar{T})\leqslant \dim_{\fqt}(\Bar{S}\cap\mu\Bar{S})\leqslant \dim_{\fqt}(\Bar{S})=l.
\]
If $\dim_{\fqt}(\Bar{S}\cap\mu\Bar{S})=l$, then $\Bar{S}=\mu\Bar{S}$ and this implies that $\Bar{S}\subseteq S\cap\mu S$. Then $lt=\dim_{\fq}(\Bar{S})\leqslant \dim_{\fq}(S\cap\mu S)=lt-1$, a contradiction. Therefore $\dim_{\fqt}(\Bar{S}\cap\mu\Bar{S})=l-1=\dim_{\fqt}(\Bar{T})$ and hence
\[
\Bar{T}=\Bar{S}\cap\mu\Bar{S}.
\]
Since $\dim_{\fqt}(\Bar{T})=l-1=\dim_{\fqt}(Y\cap\mu Y)-1$, there exists $\xi\fqt\subseteq Y\cap\mu Y$ such that $\xi\fqt\not\subseteq \Bar{T}$. By Proposition \ref{prop:hyperplane} it follows that
\[
t-1\leq \dim_{\fq}(\xi\fqt\cap S\cap\mu S)\leq t.
\]
Moreover, $\dim_{\fq}(S\cap\mu S\cap\xi\fqt)\neq t$ by the maximality of $\Bar{S}$ and $\mu \bar{S}$ as $\fqt$-subspaces of   $S$ and $\mu S$ respectively. 
  Therefore  $\dim_{\fq}(S\cap\mu S\cap\xi\fqt)=t-1$.
Therefore, 
\[
\xi H=\xi\fqt\cap S\cap\mu S,
\]
where $H$ is an $\fq$-subspace of $\fqt$ with dimension $t-1$, and so
\[
S\cap\mu S=\Bar{T}\oplus\xi H.
\]
Moreover, since $\xi\in Y=\Bar{S}\oplus b\fqt$, then there exist $\Bar{s}\in\Bar{S}$ and $\alpha\in\fqt$ such that 
$\xi=\Bar{s}+b \alpha$.
Now, if we take any $h\in H$ and $\Bar{t}\in\Bar{T}$, we have $\eta= \bar{t}+\xi h \in S\cap\mu S=\Bar{T}\oplus\xi H$, and 
\[
\eta=\bar{t}+\xi h=\Bar{t}+(\Bar{s}+b \alpha)h=(\bar{t}+\Bar{s}h)+b \alpha h.
\]
Since $\eta\in S$, there exist $\Bar{s_1}\in\Bar{S}$ and $s_m\in S_m$ such that 
\[
\eta=\Bar{s}_1+b s_m.
\]
Then we have
\[
(\bar{t}+\Bar{s}h)+b\alpha h=\Bar{s}_1+b s_m
\]
and so
\[
(\bar{t}+\Bar{s}h)-\Bar{s}_1=b (s_m-\alpha h).
\]
The left hand side belongs to $\Bar{S}$, the right hand side belongs to $b\fqt$. Since $\Bar{S}\cap b\fqt=\lbrace 0\rbrace$, we immediately get
\[
\alpha h=s_m \text{ for any } h\in H.
\]
Then we have
\[
\alpha H\subseteq S_m.
\]
If $\alpha\neq 0$ this implies that $t-1=\dim_{\fq}(\alpha H)\leq \dim_{\fq}(S_m)=m<t-1$, a contradiction. Therefore $\alpha=0$ and hence $\xi\in\Bar{S}$. 
Similarly, since $\xi\in \mu Y=\mu\Bar{S}\oplus\mu b\fqt$, then there exist $\Bar{s'}\in\Bar{S}$ and $\beta\in\fqt$ such that 
\[
\xi=\mu\Bar{s}'+\mu b\beta
\]
and by arguing as before, we get $\beta=0$ and so $\xi\in\mu\Bar{S}$. Then we have
$\xi\fqt\subseteq\Bar{S}\cap\mu\bar{S}=\Bar{T}$,
a contradiction since $\xi\fqt\not\subseteq\Bar{T}$. 
\end{proof}

Now, let us prove {\bf (ii) and (iii)} of Theorem \ref{thm:maintheoremjump}.
\begin{proof}
Note that $\dim_{\fqt}(Y)=l+1$ and hence $\dim_{\fq}(Y)=t(l+1)$. Moreover, since $S\subseteq Y$, we have 
\[
\dim_{\fq}(Y+\mu Y)\geq \dim_{\fq}(S+\mu S)= 2k -\dim_{\fq}(S \cap \mu S)
\]
and then
\begin{equation}
\label{eq:case2.12.2}
 \dim_{\fq}(S\cap\mu S)\geq 2m-2t+\dim_{\fq}(Y\cap\mu Y).   
\end{equation}
Let $\dim_{\fq}(S\cap \mu S)>0$. Then,  since   
$Y\cap\mu Y$ is an $\fqt$-subspace of $\fqn$ and $S\cap\mu S \subseteq  Y\cap\mu Y$, we get  
$$\dim_{\fq}(Y\cap\mu Y)=ht\geq t.$$
In {\bf Case (ii)}, $m>\frac{t+1}{2}$, hence $2m-t>1$ and so by  \eqref{eq:case2.12.2} it follows that
\[
\dim_{\fq}(S\cap\mu S)\geq ht-2t+2m\geq 2m-t,
\] 
and so $\omega_{2(k-j)}(\cC)=0$ for any $j\in\{1,\dots,2m-t-1\}$.\\
In {\bf Case (iii)}, $\fqt \subset H(Y)$, hence  $H(Y)=\F_{q^{ts}}$ for some $s>1$ such that $t\cdot  s\mid n$ and so $Y$ and $Y\cap\mu Y$  are $H(Y)$-linear subspaces of $\fqn$  for any $\mu\in\fqn$. For those $\mu \in \fqn$ such that $\dim_{\fq}(S\cap \mu S)>0$, we get that $\dim_{\fq}(Y\cap\mu Y)\geq t    \cdot s\geq 2t$. Then by using again \eqref{eq:case2.12.2} it follows that
\[
\dim_{\fq}(S\cap \mu S)\geq 2m,
\]
implying that $\omega_{2(k-j)}(\cC)=0$ for any $j\in\{1,\dots,2m-1\}$.
\end{proof}

In order to prove {\bf (iv)} of Theorem \ref{thm:maintheoremjump}, we need to prove the following lemma.

\begin{lemma}
    \label{lem:case2.4}
Let $S\in\mathcal{G}_q(n,k)$ such that
\[
S=\Bar{S}\oplus b\langle 1, \lmb, \dots, \lmb^{m-1}\rangle_{\fq}\subseteq \Bar{S}\oplus b \fqt =Y
\]
where $\Bar{S}$ is an $\fqt$-subspace of $\fqn$ of dimension $l$, with $\fqt=\fq(\lmb)$, such that $\Bar{S}\cap b\fqt=\lbrace 0\rbrace$, $k=tl+m$ and $0<m<t$. Then
\begin{itemize}
    \item If $\mu \in \F_{q^t}^*$ then $\dim_{\fq}(S\cap\mu S)\geq lt$.
    \item If $\dim_{\fq}(S\cap\mu S)\geq lt$, then $\mu\in H(Y)$.
\end{itemize}
\end{lemma}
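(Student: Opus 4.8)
The plan is to prove the two assertions separately, both relying on the nesting $\Bar{S}\subseteq S\subseteq Y$ and the fact that $Y\cap\mu Y$ is always an $\fqt$-subspace. For the first assertion, suppose $\mu\in\fqt^*$. Then $\mu\Bar{S}$ is still an $\fqt$-subspace, and since $\Bar{S}$ is itself an $\fqt$-subspace one expects $\mu$ to interact well with it. The cleanest route is to observe that $\Bar{S}\subseteq S$ and $\mu\Bar{S}\subseteq\mu S$, so $\Bar{S}\cap\mu\Bar{S}\subseteq S\cap\mu S$; hence it suffices to show $\dim_{\fq}(\Bar{S}\cap\mu\Bar{S})\geq lt$, i.e.\ $\Bar{S}=\mu\Bar{S}$. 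But this is immediate because $\mu\in\fqt$ and $\Bar{S}$ is an $\fqt$-subspace: $\mu\Bar{S}=\Bar{S}$ as $\mu$ is a nonzero scalar of the field of linearity of $\Bar{S}$. Therefore $\Bar{S}\subseteq S\cap\mu S$, giving $\dim_{\fq}(S\cap\mu S)\geq\dim_{\fq}(\Bar{S})=lt$.

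For the second assertion I would argue by the dimension count that forces $Y\cap\mu Y$ to be as large as possible. Assume $\dim_{\fq}(S\cap\mu S)\geq lt$. Since $S\subseteq Y$, we have $\dim_{\fq}(S+\mu S)\leq\dim_{\fq}(Y+\mu Y)$, and combining the two Grassmann identities (as in inequality \eqref{eq:case2.12.2} of the preceding proof) yields
\[
\dim_{\fq}(Y\cap\mu Y)\geq \dim_{\fq}(S\cap\mu S)+2t-2k\geq lt+2t-2(tl+m)=2t-tl-2m.
\]
This crude bound is not yet enough, so the sharper observation is that $S\cap\mu S\subseteq Y\cap\mu Y$ directly, whence $\dim_{\fq}(Y\cap\mu Y)\geq\dim_{\fq}(S\cap\mu S)\geq lt$. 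Because $Y\cap\mu Y$ is an $\fqt$-subspace and $\dim_{\fqt}(Y)=l+1$, its $\fqt$-dimension lies in $\{0,\dots,l+1\}$, and $\dim_{\fq}(Y\cap\mu Y)\geq lt$ forces $\dim_{\fqt}(Y\cap\mu Y)\geq l$, hence $\dim_{\fqt}(Y\cap\mu Y)\in\{l,l+1\}$.

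It remains to rule out the case $\dim_{\fqt}(Y\cap\mu Y)=l$, which would give $\dim_{\fq}(Y\cap\mu Y)=lt$ while $\dim_{\fq}(S\cap\mu S)\geq lt$ forces the equality $S\cap\mu S=Y\cap\mu Y$; this would make $S\cap\mu S$ an $\fqt$-subspace of $\fqt$-dimension $l$ contained in $S$. Since $\Bar{S}$ is the \emph{maximal} $\fqt$-subspace of $\fqn$ contained in $S$ (as $0<m<t$ means $b\langle1,\lmb,\dots,\lmb^{m-1}\rangle_{\fq}$ contains no nonzero $\fqt$-line transverse to $\Bar{S}$), any $\fqt$-subspace of $S$ of $\fqt$-dimension $l$ must equal $\Bar{S}$, so $Y\cap\mu Y=\Bar{S}$. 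But $\Bar{S}$ is not all of $Y$, contradicting that $Y\cap\mu Y$ is an $\fqt$-hyperplane of $Y$ only when $\dim_{\fqt}=l$; more directly, $\mu Y\supseteq\mu\Bar{S}$ with $\dim_{\fqt}(\mu\Bar{S})=l$, and $Y\cap\mu Y=\Bar{S}$ together with $\Bar{S}\subseteq\mu Y$ forces $\mu\Bar{S}=\Bar{S}$, returning us to $\dim_{\fqt}(Y\cap\mu Y)=l+1$. Hence $\dim_{\fqt}(Y\cap\mu Y)=l+1$, i.e.\ $Y\cap\mu Y=Y$, which means $\mu Y\supseteq Y$ and therefore $\mu Y=Y$ by dimension, i.e.\ $\mu\in H(Y)$. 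The main obstacle I anticipate is this last maximality argument: carefully justifying that $\Bar{S}$ is the unique largest $\fqt$-subspace inside $S$ and tracking the interplay between $S$ and $\mu S$ to exclude the intermediate dimension is the delicate part, whereas the opening inequalities are routine.
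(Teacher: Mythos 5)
Your first bullet is correct and is exactly the paper's argument: a nonzero scalar $\mu\in\fqt^*$ fixes the $\fqt$-subspace $\Bar{S}$, so $\Bar{S}=\Bar{S}\cap\mu\Bar{S}\subseteq S\cap\mu S$. Your second bullet, however, has a genuine gap at precisely the point you flagged as delicate, and it occurs in two steps. First, the inference ``$Y\cap\mu Y=\Bar{S}$ together with $\Bar{S}\subseteq\mu Y$ forces $\mu\Bar{S}=\Bar{S}$'' is not valid as written: $\Bar{S}$ and $\mu\Bar{S}$ are two $\fqt$-hyperplanes of the $(l+1)$-dimensional space $\mu Y$, and distinct hyperplanes coexist happily. (This step is repairable: from $\Bar{S}=S\cap\mu S\subseteq\mu S$ and the fact that $\mu\Bar{S}$ is the maximal $\fqt$-subspace of $\mu S$ --- your maximality observation, which is correct and is also used by the paper --- one gets $\Bar{S}\subseteq\mu\Bar{S}$ and hence equality by dimensions.) Second, and fatally, even granting $\mu\Bar{S}=\Bar{S}$, your conclusion ``returning us to $\dim_{\fqt}(Y\cap\mu Y)=l+1$'' does not follow: $\mu$ stabilizing $\Bar{S}$ says nothing about $\mu b\fqt$, and one can have $Y\cap\mu Y=\Bar{S}$ of $\fqt$-dimension exactly $l$ with $\mu\notin H(Y)$. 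Concretely, take $t=2$, $\Bar{S}=\F_{q^4}\subseteq\F_{q^8}$ (so $l=2$), $m=1$, $b\in\F_{q^8}\setminus\F_{q^4}$ and $\mu\in\F_{q^4}\setminus\F_{q^2}$: then $\mu\Bar{S}=\Bar{S}$, $S\cap\mu S=\Bar{S}$ has $\fq$-dimension $lt$, yet $\mu b\notin Y=\F_{q^4}\oplus b\F_{q^2}$, so $\mu\notin H(Y)$. The intermediate case you set out to exclude genuinely occurs whenever $\mu\in H(\Bar{S})\setminus\fqt$, i.e.\ whenever $\Bar{S}$ is linear over a proper overfield of $\fqt$, so no quick dimension argument can close the proof at this point.

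The missing idea is what the paper deploys after reaching the same reduction $S\cap\mu S=Y\cap\mu Y=\Bar{S}$: this says $d(Y,\mu Y)=2$ over $\fqt$, so the paper invokes the classification of minimum-distance-two orbit codes (Theorem \ref{thm:class}, via the critical-pair theorem) to write $Y=\Bar{T}\oplus\xi\langle 1,\mu,\dots,\mu^{c-1}\rangle_{\fqt}$ with $\Bar{T}$ linear over $\fqt(\mu)=\F_{q^{ts}}$, then expands each $b\lmb^i$ in this decomposition, computes $S\cap\mu S$ explicitly, and extracts the numerical contradiction $m=t$. This structural input is not replaceable by the soft counting in your proposal. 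You should also be aware that your difficulty is a real one and not an artifact of your write-up: the paper's explicit computation tacitly assumes $c\geq 2$ (e.g.\ the inclusion $\xi\mu\langle a_{0,0},\dots,a_{0,m-1}\rangle_{\fq}\subseteq\xi\langle\mu,\dots,\mu^{c-1}\rangle_{\fqt}$ is empty on the right when $c=1$), and the configuration above realizes exactly the case $c=1$, where the honest conclusion is the disjunction ``$\mu\in H(Y)$ or $\mu\in H(\Bar{S})$'' in the spirit of Proposition \ref{prop1 h'h}, rather than $\mu\in H(Y)$ alone. So your attempt fails, but it fails on a case that the paper's own argument also does not cover.
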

\begin{proof}
 For the first item, just note that in this case $\Bar{S}$ is an $\fqt$-subspace contained in $S\cap \mu S$.
 For the second part, suppose by contradiction that $\dim_{\fq}(S\cap \mu S)\geq lt$ and $\mu\notin H(Y)$. \\
 Then,  since $S\cap\mu S\subseteq Y\cap\mu Y\subseteq Y$, we get
 \[
 l\leq \dim_{\fqt}(Y\cap\mu Y)\leq l+1.
 \]
 Since $\mu\notin H(Y)$, we have that 
 \[
 \dim_{\fqt}(Y\cap\mu Y)=l
 \]
 and so
 \[
 S\cap \mu S=Y\cap \mu Y \text{ and } \dim_{\fq}(S\cap \mu S)=lt=\dim_{\fq}(\Bar{S}).
 \]
 Note that $\Bar{S}$ is the maximal $\fqt$-subspace contained in $S$, therefore 
 \[
 S\cap\mu S=Y\cap \mu Y= \Bar{S}.
 \]
 In particular, $\dim_{\fqt}(Y\cap\mu Y)=l=\dim_{\fqt}(Y)-1$, i.e. $Y\in \mathcal{G}_{q^t}(n/t,l)$ and $d(Y, \mu Y)=2$, then by Theorem \ref{thm:class} we have that
 \[
 Y=\Bar{T}\oplus\xi\langle 1,\mu,\dots,\mu^{c-1}\rangle_{\fqt}
 \]
 where $\Bar{T}$ is a subspace over $\fqt(\mu)=\F_{q^{ts}}$ with $\dim_{\F_{q^{ts}}}(\Bar{T})=h\geq 0$, where $t \cdot s\mid n, t\cdot s<n$, $c<s$ and $\Bar{T} \cap \xi \F_{q^{ts}}=\{0\}$. In particular, 
 \[
 \dim_{\F_{q}}(Y)=(l+1)t=(sh+c)t.
 \]
 Thus we have that
 \[
 \mu Y= \Bar{T}\oplus\xi\langle \mu, \dots, \mu^c\rangle_{q^t}
 \]
 and so 
 \[
\Bar{T}+\xi\langle \mu, \dots, \mu^{c-1}\rangle_{\fqt}\subseteq Y\cap \mu Y.
 \]
Since $\dim_{\fqt}(Y\cap\mu Y)=l=sh+c-1=\dim_{\fqt}(\Bar{T}+\xi\langle \mu, \dots, \mu^{c-1}\rangle_{\fqt})$, we get that
  \begin{equation}
  \label{eq:ymuYsmuSbarS}
 Y\cap \mu Y=\Bar{T}+\xi\langle \mu, \dots, \mu^{c-1}\rangle_{\fqt}=S\cap\mu S=\Bar{S}.
 \end{equation}
 Since $S=\Bar{S}\oplus b\langle 1, \lmb,\dots,\lmb^{m-1}\rangle_{\fq}=\Bar{S}\oplus b S_m$, we have that
 \begin{equation}
 \label{eq:S1}
 S=\Bar{T}\oplus \xi\langle \mu, \dots, \mu^{c-1}\rangle_{\fqt}\oplus b S_m.
 \end{equation}
 Note that $b S_m=b \langle 1, \lmb, \dots, \lmb^{m-1}\rangle_{\fq}\subseteq Y=\Bar{T}\oplus\xi\langle 1,\mu,\dots,\mu^{c-1}\rangle_{\fqt}$, therefore for any $i\in\{0,\dots,m-1\}$ there exist $\Bar{t}_i\in\Bar{T}$ and $a_{i,j}\in\fqt$ for $j\in\{0,\dots,c-1\}$ such that 
 \[
 b \lmb^i=\Bar{t}_i+\xi(a_{i,0}+a_{i,1}\mu+\dots+a_{i,c-1}\mu^{c-1}).
 \]
 Note that, since $b S_m\cap \Bar{S}=\lbrace 0\rbrace$, $a_{i,0}\neq 0$ for any $i\in \{0,\dots,m-1\}$. Moreover, $\dim_{\fq}(S)=tl+m=t(sh+c-1)+m$, thus we have
 \begin{equation}
 \label{eq:S}
 S= \Bar{T}\oplus\xi\langle \mu, \dots, \mu^{c-1}\rangle_{\fqt}\oplus\xi \langle a_{0,0},\dots,a_{0,m-1}\rangle_{\fq}
 \end{equation}
 and so
 \begin{equation}
 \label{eq:muS}
 \mu S= \Bar{T}\oplus\xi\langle \mu^2, \dots, \mu^{c}\rangle_{\fqt}\oplus\xi\mu \langle a_{0,0},\dots,a_{0,m-1}\rangle_{\fq}.
 \end{equation}
 Note that, since the $a_{i,0}$'s are in $\fqt$, we have that 
 \[
 \langle a_{0,0},\dots,a_{0,m-1}\rangle_{\fq}\subseteq\fqt,
 \] therefore 
 \[
 \xi\mu\langle a_{0,0},\dots,a_{0,m-1}\rangle_{\fq}\subseteq \xi\langle \mu, \dots, \mu^{c-1}\rangle_{\fqt}\subseteq S.
\]
This implies that 
 \[
 \Bar{T}\oplus\xi\langle \mu^2,\dots,\mu^{c-1}\rangle_{\fqt}\oplus\xi\mu\langle a_{0,0},\dots,a_{0,m-1}\rangle_{\fq}\subseteq S\cap \mu S.
 \]
 On the other hand, suppose that $w\in S\cap\mu S$, then by Equations \eqref{eq:S} and \eqref{eq:muS} there exist $\alpha_1,\alpha_i,\alpha_i', \alpha_c\in\fqt$ for $i\in\{2,\dots,c-1\}$, $\beta_j,\beta_j'\in\fq$ for $j\in \{1,\dots,m-1\}$ and $t,t'\in\Bar{T}$ such that
 \begin{equation}
     \begin{aligned}
         w&= t+ \xi (\alpha_1\mu+\dots+\alpha_{c-1}\mu^{c-1})+\xi(\beta_0a_{0,0}+\dots+\beta_{m-1}a_{0,m-1})\\
         &=t'+\xi(\alpha_2'\mu^2+\dots+\alpha_c'\mu^c)+\xi\mu(\beta_0'a_{0,0}+\dots+\beta_{m-1}'a_{0,m-1})
     \end{aligned}
 \end{equation}
 Then, since $\Bar{T}\cap \xi \F_{q^{ts}}=\lbrace 0\rbrace$, it follows that
 \begin{equation}
 \begin{cases}
     t=t'\\
     \beta_0a_{0,0}+\dots+\beta_{m-1}a_{0,m-1}=0\\
     \alpha_1=\beta_0'a_{0,0}+\dots+\beta_{m-1}'a_{0,m-1}\\
     \alpha_2=\alpha_2'\\
     \vdots\\
     \alpha_{c-1}=\alpha_{c-1}'\\
     \alpha_c'=0
     
 \end{cases}
 \end{equation}
 and so
 \[
w=t+\xi(\alpha_2\mu^2+\dots+\alpha_{c-1}\mu^{c-1})+\xi\mu(\beta_0'a_{0,0}+\dots+\beta_{m-1}'a_{0,m-1}).
 \]
Thus it follows that
\begin{equation}
\label{eq:S2}
 S\cap \mu S=\Bar{T}\oplus\xi\langle \mu^2,\dots,\mu^{c-1}\rangle_{\fqt}\oplus\xi\mu\langle a_{0,0},\dots,a_{0,m-1}\rangle_{\fq}.
 \end{equation}
 By Equation \eqref{eq:S2} we get that
 \[
\dim_{\fq}(S\cap\mu S)= lt=(sh+c-1)t=hts+(c-2)t+m
 \]
and so $m=t$, a contradiction. Therefore, $\mu\in H(Y)$.
\end{proof}

Now, {\bf Case (iv)} follows as an immediate consequence of Lemma \ref{lem:case2.4}.

\begin{proof}
Let $t=3$, $m=2$ and $H(Y)=\F_{q^3}$. Let $\mu\in\fqn \setminus \fq$ such that $\dim_{\fq}(S\cap\mu S)\geq 3l=k-2$. Then by Lemma \ref{lem:case2.4} we have that $\mu\in H(Y)\setminus\fq=\F_{q^3}\setminus\fq$ and so since $\bar S$ is $\F_{q^3}$-linear and  $\mu \langle 1,\lambda \rangle \subset \F_{q^3}$, we have that 
\[
\dim_{\fq}(S\cap\mu S)=k-1=3l+1
\]
and so $\omega_4(\C)=0$.
\end{proof}

Theorem \ref{thm:maintheoremjump} is now proved and, as a consequence, we get also Theorem \ref{thm:noFWScode}. 
Indeed, if $\mathbb{F}_{q^t} \subset H(Y)$ then (iii) of Theorem \ref{thm:maintheoremjump} implies that $\C$ is not an FWS code.
Assume that $H(Y)=\mathbb{F}_{q^t}$, then (i) of Theorem \ref{thm:maintheoremjump} implies that if $m<t-1$ then $\C$ is not an FWS code. Therefore, it remains to analyze the case in which $m=t-1$. By (ii) of Theorem \ref{thm:maintheoremjump}, if $t>3$ then $\C$ is not an FWS code. (iv) of Theorem \ref{thm:maintheoremjump} implies that when $(t,m)=(3,2)$ then $\C$ is not an FWS code.
In the next subsection we will analyze the missing case, i.e. $(t,m)=(2,1)$ and $H(Y)=\F_{q^2}$.

\subsection{Proof of Theorem \ref{thm:casopiccoloFWS}}\label{subsec:smallcase}

In this subsection we prove Theorem \ref{thm:casopiccoloFWS}, in order to do so we need to deal with some intermediate steps and auxiliary results.

\begin{proposition}
    \label{prop:pushup1}
Let $\Bar{S}=a \langle 1, \lmb, \dots, \lmb^{l-1}\rangle_{\F_{q^2}}$ be an $\F_{q^2}$-subspace of $\fqn$ with $2l+1\leq [\F_{q^2}(\lmb)\colon \fq]$. Let $b\in\fqn^*$ and $S=\Bar{S} + b\fq$ such that $b\F_{q^2}\cap\Bar{S}=\lbrace 0 \rbrace$. If there exists $\mu\in\fqn\setminus\fq$ such that $\dim_{\fq}(S\cap\mu S)=2l-1$, then, up to multiply by a non-zero scalar in $\F_{q^2}$, $S$ is one of the following:
\begin{itemize}
    \item [1)] $S=a(\langle 1, \lmb, \dots, \lmb^{l-1}\rangle_{\F_{q^2}}\oplus\lmb^{l}\fq)$;
    \item [2)] $S=a \rho^{l-1} (\langle 1, \Bar{\lambda}, \dots, \Bar{\lambda}^{l-1}\rangle_{\F_{q^2}}\oplus\Bar{\lambda}^{l}\fq)$, where $\rho=f+g\lambda$, $\Bar{\lambda}=\frac{1}{\rho}$,  and $f,g\in \F_{q^2}$ with $g\neq 0$.
\end{itemize}
\end{proposition}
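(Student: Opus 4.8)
The plan is to reduce the hypothesis to the setting governed by Proposition~\ref{prop1 h'h} and Lemma~\ref{Lem:lemmamu}, and then to recover $b$ from the unique ``extra'' $\fq$-direction that $S\cap\mu S$ has over $\bar S\cap\mu\bar S$; the whole classification then comes out of a single Euclidean division. First I would localize everything. Put $t=[\F_{q^2}(\lambda)\colon\F_{q^2}]$, $W=\langle 1,\lambda,\dots,\lambda^{l-1}\rangle_{\F_{q^2}}$ (so $\bar S=aW$) and $Y=\langle S\rangle_{\F_{q^2}}=\bar S\oplus b\F_{q^2}$. Since $\dim_{\fq}(S\cap\mu S)=2l-1$ is odd, the cases I) and III) of Proposition~\ref{prop1 h'h}, whose intersections have even $\fq$-dimension, cannot occur; hence we are in case II), with $\dim_{\F_{q^2}}(\bar S\cap\mu\bar S)=l-1$ and $\dim_{\F_{q^2}}(Y\cap\mu Y)=l$, and $S\cap\mu S$ is not $\F_{q^2}$-linear. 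In particular $\mu\notin\F_{q^2}$, and since $\bar S\cap\mu\bar S\neq\{0\}$ we may write $\mu=w_1/w_2$ with $w_1,w_2\in\bar S\setminus\{0\}$, so $\mu\in\F_{q^2}(\lambda)=\F_{q^{2t}}$. The boundary value $l=t-1$ is excluded: there $\bar S$ would be an $\F_{q^2}$-hyperplane of the $\mu$-invariant space $a\F_{q^{2t}}$, and projecting modulo $a\F_{q^{2t}}$ shows that the $b$-directions of $Y$ and $\mu Y$ cannot align, forcing $Y\cap\mu Y\subseteq\bar S\cap\mu\bar S$ against $l>l-1$. Thus $l\le t-2$, and Lemma~\ref{Lem:lemmamu} applied over $\F_{q^2}$ to $\bar S$ gives
\[ \mu=\frac{p(\lambda)}{q(\lambda)},\qquad p,q\in\F_{q^2}[x],\quad \gcd(p,q)=1,\quad \max\{\deg p,\deg q\}=1. \]

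Next I would extract the extra vector. Because $\dim_{\fq}(S\cap\mu S)=\dim_{\fq}(\bar S\cap\mu\bar S)+1$, we have $S\cap\mu S=(\bar S\cap\mu\bar S)\oplus\eta\fq$ for some $\eta\notin\bar S\cap\mu\bar S$. Writing $\eta$ both in $S=\bar S\oplus b\fq$ and in $\mu S=\mu\bar S\oplus\mu b\fq$ produces $\bar s=a\sigma(\lambda)$ and $\bar s'=a\tau(\lambda)$ with $\sigma,\tau\in\F_{q^2}[x]_{\le l-1}$, together with $\gamma,\gamma'\in\fq$ (not both zero), such that $\eta=\bar s+b\gamma=\mu\bar s'+\mu b\gamma'$. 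As $\mu\notin\fq$ the scalar $\gamma-\mu\gamma'$ is nonzero, whence $b=(\mu\bar s'-\bar s)/(\gamma-\mu\gamma')$, and substituting $\mu=p/q$ and clearing denominators yields
\[ b':=a^{-1}b=\frac{p(\lambda)\tau(\lambda)-q(\lambda)\sigma(\lambda)}{\gamma\,q(\lambda)-\gamma' p(\lambda)}=\frac{N(\lambda)}{D(\lambda)},\qquad \deg N\le l,\quad \deg D\le 1,\quad D\neq 0, \]
where $D\neq 0$ because $\mu\notin\fq$ and $(\gamma,\gamma')\neq(0,0)$. This single identity treats uniformly the three a priori different locations of $\eta$ (inside $\bar S$, inside $\mu\bar S$, or in neither).

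Finally I would divide. Note that $S=a(W\oplus b'\fq)$ is unchanged if $b'$ is altered by an element of $W$, and that $b'\notin W$ because $b'\F_{q^2}\cap W=\{0\}$. If $\deg D=0$, then $b'=D^{-1}N(\lambda)\in\F_{q^2}[\lambda]_{\le l}$, and $b'\notin W$ forces the coefficient of $\lambda^l$ to be some nonzero $\beta\in\F_{q^2}$; thus $b'\equiv\beta\lambda^l\pmod W$, and after scaling $S$ by $\beta^{-1}\in\F_{q^2}^*$ one obtains $S=\bar S\oplus a\lambda^l\fq=a(\langle 1,\lambda,\dots,\lambda^{l-1}\rangle_{\F_{q^2}}\oplus\lambda^l\fq)$, i.e. case~1). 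If $\deg D=1$, Euclidean division gives $b'=Q(\lambda)+R/D(\lambda)$ with $Q\in\F_{q^2}[x]_{\le l-1}$ (so $Q(\lambda)\in W$) and $R\in\F_{q^2}$; here $b'\notin W$ forces $R\neq 0$, so $b'\equiv R/D(\lambda)\pmod W$, and taking $\rho=D(\lambda)/R$ (so that $\rho=f+g\lambda$ with $g=d_1/R\neq 0$, writing $D=d_1\lambda+d_0$) gives exactly $S=a\rho^{l-1}\bigl(\langle 1,\bar\lambda,\dots,\bar\lambda^{l-1}\rangle_{\F_{q^2}}\oplus\bar\lambda^l\fq\bigr)$ with $\bar\lambda=1/\rho$, i.e. case~2).

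The hard part is the middle step: producing the uniform expression $b'=N/D$ with the sharp bounds $\deg N\le l$ and $\deg D\le 1$, and in particular controlling the three possible positions of the extra vector $\eta$ through the single relation $\eta=\bar s+b\gamma=\mu\bar s'+\mu b\gamma'$. Once that formula is in hand, separating into the two families is merely Euclidean division modulo $W$, the cases $\deg D=0$ and $\deg D=1$ matching families~1) and~2). A secondary, purely technical point is the exclusion of the boundary value $l=t-1$, which is what legitimizes the application of Lemma~\ref{Lem:lemmamu}.
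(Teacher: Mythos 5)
Your proposal is correct and takes essentially the same route as the paper's proof: parity of $2l-1$ forces case II of Proposition \ref{prop1 h'h}, Lemma \ref{Lem:lemmamu} (applied over $\F_{q^2}$ to $\bar S$) writes $\mu$ as a quotient of degree-one polynomials in $\lambda$, the extra vector $\eta$ spanning $S\cap\mu S$ over $\bar S\cap\mu\bar S$ yields $b/a=N(\lambda)/D(\lambda)$ with $\deg N\le l$ and $\deg D\le 1$, and the split on $\deg D$ produces the two families (your reduction modulo $W$ is just a tidier version of the paper's explicit chain of rescalings by $f$, $a_l^{-1}$, $\rho$ and $b_0^{-1}$). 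The points where you go beyond the paper are to your credit rather than deviations: you justify $\mu\in\F_{q^2}(\lambda)$, you allow the scalars $\gamma,\gamma'$ to be merely not both zero (the paper asserts $w,w'\in\fq^*$), and you explicitly exclude $l=t-1$, where the hypothesis $1<k<n-1$ of Lemma \ref{Lem:lemmamu} fails and its characterization of $\mu$ is in fact false — a boundary case the published proof passes over in silence, which your hyperplane/quotient argument legitimately closes.
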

\begin{proof}
By hypothesis, $\dim_{\fq}(S\cap\mu S)=2l-1$, then by Proposition \ref{prop1 h'h} it follows that $\dim_{\fq}(\Bar{S}\cap\mu\Bar{S})=2l-2$ and $\dim_{\fq}(Y\cap\mu Y)=2l$, where $Y=\Bar{S}\oplus b \F_{q^2}$. Moreover, since $\Bar{S}=a \langle 1, \lmb, \dots, \lmb^{l-1}\rangle_{\F_{q^2}}$ and $\dim_{\F_{q^2}}(\Bar{S}\cap\mu \Bar{S})=l-1=\dim_{\F_{q^2}}(\Bar{S})-1$, by Lemma \ref{Lem:lemmamu} we have that $\mu=\frac{\alpha+\beta \lmb}{\gamma+\delta \lmb}$ with $\alpha,\beta,\gamma,\delta\in\F_{q^2}$ such that $\alpha \delta-\beta \gamma \neq 0$. Also,
\begin{equation}
    \begin{aligned}
      \Bar{S}\cap\mu \Bar{S}&=a\lbrace (\alpha+\beta \lmb)(a_0+\dots+a_{l-2}\lmb^{l-2}) \colon a_i\in\F_{q^2}, i\in\lbrace 0, \dots, l-2\rbrace\rbrace\\
      &=a\mu\lbrace (\gamma+\delta \lmb)(a_0+\dots+a_{l-2}\lmb^{l-2}) \colon a_i\in\F_{q^2}, i\in\lbrace 0, \dots, l-2\rbrace\rbrace \subseteq S\cap\mu S.
    \end{aligned}
\end{equation}
Since $\dim_{\fq}(S\cap\mu S)=2l-1$, we have
\[
S\cap\mu S=(\bar{S}\cap\mu\Bar{S})\oplus\xi\fq
\]
for some $\xi\in S\cap\mu S$. This implies that there exist $r(x),r'(x)\in\F_{q^2}[x]_{\leq l-1}, w,w'\in\fq^*$ such that 
\[
\xi=ar(\lmb)+bw=\mu(ar'(\lmb)+bw')
\]
and so
\[
a(r(\lmb)-\mu r'(\lmb))=b(\mu w'-w).
\]
Note that $\mu w'-w\neq 0$ otherwise $\mu$ would be in $\fq$. So,
\[
b=a\frac{r(\lmb)-\mu r'(\lmb)}{\mu w'-w}.
\]
Then we have 
\[
b=a\frac{r(\lmb)(\gamma+\delta \lmb)-(\alpha+\beta \lmb)r'(\lmb)}{(\alpha+\beta\lmb) w'-(\gamma+\delta\lmb)w}
\]
and so
\[
\frac{b}{a}=\frac{q(\lmb)}{f+g\lmb}
\]
where $q(x)\in\F_{q^2}[x]_{\leq l}$, $f=\alpha w'-\gamma w, g=\beta w'-\delta w$ and $(f,g)\neq(0,0)$ because $\mu\notin\fq$. This implies that 
\[
S=a\langle 1, \lmb, \dots, \lmb^{l-1}\rangle_{\F_{q^2}}\oplus b\fq=a\left(\langle 1, \lmb, \dots, \lmb^{l-1}\rangle_{\F_{q^2}}\oplus\frac{q(\lmb)}{f+g\lmb}\fq\right).
\]
If $g=0$, since $f\in\F_{q^2}^*$, then
\[
a^{-1}f S=\langle 1, \lmb, \dots, \lmb^{l-1}\rangle_{\F_{q^2}}\oplus q(\lmb)\fq.
\]
Also, note that $q(\lmb)=a_0+a_1\lmb+\dots+a_l\lmb^l$ for some $a_0,\ldots,a_l\in\F_{q^2}$. In particular, $a_l\neq 0$ because otherwise $\dim_{\fq}(S)<2l+1$. Therefore,
\[
a_l^{-1}a^{-1}f S=\langle 1, \lmb, \dots, \lmb^{l-1}\rangle_{\F_{q^2}}\oplus \lmb^l\fq,
\]
and hence, since $a_l, f\in \F_{q^2}$, we get Case 1).\\

If $g\neq 0$, denote by $\rho=f+g\lmb$, then $\lmb=\frac{\rho-f}{g}$ and so
\[
a^{-1}(f+g\lmb)S=(f+g\lmb)\langle 1, \lmb, \dots, \lmb^{l-1}\rangle_{\F_{q^2}}\oplus q(\lmb)\fq.
\]
Moreover, $q(\lmb)=q(\frac{\rho-f}{g})=\bar{q}(\rho)$ and $\deg(\bar{q}(x))=\deg(q(x))\leq l$. Therefore
\[
a^{-1}\rho S=\langle \rho, \dots, \rho^l\rangle_{\F_{q^2}}\oplus \bar{q}(\rho)\fq
\]
and $\bar{q}(\rho)=b_0+b_1\rho+\dots+b_l\rho^l$ for some $b_0,\ldots,b_l\in\F_{q^2}$ and $b_0\neq 0$ otherwise $\dim_{\fq}(S)<2l+1$. Thus
\[
a^{-1}b_0^{-1}\rho S=\langle \rho, \dots, \rho^l\rangle_{\F_{q^2}}\oplus \fq.
\]

Finally, it is easy to check that
\[
\frac{1}{\rho^{l}} \left(\langle \rho,\rho^2,\ldots,\rho^{l}\rangle_{\F_{q^2}}\oplus \fq\right)=\langle 1,\bar{\lambda},\ldots,\bar{\lambda}^{l-1}\rangle_{\F_{q^2}}\oplus \bar{\lambda}^l\fq
\]
where $\bar{\lambda}=\frac{1}{\rho}$ and hence Case 2) follows.
\end{proof}



The next proposition shows that an FWS one-orbit cyclic subspace code in $\mathcal{G}_{q^2}(\frac{n}{2},k)$, from Family 2) of the main theorem, cannot be further extended  while preserving the property of being FWS.

\begin{proposition}
    \label{prop:pushup2}
Let $\F_{q^4}(\lmb)$ be a subfield of $\fqn$ and let 
\[S= \langle 1, \lmb, \dots, \lmb^{h-1}\rangle_{\F_{q^4}}\oplus\lmb^h\F_{q^2} \oplus b\fq,\]
with $ b\in\fqn^*$ and $\dim_{\fq}(S)=4h+3$. Let $\Bar{S}=\langle 1, \lmb, \dots, \lmb^{h-1}\rangle_{\F_{q^4}}\oplus\lmb^h\F_{q^2}$ and suppose that 
$\mathrm{Orb}(\Bar{S}) \in \mathcal{G}_{q^2}(\frac{n}{2},2h+1)$ is an FWS code.
Then $\omega_4(\mathrm{Orb}(S))=0$, and so $\mathrm{Orb}(S)$ is not an FWS code in $\mathcal{G}_{q}(n,4h+3)$.
\end{proposition}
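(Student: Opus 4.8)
The plan is to show directly that the weight $\omega_4$, which counts the shifts $\mu S$ with $d(S,\mu S)=4$, vanishes; equivalently, that there is no $\mu\in\fqn\setminus\fq$ with $\dim_{\fq}(S\cap\mu S)=4h+1$ (since $\dim_{\fq}(S)=4h+3$). Throughout I would write $\Bar{S}=\langle 1,\lmb,\dots,\lmb^{h-1}\rangle_{\F_{q^4}}\oplus\lmb^h\F_{q^2}$, an $\F_{q^2}$-subspace of $\fqn$ of $\F_{q^2}$-dimension $2h+1$, and set $Y'=\langle\Bar{S}\rangle_{\F_{q^4}}=\langle 1,\lmb,\dots,\lmb^h\rangle_{\F_{q^4}}$, $A=\langle 1,\lmb,\dots,\lmb^{h-1}\rangle_{\F_{q^4}}$ and $Y=\langle S\rangle_{\F_{q^2}}=\Bar{S}\oplus b\F_{q^2}$. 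Since $\F_{q^4}\subseteq\fqn$ we have $4\mid n$. Moreover, because $\mathrm{Orb}(\Bar{S})$ is FWS in $\mathcal{G}_{q^2}(n/2,2h+1)$, reading Corollary \ref{cor:FWSq2} over the base field $\F_{q^2}$ (with quadratic extension $\F_{q^4}$) forces $t':=[\F_{q^4}(\lmb):\F_{q^4}]\geq 2h+1$; in particular $1,\lmb,\dots,\lmb^h$ are $\F_{q^4}$-linearly independent, so $A\cap\lmb^h\F_{q^4}=\{0\}$.

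First I would exploit the parity of the intersection dimension. Applying Proposition \ref{prop1 h'h} to $S=\Bar{S}\oplus b\fq$, an \emph{odd} value of $\dim_{\fq}(S\cap\mu S)$ can occur only in Case II, and there it equals $\dim_{\fq}(\Bar{S}\cap\mu\Bar{S})+1$. Hence $\dim_{\fq}(S\cap\mu S)=4h+1$ would force $\dim_{\fq}(\Bar{S}\cap\mu\Bar{S})=4h$, that is $\dim_{\F_{q^2}}(\Bar{S}\cap\mu\Bar{S})=2h=\dim_{\F_{q^2}}(\Bar{S})-1$. Now $\Bar{S}$ is exactly a Family-(2) subspace over $\F_{q^2}$, so I would invoke part i) of Proposition \ref{prop3} (applied over $\F_{q^2}$, which is legitimate since $4\mid n$ makes $n/2$ even) to conclude that $\dim_{\F_{q^2}}(\Bar{S}\cap\mu\Bar{S})=2h$ holds if and only if $\mu\in\F_{q^4}\setminus\F_{q^2}$. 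Thus every candidate $\mu$ must lie in $\F_{q^4}\setminus\F_{q^2}$.

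It then remains to compute $\dim_{\fq}(S\cap\mu S)$ for $\mu\in\F_{q^4}\setminus\F_{q^2}$ and to check it is even. For such $\mu$, multiplication by $\mu$ preserves the $\F_{q^4}$-subspaces $A$ and $Y'$, and a short computation using $A\cap\lmb^h\F_{q^4}=\{0\}$ together with $\mu\notin\F_{q^2}$ gives $\Bar{S}\cap\mu\Bar{S}=A$. I would then split on whether $b\in Y'$. If $b\notin Y'$, the same disjointness argument, now carried out inside the direct sum $A\oplus\lmb^h\F_{q^4}\oplus b\F_{q^4}$, yields $Y\cap\mu Y=A$, so we land in Case I of Proposition \ref{prop1 h'h} and $\dim_{\fq}(S\cap\mu S)=4h$. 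If $b\in Y'$, then $Y=Y'$ is $\F_{q^4}$-linear and $\mu$-invariant, which puts us in Case III with $\dim_{\fq}(S\cap\mu S)=4h+2$; concretely $S\cap\mu S=A\oplus\lmb^h(W\cap\mu W)$, where $W=\F_{q^2}+\beta\fq\subseteq\F_{q^4}$ with $\beta\in\F_{q^4}\setminus\F_{q^2}$ the leading $\F_{q^4}$-coefficient of $b$, so $W$ is an $\fq$-hyperplane of the $4$-dimensional space $\F_{q^4}$; since $[H(W):\fq]$ divides both $3=\dim_{\fq}W$ and $4$, one has $H(W)=\fq$, whence $W\neq\mu W$ and $\dim_{\fq}(W\cap\mu W)=2$. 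In either case $\dim_{\fq}(S\cap\mu S)$ is even, contradicting the value $4h+1$ forced in the previous step; therefore $\omega_4(\mathrm{Orb}(S))=0$ and $\mathrm{Orb}(S)$ is not an FWS code.

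The main obstacle is the second step, namely pinning the relevant $\mu$ down to $\F_{q^4}\setminus\F_{q^2}$: this is precisely where the hypothesis that $\Bar{S}$ (the maximal $\F_{q^2}$-subspace of $S$) defines a Family-(2) FWS code over $\F_{q^2}$ is indispensable, since it rigidifies the set of shifts realizing $\F_{q^2}$-distance $2$ on $\Bar{S}$ to lie in the single field $\F_{q^4}$. Once $\mu$ is confined to $\F_{q^4}$, everything reduces to concrete linear algebra over $\F_{q^4}$, the only delicate points being the clean separation of the two cases $b\in Y'$ and $b\notin Y'$ and the verification that $W$ is a genuine $\fq$-hyperplane with trivial stabilizer.
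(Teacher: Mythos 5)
Your proposal is correct, and while it shares the paper's overall skeleton---parity via Proposition \ref{prop1 h'h} forcing $\dim_{\fq}(\Bar{S}\cap\mu\Bar{S})=4h$, confinement of $\mu$ to $\F_{q^4}\setminus\F_{q^2}$ via the FWS hypothesis on $\Bar{S}$, then explicit linear algebra over $\F_{q^4}$---you execute both key steps with genuinely different tools. For the confinement, the paper applies the final clause of Proposition \ref{prop1 h'h} over $\F_{q^2}$ (so either $\mu\in H(\Bar{\Bar{S}})=\F_{q^4}$ or $\mu\in H(\langle\Bar{S}\rangle_{\F_{q^4}})$) and then invokes part (iii) of Theorem \ref{thm:maintheoremjump} to conclude $H(\langle\Bar{S}\rangle_{\F_{q^4}})=\F_{q^4}$; you instead exploit the explicit polynomial shape of $\Bar{S}$ and quote part i) of Proposition \ref{prop3} over the base field $\F_{q^2}$, with the ``if and only if'' of Corollary \ref{cor:FWSq2} supplying the hypothesis $[\F_{q^4}(\lmb)\colon\F_{q^4}]\geq 2h+1$ that Proposition \ref{prop3} needs---a legitimate shortcut, and arguably the more direct one given that $\Bar{S}$ is handed to you in Family-(2) form. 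For the endgame, the paper argues by contradiction: from a hypothetical $\xi\in(S\cap\mu S)\setminus(\Bar{S}\cap\mu\Bar{S})$ it forces $b\in\langle 1,\lmb,\dots,\lmb^h\rangle_{\F_{q^4}}$ with $\lmb^h$-coefficient $\rho\notin\F_{q^2}$ and then gets $\dim_{\fq}(S\cap\mu S)\geq 4h+2$ via $U=\F_{q^2}\oplus\rho\fq$; your split on $b\in Y'$ versus $b\notin Y'$ reaches the same contradiction but yields the exact even values $4h+2$ and $4h$ respectively, which is slightly more informative (the paper handles your $b\notin Y'$ branch only implicitly, through the nonexistence of a suitable $\xi$). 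Your extra verification that $H(W)=\fq$ is in fact dispensable: even if $W=\mu W$ one would get $\dim_{\fq}(S\cap\mu S)=4h+3$, still even, so parity alone closes the case. One shared caveat worth noting: your appeal to Proposition \ref{prop3} requires $h\geq 1$, but the paper's proof makes the same implicit assumption (its claim $H(\Bar{\Bar{S}})=\F_{q^4}$ is vacuous for $h=0$), so this reflects the intended reading of the statement rather than a defect of your argument.
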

\begin{proof}
Suppose by contradiction that $\omega_4(\mathrm{Orb}(S))\ne 0$ and so there exists $\mu\in\fqn$ such that $\dim_{\fq}(S\cap\mu S)=4h+1$. Let $Y=\langle S \rangle_{\F_{q^2}}=\langle 1, \lmb, \dots, \lmb^{h-1}\rangle_{\F_{q^4}}\oplus\lmb^h\F_{q^2}\oplus  b\F_{q^2}=\Bar{S}\oplus b\F_{q^2}$. Then by Proposition \ref{prop1 h'h}, it follows that 
\[
\dim_{\fq}(\Bar{S}\cap\mu \Bar{S})=4h \text{ and } \dim_{\fq}(Y\cap\mu Y)=4h+2.
\]
Note that $\Bar{S}=\Bar{\Bar{S}}\oplus\lmb^h\F_{q^2}$ where $\Bar{\Bar{S}}=\langle 1, \lmb, \dots, \lmb^{h-1}\rangle_{\F_{q^4}}$ and  $\dim_{\F_{q^2}}(\bar{S}\cap\mu \bar{S})=2h=\dim_{\F_{q^2}}(\bar{S})-1$. Then by applying again Proposition \ref{prop1 h'h} we have that 
\[ 
\, \mbox{either} \,\, \, \mu\in H(\Bar{\Bar{S}})=\F_{q^4} \,\, \, \mbox{or} \,\, \mu \in H(\langle\Bar{S}\rangle_{\F_{q^4}}).
\]
Since $\mathrm{Orb}(\Bar{S})$ is an FWS code, by (iii) of Theorem \ref{thm:maintheoremjump}, it follows that $H(\langle\Bar{S}\rangle_{\F_{q^4}})=\F_{q^4}$.
Therefore, we get $\mu \in \F_{q^4} \setminus\F_{q^2}$ and $\Bar{S}\cap\mu \Bar{S}= \Bar{\Bar{S}}$. Moreover, there exists $\xi\in(S\cap\mu S)\setminus(\Bar{S}\cap\mu\Bar{S})$, i.e. there exist $p(x),q(x)\in\F_{q^4}[x]_{\leq h-1}, w',w\in\F_{q^2}^*$ and $\beta,\beta'\in\fq^*$ such that
\[
\xi=p(\lmb)+\lmb^h w+b\beta=\mu(q(\lmb)+\lmb^hw'+b\beta').
\]
This implies that
\[
(p(\lmb)-\mu q(\lmb))=\lmb^h(\mu w'-w)+b(\mu \beta'-\beta).
\]
Note that $\mu\beta'-\beta\neq 0$ and $\mu w'-w\neq 0$, since $\mu\in\F_{q^4}\setminus\F_{q^2}$. So that
\[
b=\frac{p(\lmb)-\mu q(\lmb)}{\mu\beta'-\beta}-\lmb^h\frac{\mu w'-w}{\mu \beta'-\beta}\in \langle 1,\lmb, \lmb^2,....\lmb^h \rangle_{\F_{q^4}}
\]
because $\frac{p(\lmb)-\mu q(\lmb)}{\mu\beta'-\beta}\in \Bar{\Bar{S}}$ and $\rho=\frac{\mu w'-w}{\mu \beta'-\beta}\in\F_{q^4}^*$. This implies that
\[
S=\langle 1, \lmb, \dots, \lmb^{h-1}\rangle_{\F_{q^4}}\oplus \lmb^h\F_{q^2}\oplus \lmb^h\rho\fq
\]
and $\rho\notin\F_{q^2}$ because otherwise $\dim_{\fq}(S)<4h+3$.  
Denote by $U=\F_{q^2}\oplus\rho\fq $, we have
\[
S=\Bar{\Bar{S}}\oplus \lmb^hU \text{ and } \mu S=\Bar{\Bar{S}}\oplus\mu\lmb^h U.
\]
Since $\dim_{\fq} (U)=3$ and $U\subset \F_{q^4}$, we get $\dim_{\fq}(U\cap \mu U)\geq 2$ and hence $\dim_{\fq}(S\cap \mu S)\geq 4h+2$, a contradiction. 
This means that  $\dim_{\fq}(S\cap \mu S)\neq 4h+1$ for any $\mu \in \fqn$ and hence $d(S,\mu S)\neq 4$ for any $\mu \in \fqn$.
\end{proof}

Now we are ready to prove Theorem \ref{thm:casopiccoloFWS}.
\begin{proof}
Note that, since $\mathrm{Orb}(S) \subset \mathcal {G}_{q}(n,2l+1)$ is an FWS code, then by Corollary \ref{cor:SisFWSthenbarSisFWSq2}  $\mathrm{Orb}(\Bar{S})$ is an FWS code in $\mathcal {G}_{q^2}(\frac{n}{2},l)$. 
Hence by Theorem \ref{thm:class},  Theorem \ref{thm:noFWScode} and  Corollary \ref{cor:SisFWSthenbarSisFWSq2}, it follows that one of the following occurs
\begin{itemize}
    \item [$1.1)$] $\bar{S}=a\langle 1, \lmb, \dots, \lmb^{l-1}\rangle_{\F_{q^2}}$ and  $l\leq \frac{[\F_{q^2}(\lambda)\colon\F_{q^2}]+1}{2}$;
    \item[$1.2)$] $\bar{S}=\bar{S}_1\oplus b_1\F_{q^2}$ where $\Bar{S}_1$ is an $\F_{q^4}$-subspace of dimension $l_1=\frac{l-1}{2}$ over $\F_{q^4}$,  $b_1\F_{q^4}\cap \bar{S}_1=\lbrace 0\rbrace$, $H(\langle \bar{S} \rangle_{\F_{q^4}})=\F_{q^4}$ and $\mathrm{Orb}(\bar{S}_1)$ is an FWS code in $\mathcal {G}_{q^4}(\frac{n}{4},\frac{l-1}{2})$.
\end{itemize}

In Case 1.1), we get
$$S=a\langle 1, \lmb, \dots, \lmb^{l-1}\rangle_{\F_{q^2}}+b\fq.$$
Now, applying Proposition \ref{prop:pushup1}  and Corollary \ref{cor:FWSq2} we obtain the assertion.\\
In Case 1.2), since $\mathrm{Orb}(\bar{S}_1)$ is an FWS code in $\mathcal {G}_{q^4}(\frac{n}{4},\frac{l-1}{2})$, there exists $\lmb_1 \in \fqn$ such that $d(\Bar{S}_1,\lmb_1 \Bar{S}_1)=2$.  Hence by Theorems \ref{thm:class} and \ref{thm:noFWScode}, it follows that one of the following occurs
\begin{itemize}
    \item [$2.1)$] $\bar{S_1}=a_1\langle 1, \lmb_1, \dots, \lmb_1^{l-1}\rangle_{\F_{q^4}}$ and $\dim_{\F_{q^4}}(\bar{S}_1)\leq \frac{[\F_{q^4}(\lmb_1)\colon \F_{q^2}]+1}{2}$;
    \item[$2.2)$] $\bar{S}_1=\bar{S}_2\oplus b_2\F_{q^4}$ where $\Bar{S}_2$ is an $\F_{q^8}$-subspace of dimension $l_2=\frac{l_1-1}{2}=\frac{l-3}{4}$ over $\F_{q^8}$,  $b_2\F_{q^8}\cap \bar{S}_2=\lbrace 0\rbrace$, $H(\langle \bar{S}_1 \rangle_{\F_{q^8}})=\F_{q^8}$ and $\mathrm{Orb}(\bar{S}_2)$ is an FWS code in $\mathcal {G}_{q^8}(\frac{n}{8},\frac{l-3}{4})$.
\end{itemize}
In Case $2.1)$, we get
$$\Bar{S}=a_1\langle 1, \lmb_1, \dots, \lmb_1^{l-1}\rangle_{\F_{q^4}}\oplus b_1\F_{q^2}.$$
Now, applying Propositions \ref{prop:pushup1} to the $\F_{q^4}$-subspace $a_1\langle 1, \lmb_1, \dots, \lmb_1^{l-1}\rangle_{\F_{q^4}}$, we get that there exist $\Bar{a}_1$ and $\Bar{\lambda}_1$ in $\fqn$ such that
$$\Bar{S}=\Bar{a}_1 (\langle 1, \Bar{\lambda}_1, \dots, \Bar{\lambda}_1^{l-1}\rangle_{\F_{q^4}} \oplus \Bar{\lambda}_1^{l} \F_{q^2}),$$
and hence 
$$\Bar{a}_1^{-1}S=\langle 1, \Bar{\lambda}_1, \dots, \Bar{\lambda}_1^{l-1}\rangle_{\F_{q^4}} \oplus \Bar{\lambda}_1^{l} \F_{q^2}\oplus b \Bar{a}_1^{-1}\fq,$$
 and by Proposition \ref{prop:pushup2}, we get that $\mathrm{Orb}(S)$ is not an FWS code, a contradiction. Hence Case $2.1)$ does not occur.\\
 For the Case $2.2)$
 $$ S=\bar{S}_2\oplus b_2\F_{q^4} \oplus b_1 \F_{q^2} \oplus b\fq,$$  and $\mathrm{Orb}(\bar{S}_2)$ is an FWS code in $\mathcal{G}_{q^8}(\frac{n}{8},\frac{l-3}{4})$ and $\mathrm{Orb}(\bar{S}_1)$ is an FWS code in $\mathcal{G}_{q^4}(\frac{n}{4},\frac{l-1}{2})$.
 Hence we may apply again Theorems \ref{thm:class} and \ref{thm:noFWScode} to the $\F_{q^8}$-subspace $\bar{S}_2$, but now by the previous arguments we only get the following case
 
 \begin{itemize}
    \item[$3)$] $\bar{S}_2=\bar{S}_3\oplus b_3\F_{q^{8}}$ where $\Bar{S}_3$ is an $\F_{q^{16}}$-subspace of dimension $l_3=\frac{l_2-1}{2}=\frac{l-7}{8}$ over $\F_{q^{16}}$,  $b_3\F_{q^{16}}\cap \bar{S}_3=\lbrace 0\rbrace$, $H(\langle \bar{S}_2 \rangle_{\F_{q^{16}}})=\F_{q^{16}}$ and $\mathrm{Orb}(\bar{S}_3)$ is an FWS code in $\mathcal {G}_{q^{16}}(\frac{n}{16},\frac{l-7}{8})$.
\end{itemize}
This means that after a finite number of steps we get 
$$S=\bar{S}_t \oplus b_t \F_{q^{2^t}}\oplus b_{t-1} \F_{q^{2^{t-1}}}\oplus \cdots \oplus b_1 \F_{q^2} \oplus b \fq,$$
where $\bar{S}_t=a_t \langle 1, \lmb_t,\dots,\lmb_t^{h_t-1}\rangle_{\F_{q^{2t}}}$ for some $\lmb_t\in\fqn$ and integer $h_t\geq 0$ and  for any $j\in\{0,\dots, t-1\}$ the code $\mathrm{Orb}(\Bar{S}_t\oplus b_t\F_{q^{2^{t}}}\oplus\dots\oplus b_{j}\F_{q^{2^{j}}})$ is an FWS code in $\mathcal{P}_{q^{2^j}}(n)$. 
In particular,
\[
\Bar{a_t}^{-1}\Tilde{S}=\Bar{a_t}^{-1}(\Bar{S}_t\oplus b_t\F_{q^{2^{t}}} \oplus b_{t-1}\F_{q^{2^{t-1}}})= \langle 1, \lmb_t,\dots,\lmb_t^{h_t-1}\rangle_{\F_{q^{2t}}} \oplus \frac{b_t}{\Bar{a_t}}\F_{q^{2^{t}}} \oplus \frac{b_{t-1}}{\Bar{a_t}} \F_{q^{2^{t-1}}}
\]  
defines an FWS code in $\mathcal{P}_{q^{2^{t-1}}}(n)$,  and this is not possible because of Proposition  \ref{prop:pushup2}. So also Case 2.2) does not occur. This means that the only case that can occur is Case 1.1). Finally, Corollary \ref{cor:FWSq2} implies Theorem \ref{thm:casopiccoloFWS}
\end{proof}

\subsection{Proof of the Main Theorem}\label{subs:maintheoremfinal}

In this section we prove the main theorem of this paper, that is Theorem \ref{thm:maintheorem}.

\begin{proof}
Since $\cC=\mathrm{Orb}(S)$ is an FWS code, we can apply Theorem \ref{thm:class} to $\C$. In Case i) of Theorem \ref{thm:class}, we get that $S$, up to cyclic shift,  is as in $(1)$ of the statement and $m(\cC)=\dim_{\fq}(\fq (\lambda))$. From Corollary \ref{cor:existFWSpolbas} we get the complete characterization of subspaces with a polynomial basis defining FWS codes.  In Case ii) of Theorem \ref{thm:class}, using Theorems \ref{thm:noFWScode} and \ref{thm:casopiccoloFWS}, we get that $S$, up to cyclic shift,  is as in $(2)$ of Theorem \ref{thm:maintheorem} and by Corollary \ref{cor:FWSq2} we completely determine the case in which such subspaces define FWS codes. 

\end{proof}

\section{Concluding remarks}

In this paper we have provided a characterization result for one-orbit cyclic subspace codes without zeroes in their weight distribution, which we called \emph{full weight spectrum codes}. This result may be also stated in a purely geometric way. Indeed, our classification result allows us to determine the orbits, under the action of a Singer group, of projective subspaces intersecting the orbit's elements in all possible dimensions. Precisely, we have the following result. 

\begin{theorem}
    \label{th:projective subspace}
    Let $\Sigma_{n-1}=\mathrm{PG}(n-1,q)$ be the $(n-1)$-dimensional projective space over $\fq$, let $G$ be a Singer cycle of $\Sigma_{n-1}$ and let $\mathcal{G}_q(n-1, k-1)$ be  the Grassmaniann of the $(k-1)$-dimensional subspaces  
of $\Sigma_{n-1}$.  Let $\Sigma_{k-1}$ be a $(k-1)$-dimensional subspace  of $\Sigma_{n-1}$ and let $\mathcal{I}_{\mathrm{Orb}_G(\Sigma_{k-1})}=\{\Sigma_{k-1} \cap \varphi (\Sigma_{k-1})\, :\, \varphi \in G \}$ be the set of all subspaces obtained intersecting two distinct elements of $\mathrm{Orb}_G(\Sigma_{k-1})$.
Then 
$$ \mathcal{I}_{\mathrm{Orb}_G(\Sigma_{k-1})} \cap \mathcal{G}_q(n-1,d-1) \neq \emptyset$$
for every $d\in \{0,\dots, k-1\}$ if, and only if,  $\Sigma_{k-1}$, up to the action of $\mathrm{PGL}(n,q)$, is defined by one of the vector subspaces defined in Theorem \ref{thm:maintheorem}. 
\end{theorem}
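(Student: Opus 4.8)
The plan is to read Theorem~\ref{th:projective subspace} as nothing more than the projective translation of Theorem~\ref{thm:maintheorem}, so that the whole proof reduces to a dictionary together with an application of the main classification. First I would set up the correspondence between $\PG(n-1,q)$ and $\fqn$: identifying the underlying $\fq$-vector space $\fq^n$ with $\fqn$, a $(k-1)$-dimensional projective subspace $\Sigma_{k-1}$ corresponds to a $k$-dimensional $\fq$-subspace $S$ of $\fqn$, projective dimension $e$ corresponds to vector dimension $e+1$, and the empty subspace (projective dimension $-1$) corresponds to $\{0\}$. Under this identification the Grassmannian $\mathcal{G}_q(n-1,d-1)$ becomes the set of subspaces of vector dimension $d$.

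Next I would normalise the Singer cycle. Since all Singer subgroups of $\mathrm{GL}_n(q)$ are conjugate (as recalled in Section~\ref{sec:Equiv}), and hence all Singer cycles of $\PG(n-1,q)$ are conjugate in $\mathrm{PGL}(n,q)$, after applying a suitable element of $\mathrm{PGL}(n,q)$ I may assume that $G$ is induced by the multiplicative group $\fqn^*$ acting on $\fqn$ via $x\mapsto \alpha x$. This is precisely the step that absorbs the clause ``up to the action of $\mathrm{PGL}(n,q)$'' in the statement. With $G$ so normalised, $\mathrm{Orb}_G(\Sigma_{k-1})$ is the set of projective subspaces corresponding to the members of the one-orbit cyclic subspace code $\cC=\mathrm{Orb}(S)$, and for $\varphi\in G$ corresponding to $\alpha\in\fqn^*$ the intersection $\Sigma_{k-1}\cap\varphi(\Sigma_{k-1})$ corresponds to $S\cap\alpha S$; in particular $\dim(\Sigma_{k-1}\cap\varphi(\Sigma_{k-1}))=d-1$ exactly when $\dim_{\fq}(S\cap\alpha S)=d$.

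With the dictionary in place I would verify the equivalence of the two conditions. Requiring $\varphi(\Sigma_{k-1})$ and $\Sigma_{k-1}$ to be distinct means $\alpha S\ne S$, so $\dim_{\fq}(S\cap\alpha S)$ ranges over $\{0,1,\ldots,k-1\}$, matching the range $d\in\{0,\ldots,k-1\}$; the excluded value $d=k$ is exactly the fixed subspace $\alpha S=S$, and the endpoint $d=0$ is the empty intersection. Since $d(S,\alpha S)=2k-2\dim_{\fq}(S\cap\alpha S)$, setting $i=k-d$ shows that $\mathcal{I}_{\mathrm{Orb}_G(\Sigma_{k-1})}\cap\mathcal{G}_q(n-1,d-1)\ne\emptyset$ for every $d\in\{0,\ldots,k-1\}$ is equivalent to $\omega_{2i}(\cC)>0$ for every $i\in\{1,\ldots,k\}$, that is, to $\cC$ being a full weight spectrum code. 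Invoking Theorem~\ref{thm:maintheorem} then identifies $S$, and hence, undoing the conjugation, $\Sigma_{k-1}$ up to $\mathrm{PGL}(n,q)$, with one of the two listed families, establishing both implications.

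The hard part will be conceptual rather than computational: the intersection condition is intrinsic to the orbit \emph{relative to the fixed} $G$ and is not invariant under arbitrary elements of $\mathrm{PGL}(n,q)$, so the clause ``up to $\mathrm{PGL}(n,q)$'' must be read through the conjugacy of Singer cycles, which is exactly what licenses normalising $G$ before passing to the field model. Once $G$ is fixed, everything else is routine bookkeeping on the $\pm 1$ shift between projective and vector dimension and on the two degenerate endpoints, and no new estimates are needed, since all the substantive content is already carried by Theorem~\ref{thm:maintheorem}.
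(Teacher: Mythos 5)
Your proposal is correct and coincides with the paper's own (implicit) justification: the paper states Theorem \ref{th:projective subspace} as a direct geometric reformulation of Theorem \ref{thm:maintheorem}, obtained exactly via your dictionary between projective and vector dimensions, the conjugacy of Singer cycles in $\mathrm{PGL}(n,q)$ to normalise $G$ to the action of $\fqn^*$ on $\fqn$, and the observation that achieving all intersection dimensions $d\in\{0,\dots,k-1\}$ is equivalent to $\omega_{2i}(\cC)>0$ for all $i\in\{1,\dots,k\}$. Your remark that the clause ``up to $\mathrm{PGL}(n,q)$'' must be read through this normalisation (since the intersection condition is not invariant under arbitrary collineations) is precisely the right reading of the statement.
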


Theorem \ref{thm:maintheorem} may be also used in other contexts such as in the theory of rank-metric codes.
Let $\C$ be a rank-metric code in $\F_{q^n}^{2k}$ with $\dim_{\F_{q^n}}(\C)=2$ and having a generator matrix $G$ whose column span (over $\fq$) is of the form $S\times S$, for some $\fq$-subspace $S$ of $\fqn$ with dimension $k$. Then, using \cite[Theorem 2]{Randra}, we have that
\[ w(xG)=n-\dim_{\fq}((S\times S)\cap x^\perp). \]
Therefore, $\C$ has no zeroes in its weight distribution (\emph{FWS} rank-metric code) if and only if $\mathrm{Orb}(S)$ is an FWS cyclic subspace code.
Therefore, Theorem \ref{thm:maintheorem} provides a classification result for FWS rank-metric codes with the above assumptions.
There are still some problems that naturally arise in this paper:

\begin{itemize}
    \item It would be nice  to determine the weight distribution of the codes in (2) of Theorem \ref{thm:maintheorem}.
    We already know that the weight distributions of the codes in (1) and those in (2) of Theorem \ref{thm:maintheorem} are different.
    Computational results show that the weight distribution does not depend on $\lambda$.
    \item Is it possible to determine the equivalence classes of the codes in Theorem \ref{thm:maintheorem} under the action of linear isometries?
    \item A natural generalization of FWS codes can be the following: let $r$ be a natural number. We say that a code is $r$-FWS code if the last $r$ entries of the weight distributions are zero and all the others are nonzero. In particular, $0$-FWS codes correspond to FWS codes. Is it possible to characterize $r$-FWS codes similarly to what has been done for FWS codes?
\end{itemize}

\section*{Acknowledgements}

The research was partially supported by the project COMBINE of ``VALERE: VAnviteLli pEr la RicErca" of the University of Campania ``Luigi Vanvitelli'' and  by the INdAM - GNSAGA Project \emph{Tensors over finite fields and their applications}, number E53C23001670001.

\section*{Declaration of competing interest}

The authors declare that they have no known competing financial interests or personal relationships that could have appeared to influence the work reported in this paper.

\section*{Data availability}
No data was used for the research described in the article.

Chiara Castello, Olga Polverino and Ferdinando Zullo\\
Dipartimento di Matematica e Fisica,\\
Universit\`a degli Studi della Campania ``Luigi Vanvitelli'',\\
I--\,81100 Caserta, Italy\\
{{\em \{chiara.castello,olga.polverino,ferdinando.zullo\}@unicampania.it}}
\end{document}